\documentclass[10pt,journal,compsoc]{IEEEtran}

\ifCLASSOPTIONcompsoc
  \usepackage[nocompress]{cite}
\else
  \usepackage{cite}
\fi

\ifCLASSINFOpdf
\else
\fi

\usepackage{xcolor}
\usepackage{soul}
\usepackage{amsthm}

\def\BibTeX{{\rm B\kern-.05em{\sc i\kern-.025em b}\kern-.08emT\kern-.1667em\lower.7ex\hbox{E}\kern-.125emX}}

\usepackage{nicefrac}
\usepackage{siunitx}
\usepackage{array,framed}
\usepackage{booktabs}
\usepackage{
  color,
  float,
  epsfig,
  wrapfig,
  graphics,
  graphicx,
  subcaption
}
\usepackage{epstopdf}
\usepackage{textcomp,amssymb}
\usepackage{setspace}
\usepackage{latexsym,fancyhdr,url}
\usepackage{enumerate}
\usepackage{mathrsfs}
\usepackage{array}
\usepackage{lipsum}

\usepackage{algorithmicx}
\usepackage{algorithm}

\usepackage[noend]{algpseudocode}
\usepackage{graphics}
\usepackage{xparse}
\usepackage{xspace}
\usepackage{multirow}
\usepackage{csvsimple}
\usepackage{makecell}

\usepackage{changepage}

\usepackage{
  tikz,
  pgfplots,
  pgfplotstable
}
\usepackage{hyperref}

\usetikzlibrary{
  shapes.geometric,
  arrows,
  external,
  pgfplots.groupplots,
  matrix
}

\pgfplotsset{compat=1.9}
\usepackage{mathtools}

\DeclareMathAlphabet{\mathcal}{OMS}{cmsy}{m}{n}

\newtheorem{example}{\textbf{Example}}

\newtheorem{theorem}{\textbf{Theorem}}

\newtheorem{definition}{\textbf{Definition}}


\DeclareGraphicsExtensions{%
    .png,.PNG,%
    .pdf,.PDF,%
    .jpg,.mps,.jpeg,.jbig2,.jb2,.JPG,.JPEG,.JBIG2,.JB2}

\newcommand{\revision}{\color{black}}

\setlength{\belowcaptionskip}{-10pt}
\setlength{\footskip}{30pt}
\setlength{\abovecaptionskip}{5pt plus 3pt minus 2pt}
\newcommand{\pc}{$p$-cohesion\xspace}
\newcommand{\pcs}{$p$-cohesions\xspace}
\newcommand{\pcalg}{\texttt{$p$-Cohesion}\xspace}
\newcommand{\pcalgOld}{\texttt{$p$-Cohesion$^*$}\xspace}
\newcommand{\elvalg}{\texttt{ELV}\xspace}
\newcommand{\kc}{$k$-clique\xspace}
\newcommand{\kcs}{$k$-cliques\xspace}

\newcommand{\ie}{\textit{i.e.}\xspace}
\newcommand{\etal}{\textit{et al.}\xspace}
\newcommand{\resp}{\textit{resp.}\xspace}
\newcommand{\ddp}{{DDP}\xspace}
\newcommand{\elv}{{ELV}\xspace}
\newcommand{\mre}{\texttt{MRE}\xspace}


\setlength{\abovedisplayskip}{1.5pt}
\setlength{\belowdisplayskip}{1.5pt}

\usepackage{enumitem}
\setlist{itemsep=0pt,parsep=0pt}

\sloppy

\begin{document}
\title{Decentralized Privacy Preservation for Critical Connections in Graphs}

\author{Conggai~Li,
        Wei~Ni, \IEEEmembership{Fellow,~IEEE,}
        Ming~Ding, \IEEEmembership{Senior Member, IEEE},
        Youyang~Qu, \IEEEmembership{Member, IEEE}, \\
        Jianjun~Chen,
        David~Smith, \IEEEmembership{Member, IEEE,}
        Wenjie~Zhang, \IEEEmembership{Senior Member, IEEE},
        and 
        Thierry~Rakotoarivelo, \IEEEmembership{Senior Member, IEEE}
\IEEEcompsocitemizethanks{
\IEEEcompsocthanksitem C. Li, W. Ni, M. Ding, Y. Qu, D. Smith, T. Rakotoarivelo are with the Data 61, Commonwealth Scientific and Industrial Research Organisation (CSIRO), Sydney, NSW 2016. E-mail: \{conggai.li, wei.ni, ming.ding, youyang.qu, david.smith, thierry.rakotoarivelo\}@data61.csiro.au.\protect\\
\IEEEcompsocthanksitem J. Chen is with the Australian Artificial Intelligence Institute, University of Technology Sydney, Sydney, NSW 2007. Email: jianjun.chen@student.uts.edu.au.\protect\\
\IEEEcompsocthanksitem W. Zhang is with the School of Computer Science and Engineering, University of New South Wales, Sydney, NSW 2052. E-mail: zhangw@cse.unsw.edu.au.
}
\thanks{Manuscript received; revised.}}

\IEEEtitleabstractindextext{%
\begin{abstract}
Many real-world interconnections among entities can be characterized as graphs.
Collecting local graph information with balanced privacy and data utility has garnered notable interest recently.
This paper delves into the problem of identifying and protecting critical information of entity connections for individual participants in a graph based on cohesive subgraph searches.
This problem has not been addressed in the literature.
To address the problem, 
we propose to extract the critical connections of a queried vertex using a fortress-like cohesive subgraph model known as \pc.
A user's connections within a fortress are obfuscated when being released, to protect critical information about the user.
Novel merit and penalty score functions are designed to measure each participant's critical connections in the minimal \pc,
facilitating effective identification of the connections.
We further propose to preserve the privacy of a vertex enquired by only protecting its critical connections when responding to queries raised by data collectors. 
We prove that, under the decentralized differential privacy (DDP) mechanism, one's response satisfies $(\varepsilon, \delta)$-DDP when its critical connections are protected while the rest remains unperturbed.
The effectiveness of our proposed method is demonstrated through extensive experiments on real-life graph datasets.
\end{abstract}

\begin{IEEEkeywords}
Differential privacy, query release, critical connections, $p$-cohesion.
\end{IEEEkeywords}}

\maketitle

\IEEEdisplaynontitleabstractindextext
\IEEEpeerreviewmaketitle

\IEEEraisesectionheading{\section{Introduction}\label{sec:intr}}

\IEEEPARstart{G}{raphs} have been widely used to model relationships among entities, 
\textit{e.g.}, social networks and protein interactions.
Connections among entities in graphs might reveal individual participants' private information~\cite{DBLP:journals/pvldb/CormodeSBK09,DBLP:conf/www/BrandtL14}.
A serious privacy concern arises when graphs are probed and queried; \textit{e.g.}, queries, such as subgraph counting~\cite{DBLP:conf/ipps/ChakaravarthyKM16}, may be raised. Replies to these queries can adversely affect the privacy of individuals on the graphs,
{\revision
especially in the face of powerful graph analytic tools, \textit{e.g.}, graph neural networks and graph convolutional networks, that can be misused to recover the structures of graphs and excavate private information based on queries, \textit{e.g.}, subgraph counting queries.}
Several mechanisms have been proposed to enhance the privacy of entities involved in a graph and their associated vertices and edges~\cite{pgd2021}.
For example, specific noises can be added to perturb the replies for such graph analytics~\cite{DBLP:conf/ccs/SunXKYQWY19}, with limited success because of several unresolved issues.

One of the key challenges in privacy-enhanced graph analytics is correctly identifying critical connections of a vertex that are vulnerable to privacy leakage, and effectively obfuscating such sensitive information.
In many cases, 
a cohesive subgraph regarding a vertex captures noteworthy transactional interactions among the vertex and its peers that are particularly important to it.
A variety of cohesive subgraph models has been developed in the literature to find core connections from a whole graph,
such as $k$-core~\cite{seidman1983network,DBLP:conf/icde/YangWQZCL19},
$k$-truss~\cite{DBLP:conf/dasfaa/ZhangYZQLZ18,DBLP:conf/sigmod/HuangCQTY14}, clique~\cite{DBLP:journals/pvldb/FangCLH16,DBLP:journals/vldb/YuanQLCZ16},
and $p$-cohesion~\cite{morris2000contagion,watts2002simple,pastor2015epidemic,DBLP:journals/kais/LiZZQZL21}.
Among these, \pc provides a fortress-like cohesive subgraph, which has been shown effective in applications such as modelling the epidemic's spread into the finite user group~\cite{zanette2001critical,pastor2015epidemic,sun2021transmission}.
Given a graph $G$ and a critical number $p \in (0,1)$, the \pc refers to a connected subgraph, in which every vertex has at least a fraction $p$ of its neighbors within the subgraph~\cite{morris2000contagion}.
These connections, identified by the critical number $p$, are called critical connections and can disclose sensitive information about a user, such as infection status~\cite{pastor2015epidemic}, and therefore, should not be released without proper obfuscation.
Users outside the fortress contain less sensitive information and can be disclosed without compromising the users within the fortress. 
The minimal \pc of a vertex, an elementary unit of a \pc, is not only cohesive but dense as well, 
containing the most critical information/relationship regarding the vertex in the graph.


However, there are no existing solutions specifically designed to protect the privacy of critical connections.
In~\cite{DBLP:conf/ccs/SunXKYQWY19},
{\underline{E}xtended \underline{L}ocal \underline{V}iew} (\elv) was proposed as an elementary unit for graph data protection, 
which only focuses on the proximity of a vertex enquired and does not capture the critical connections of the vertex. 
This technique cannot be directly applied to protect the privacy of the critical connection. Rigorous qualification and validation are required. 

In this paper, we propose to protect the privacy of a vertex by giving priority to protecting its critical connections. 
The critical connections of a vertex collect the edges forming a dense subgraph centered around the vertex, and can capture essential interactions between the vertex and others, potentially containing sensitive information about the vertex.
We propose to identify the critical connections of a vertex as a dense minimal \pc that includes the vertex enquired.
To achieve this, 
we propose new merit and penalty score functions to identify the minimal \pc, \ie, critical connections, for each vertex. 
The functions find \pcs with higher densities and stronger cohesiveness.
To protect the privacy of the identified critical connections, 
we formulate a differentially private query release problem ($k$-clique counting) under the \underline{D}ecentralized \underline{D}ifferential \underline{P}rivacy (DDP) mechanism, 
where a two-phase framework~\cite{DBLP:conf/ccs/SunXKYQWY19} is applied. 
We qualify the use of the $(\varepsilon, \delta)$-DDP for preserving the privacy of targeted vertices upon query releases on the minimal \pcs.

\begin{figure}[t]
\begin{center}
    \includegraphics[width=0.7\columnwidth]{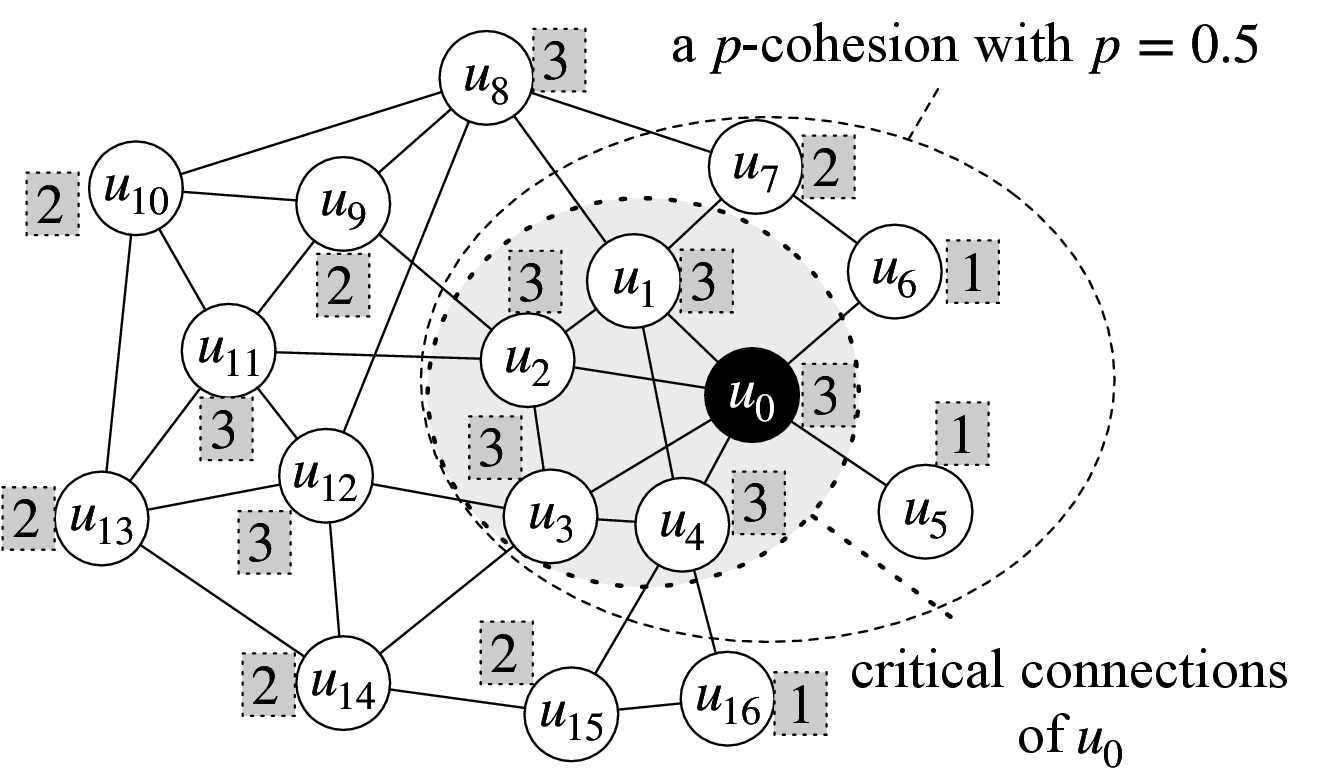}
\end{center}
\vspace{-2mm}
\caption{\small{Motivating Example}}
\label{sec:intro:motivation}
\end{figure}

\begin{example}
\label{sec:intr:examp1}
Suppose that we have a small social network $G$ with $17$ users, which is illustrated in Fig.~\ref{sec:intro:motivation}.
For each user $u_i \in G$, the number of neighbors required in a \pc is shown with a grey-filled dotted line square, when $p = 0.5$.
User $u_0$'s \pc $C_p$ is enclosed in a dashed oval, encompassing all relationships. By contrast, $u_0$'s critical relationships (a minimal p-cohesion $MC_p$) are marked within a grey-filled dashed circle—these connections are 
heavily interconnected.
Applying the DDP mechanism to protect $u_0$'s all connections would introduce excessive noise
and deteriorate the data utility.
By contrast, 
protecting $u_0$'s connections with a focus on those critical, \ie, $MC_p$,
under the same mechanism can balance data utility and privacy.
\end{example}

\vspace{2mm}
\noindent \textbf{Contributions.}
The main contributions are as follows:

\begin{itemize}
    \item \textit{Identify critical connections}. 
    To identify the critical connections, 
    we propose new score functions to effectively extract minimal \pcs (\ie, critical connections) from a graph with different scales.
    Those score functions are proposed based on graph density, 
    which is conducive to finding minimal \pcs with higher densities and strong cohesiveness.
    
    \item \textit{Privacy preservation for critical connections in response to queries}. 
    To protect the identified critical connections of a vertex in response to a query (\ie, $k$-clique counting), 
    we adopt the DDP mechanism and prove that $(\varepsilon, \delta)$-DDP can be satisfied by only protecting the information of those identified critical connections.

    \item \textit{Experimental Verification}. 
    We conduct comprehensive experiments on real-life graph datasets.
    For $k$-clique counting, we demonstrate our proposed framework provides $(\varepsilon, \delta)$-DDP guarantee to protect the critical connections of vertices and achieves higher data utility than the ELV-based solution~\cite{DBLP:conf/ccs/SunXKYQWY19}.
\end{itemize}

The remainder of this paper is as follows. In Section~\ref{sec:pre}, we present the preliminary knowledge for this paper, followed by problem formulation and high-level solution in Section~\ref{sec:prob}. The procedure for identifying critical connections is shown in Section~\ref{sec:mpc}, which is the foundation of the differentially private query release proposed in Section~\ref{sec:ddp}. The performance of the model is evaluated in Section~\ref{sec:eval}. Related studies are reviewed in Section~\ref{sec:relwork}. A conclusion is reached in Section~\ref{sec:conclusion}.     
\section{Preliminary}
\label{sec:pre}

\begin{table}[t]
\small
  \centering
  \vspace{6mm}
  \caption{Summary of Notations}
  \vspace{1mm}
    \label{tb:notations}
    \resizebox{\columnwidth}{!}{\begin{tabular}{|c|l|}
      \hline
      \textbf{Notation}   & \textbf{Definition}                   \\ \hline \hline

      $G$           &  An unweighted and undirected graph  \\ \hline

      $V(G),E(G)$   &  The vertex set and the edge set of the graph $G$  \\ \hline

      $u$ ($v$)      &  A vertex in the graph \\ \hline

      $n$,  $m$     &  Numbers of vertices and edges in $G$  \\ \hline

      $N(v, G)$     &  The set of adjacent vertices (neighbors) of $v$ in $G$  \\ \hline

      $\mathcal{F}$ &  A subgraph model  \\ \hline

      $N_{1h}^v(G)$ &  The set of $1$-hop neighbors of $v$ in $G$, \ie, $N(v, G)$  \\ \hline

      $deg(v, G)$   &  The degree of $v$ in $G$, \ie, $deg(v, G)$ = $|N(v, G)|$  \\ \hline

      $C_p(v, G)$   &  A \pc containing $v$ in $G$  \\ \hline

      $MC_p(v, G)$  &  A minimal \pc containing $v$ in $G$  \\ \hline

      $d(G)$        &  The density of graph $G$  \\ \hline

      {\revision $\Gamma_S(v)$} &  {\revision The count of shape $S$ containing $v$ in graph $G$}  \\ \hline
      
      {\revision $\Gamma_{S_{in}}(v)$} &  {\revision The count of shape $S$ containing $v$ that entirely in $MC_p(v, G)$}  \\ \hline
      
      {\revision $\Gamma_{S_{out}}(v)$} &  {\revision \makecell[l]{The count of shape $S$ containing $v$ that are partially or entirely \\ outside $MC_p(v, G)$, \ie, $\Gamma_{S_{out}}(v) = \Gamma_S(v) - \Gamma_{S_{in}}(v)$}}  \\ \hline
\end{tabular}}
\end{table}

Given an undirected and unweighted graph $G = (V, E)$, where $V$ (\resp $E$) is the set of vertices (\resp edges), we use $n = |V|$ and $m = |E|$ to denote the number of vertices and edges, respectively.
Suppose $N(v, G)$ is the neighbor of $v$ in $G$, we use $deg(v, G)$ to denote the degree of $v$ in $G$, where $|N(v, G)| = deg(v, G)$.
Let $S$ denote a subgraph of $G$.
We use $V(S)$ to represent the vertices of $S$.
When the context is clear, we may omit the target graph in notations, \textit{e.g.}, using $N(v)$ (\textit{resp.}, $S$) instead of $N(v, G)$ (\textit{resp.}, $V(S)$).
Table~\ref{tb:notations} summarizes the notations.

\vspace{2mm}
\noindent \textbf{$p$-Cohesion}.
In~\cite{DBLP:journals/kais/LiZZQZL21}, it is shown that \pc is a fortress-like cohesive subgraph, in which, every vertex has at least a fraction $p$ of its neighbors in the subgraph, \ie, at most a fraction $(1-p)$ of its neighbors outside.
In other words, a \pc can ensure inner cohesiveness.

\begin{definition}
\label{def:pc}
\textbf{($p$-Cohesion)}.
Given a graph $G$, a real value $p \in (0, 1)$, and a connected subgraph $S$, if for any vertex $v \in S$ with $deg(v, S) \geq \lceil p \times deg(v, G) \rceil$, we say $S$ is a \pc, denoted by $C_p(G)$.
\end{definition}

Let $C_p(v, G)$ be the \pc containing $v$.
We use $C_p(v)$ instead of $C_p(v, G)$ when the context is clear.

\vspace{2mm}
\noindent \textbf{Minimal \pc}.
A subgraph $S$ is a minimal \pc if no proper subgraph $S^\prime \subset S$ that qualifies as a \pc.
For example, irrespective of the chosen $p$ value, for every vertex in graph $G$, the whole graph $G$ is a \pc but not minimal.
Since minimal \pc is the elementary unit of a \pc, we concentrate on determining the minimal \pc that includes a specific vertex $v$.
For a \pc $C_p(v)$, by deleting every redundant vertex in $C_p(v)$ except $v$, we can obtain the minimal \pc that includes $v$, denoted by $MC_p(v, G)$.
In the following, when we say ``the \pc containing $v$'', we mean the minimal one.

\begin{definition}
\label{def:cc}
\textbf{(Critical Connection)}.
Given a graph $G$ and a vertex $v$, we say the connections identified by the minimal \pc containing $v$ are $v$'s critical connections.
\end{definition}

\vspace{2mm}
\noindent \textbf{Differential Privacy}.
\underline{D}ifferential \underline{P}rivacy (DP) was first introduced by Dwork \etal~\cite{DBLP:conf/tcc/DworkMNS06},
and has been widely used in the privacy-enhanced release and analysis of data.

\begin{definition}
\label{def:dp}
\textbf{(DP)}. A randomization mechanism $\mathcal{M}$ is $(\varepsilon, \delta)$-differentially private if, for any two neighboring datasets $D$ and $D^\prime$ that differs by one record and for all $\mathcal{S} \subseteq$ Range$(\mathcal{M})$, 

\begin{small}
\begin{equation}
\label{sec:pre:eq:dp}
  \Pr[\mathcal{M}(D) \in \mathcal{S}] \leq \Pr[\mathcal{M}(D^\prime) \in \mathcal{S}] \cdot e^\varepsilon + \delta,
\end{equation}
\end{small}
\end{definition}

\noindent
where $\varepsilon$ is the privacy budget that controls the strength of privacy
protection, and $\delta \in [0, 1]$ denotes a failure probability.
When $\delta = 0$, $\mathcal{M}$ is $\varepsilon$-differentially private.

\vspace{2mm}
\noindent \textbf{Neighboring Graph}.
Given a graph $G = (V, E)$ with vertex (\resp edge) set is $V$ (\resp $E$), its neighboring graph $G^\prime = (V^\prime, E^\prime)$ can be generated by either adding/removing an isolated vertex in $V$ or by adding/removing an edge in $E$~\cite{DBLP:conf/icdm/HayLMJ09, DBLP:conf/ccs/SunXKYQWY19}.
In this paper, we focus on the edge privacy model, \ie, $V = V^\prime$ and $||E| - |E \cap E^\prime|| = 1$.

\vspace{2mm}
\noindent \textbf{Neighboring Subgraph}.
Given a graph $G = (V, E)$, its neighboring graph $G^\prime = (V, E^\prime)$ and a subgraph model $\mathcal{F}$ (\textit{e.g.}, \pc).
For a vertex $v \in V$, under model $\mathcal{F}$, its subgraph on $G$ is denoted by $S = \mathcal{F}(v, G)$.
The neighboring subgraph $S^\prime$ of $S$ on $G^\prime$ is $S^\prime = \mathcal{F}(v, G^\prime)$, which may be different in both edges and vertices.

\vspace{2mm}
\noindent \textbf{Decentralized Differential Privacy (DDP)}.
In a graph, each vertex can release its own information to non-trusted parties using DDP.

\begin{definition}
\label{def:ddp}
\textbf{(DDP)}. Given a vertex set $V = \{v_1, v_2, \dots, v_n\}$ with $n$ nodes, a subgraph model $\mathcal{F}$, a set of randomization mechanisms $\{\mathcal{M}_i, 1 \leq i \leq n\}$ satisfy $(\varepsilon, \delta)$-DDP, if

\begin{small}
\begin{equation}
    \begin{aligned}
    \label{sec:pre:eq:ddp}
      Pr[\mathcal{M}_1(\mathcal{F}(v_1, G)) \in \mathcal{S}_1, \dots, \mathcal{M}_n(\mathcal{F}(v_n, G)) \in \mathcal{S}_n] \leq \\
      Pr[\mathcal{M}_1(\mathcal{F}(v_1, G^\prime)) \in \mathcal{S}_1, \dots, \mathcal{M}_n(\mathcal{F}(v_n, G^\prime)) \in \mathcal{S}_n] \cdot e^\varepsilon\!+\!\delta,
    \end{aligned}
\end{equation}
\end{small}

\noindent
for any two neighboring graphs $G = (V, E)$ and $G^\prime = (V, E^\prime)$ and for all subsets $\{\mathcal{S}_i \subseteq$ Range$(\mathcal{M}), 1 \leq i \leq n\}$.
\end{definition}

\begin{definition}
\label{def:ls}
\textbf{(Local Sensitivity under \ddp~\cite{DBLP:conf/ccs/SunXKYQWY19})}.
Given a graph $G = (V,E)$ with $n$ nodes, its arbitrary neighboring graph is $G^\prime = (V, E^\prime)$, where $V = \{v_1, v_2, \dots, v_n\}$, a subgraph model $\mathcal{F}$, and a function $f$, the local sensitivity of $f$ is defined as:

\begin{small}
\begin{equation}
\begin{aligned}
\label{sec:ls:eq}
  LS(f) = \max_{G, G^\prime} \sum_{i = 1}^{n}|f(\mathcal{F}(v_i, G)) - f(\mathcal{F}(v_i, G^\prime))|
\end{aligned}
\end{equation}
\end{small}
\end{definition}
\noindent
where $\mathcal{F}(v_i, G)$ and $\mathcal{F}(v_i, G^\prime)(1 \leq i \leq n)$ are the neighboring subgraphs with respect to $G$ and $G^\prime$, respectively.

\section{Problem and Approach Overview}
\label{sec:prob}

{\color {black}
We study the differentially private critical connection protection problem in a graph under the client-server model.
Specifically, under the DDP mechanism, we aim to protect a vertex's critical connections when responding to queries (\ie, subgraph counting) raised by data analysts.
The subgraph counting is to count the occurrences of a user-specified shape (or subgraph) $S$ (\textit{e.g.}, triangle) containing a target vertex $v$ in graph $G$.
}

\vspace{2mm}
\noindent \textbf{Problem Statement.}
Given an undirected unweighted graph $G\!=\!(V,\!E)$, a critical number $p \in (0,1)$, a privacy budget $\varepsilon$, an invalidation probability $\delta$, a set of randomization mechanisms $\{\mathcal{M}_i, 1\!\leq\!i\!\leq\!n\}$, and a user-specified shape $S$;
for each vertex $v_i\!\in\!V$, subgraphs counting under \ddp for critical connections protection is to 
\begin{itemize}
\item identify $v_i$'s critical connections from the minimal \pc containing $v_i$. $M\!C_p(v_i,\!G)$ denotes the induced subgraph of $v_i$'s critical connections from $G$;
\item count the occurrences of $S$ shapes containing $v_i$ in $MC_p(v_i,\!G)$. $\Gamma_{S_{in}}(v_i)$ denotes the count;
\item perturb the count $\Gamma_{S_{in}}(v_i)$ using $\mathcal{M}_i$. The perturbed count is denoted by $\Gamma_{S_{in}}^*(v_i)$;
\item count the occurrences of $S$ shapes containing $v_i$ in the complementary part of $MC_p(v_i,\!G)$ in $G$. The count is denoted by $\Gamma_{S_{out}}(v_i)$; and
\item report the total count $\Gamma_S(v_i) = \Gamma_{S_{in}}^*(v_i) + \Gamma_{S_{out}}(v_i)$ to the data analyst, which ensures ($\varepsilon, \delta$)-\ddp.
\end{itemize}

\vspace{2mm}
\subsection{Overview of Our Approach}
\label{sec:prob:dis}

We propose the following techniques to solve the problem.
\begin{itemize}
\item \textbf{Identifying critical connections using minimal \pcs}. 
As mentioned earlier, \pc is an important cohesive subgraph model~\cite{DBLP:journals/kais/LiZZQZL21}, which can be used to identify one's critical connections. The \pc forms a fortress; \ie, the vertices inside a \pc are more connected than those outside.
Considering the elegant fortress property of minimal \pc, for a vertex $v_i$, we use $v_i$'s minimal \pc to identify its critical connections that need to be protected.
We prove that our solution can protect the privacy of the target and its critical connections, and as a result,  can provide better data utilities than the existing ELV-based solution developed in~\cite{DBLP:conf/ccs/SunXKYQWY19}.
More details are shown in Section~\ref{sec:mpc}.
\item \textbf{Protecting critical connections identified by minimal \pcs}. 
We propose a tailored $(\varepsilon, \delta)$-DDP mechanism, which
reduces the amount of injected noises and improves the data utility by only perturbing the query results generated from one's minimal \pc and keeping the following response as it is.
For the subgraph counting query problem, the function $f$ counts the occurrence of such shapes (or subgraphs) $S$ containing a vertex $v_i$.
{\revision
Let $MC_p(v_i)$ be a minimal \pc of vertex $v_i$, which contains its critical connections and the corresponding vertices.
Let $\Gamma_{S_{in}}(v_i)$ denote the number of occurrences of $S$ shape containing $v_i$ within $MC_p(v_i)$.
This implies that both the vertices and the connections that form $S$ shape are entirely within $MC_p(v_i)$.
Let $\Gamma_{S_{out}}(v_i)$ denote the number of $S$ shape that are partially or entirely outside $MC_p(v_i)$,
and let $\Gamma_S(v_i)$ denote the total count of $S$ shape within the entire graph $G$, we have $\Gamma_S(v_i) = \Gamma_{S_{in}}(v_i) + \Gamma_{S_{out}}(v_i)$.
}
We only perturb $\Gamma_{S_{in}}(v_i)$ in response to the query regarding $v_i$, \ie, reporting $\Gamma_{S}^*(v_i) = \Gamma_{S_{in}}^*(v_i) + \Gamma_{S_{out}}(v_i)$ to the data analyst, where $\Gamma_{S_{in}}^*(v_i)$ gives the perturbed counts of the user-given shape $S$ in the minimal \pc, \ie, $\Gamma_{S_{in}}^*(v_i) = \Gamma_{S_{in}}(v_i) + Lap(\lambda)$.
In Section~\ref{sec:ddp}, we prove that perturbed response $\Gamma_{S}^*(v_i)$ from $v_i$ still satisfies $(\varepsilon, \delta)$-\ddp for given $\varepsilon$ and $\delta$.
\end{itemize}

\section{Critical Connection Identification}
\label{sec:mpc}

We identify critical connections with minimal \pc.
According to Definition~\ref{def:pc}, any connected component is a \pc, which could be large.
We devise an expand-shrink framework: Given a vertex, we repeatedly expand it into a $p$-cohesion and then shrink it to its minimal form.

\subsection{\textbf{Expand Procedure}}
\label{sec:mpc:lsa:expand}

Starting from a queried vertex $q$, this procedure finds a $p$-cohesion subgraph for $q$ in a bottom-up manner.
Suppose $V_p$ is the vertex set of a partial \pc that includes $q$, we repeatedly expand the vertices in $V_p$ to form a \pc.

\begin{algorithm}[h]
\small
\caption{\textit{~~Expand} ($G$, $q$, $p$)}
\label{alg:expand}
\begin{flushleft}
    \mbox{} \quad \textbf{Input} \quad : $G:$ a graph, $q:$ a queried vertex, \\
    \mbox{} \qquad \ \ \qquad \quad $p:$ a real number in ($0,1$)\\
    \mbox{} \quad \textbf{Output} : a $p$-cohesive subgraph containing $q$
\end{flushleft}
\begin{algorithmic}[1]
    \State $V_p := \{q\}$; $\mathcal{Q} := \emptyset$; \Comment{$\mathcal{Q}$: maximal priority queue} \label{alg:expand_1}
    \State put all $v \in V_p$ to $\mathcal{Q}$ with key on $deg(v, G(V_p))$ \label{alg:expand_2}
    \State $s(u) \leftarrow$ calculate a score for each vertex $u$ within $G$ \label{alg:expand_3}
	\While{ $\mathcal{Q} \neq \emptyset$ } \label{alg:expandwhile}
	    \State $u \leftarrow \mathcal{Q}.pop()$; $\mathcal{N} := \emptyset$; \Comment{$\mathcal{N}$: maximal priority queue} \label{alg:expandwhile_1}
	    \State put all $w \in N(u, G) - V_p$ to $\mathcal{N}$ with key on $s(w)$ \label{alg:expandwhile_4}
	    \State $b := \max(\lceil deg(u, G) \times p \rceil - deg(u, G(V_p)), 0)$ \label{alg:expandwhile_4.5}
	    \State $T \leftarrow$ top $b$ elements in $\mathcal{N}$; $V_P := V_P \cup T$ \label{alg:expandwhile_5}
	    \For {$w \in T$} \label{alg:expandwhile_6_for}
                \State \textbf{if}~$deg(w, G(V_p)) < \lceil deg(w, G) \times p \rceil$~\textbf{then}~$\mathcal{Q}.push(w)$ \label{alg:expandwhile_6_for_if_1}
	    \EndFor \label{alg:expandwhile_6_endfor}
	    \State update score $s(v)$ for each $v \in V(G)$ \label{alg:expandwhile_7}
	\EndWhile \label{alg:expandwhile_end}
    \State \textbf{Return} $G(V_p)$ \label{alg:expandreturn}
\end{algorithmic}
\end{algorithm}

Algorithm~\ref{alg:expand} outlines the \textbf{Expand} procedure for finding a \pc containing query vertex $q$.
The set $V_p$ records the vertices included to form a \pc, starting initially with $V_p = \{q\}$ (Line~\ref{alg:expand_1}).
For each vertex $u \in V_p$, it's necessary to add $|\lceil deg(u, G) \times p \rceil - deg(u, G(V_p))|$ additional vertices to $V_p$ to ensure $u$'s inclusion in a \pc. Here, $deg(u, G)$ and $deg(u, G(V_p))$ represent the number of $u$'s neighbors in $V(G)$ and $V_p$, respectively.
A priority queue $\mathcal{Q}$ represents vertices within $V_p$ that do not yet meet $p$ constraint (Line~\ref{alg:expand_1}), ordered by $deg(v, G(V_p))$ in a descending order (Line~\ref{alg:expand_2})
If two vertices have the same degree, the one with the smaller identifier is prioritized in $\mathcal{Q}$.
For each vertex $u$, we calculate a score $s(u)$ (Line~\ref{alg:expand_3}).

In each iteration (Lines~\ref{alg:expandwhile}-\ref{alg:expandwhile_end}), the vertex $u \in \mathcal{Q}$ with the maximum $deg(u, G(V_p))$ is removed from the queue at Line~\ref{alg:expandwhile_1}.
At Lines~\ref{alg:expandwhile_1}-\ref{alg:expandwhile_4}, we generate a maximal priority queue $\mathcal{N}$ to store the candidates that can help $u$ to reach the $p$ constraint, \ie, $u$'s neighbors outside $V_p$ (\ie, $N(u, G) - V_p$).
For each $w \in \mathcal{N}$, its key is the score $s(w)$ in descending order (Line~\ref{alg:expandwhile_4}). For two vertices with the same score $s$, the vertex with a smaller identifier will be placed at the top of the queue.
For the chosen $u$, we add its additional $b$ neighbors to $V_p$, where $b$ is the number of $u$'s neighbors that $u$ needs to meet the $p$ constraint (Line~\ref{alg:expandwhile_4.5}).
Let $T$ be the top $b$ vertices in $\mathcal{N}$ with the largest scores (Line~\ref{alg:expandwhile_5}).
At Line~\ref{alg:expandwhile_5}, all vertices in $T$ are added to $V_p$.
For any $w \in T$ that do not satisfy $p$ constraint, we add it to $\mathcal{Q}$ for further expansion (Lines~\ref{alg:expandwhile_6_for}-\ref{alg:expandwhile_6_endfor}).
All vertices' scores are updated after one-round processing at Line~\ref{alg:expandwhile_7}.
The algorithm will return at Line~\ref{alg:expandreturn} if $\mathcal{Q}$ is empty. In other words, $G(V_p)$ is a \pc.

\underline{\textit{Time complexity}}.
For $G(V_p)$, let $\tilde{n}$ and $\tilde{m}$ represent the number of its vertices and edges, respectively.
The number of to-expand vertices pushed into $\mathcal{Q}$ is $\tilde{n}$ (Line~\ref{alg:expandwhile}).
To retrieve value $b$ for each vertex, we need to visit its neighbors, which requires $\mathcal{O}(m + n)$.
Meanwhile, updating scores $s(\cdot)$ for the neighbors of $u$ requires $O(deg(u,G) \times log(deg(u, G)))$.
Algorithm~\ref{alg:expand}'s time complexity is $\mathcal{O}(\tilde{n}(m + n))$.

\underline{\textit{Space complexity}}.
Sets $V_p$, $\mathcal{Q}$, $\mathcal{N}$, $s(\cdot)$, and $deg(\cdot)$ each require $\mathcal{O}(n)$ space.
Similarly, $G$ and $N(\cdot)$ require $\mathcal{O}(m\!+\!n)$ space each. Algorithm~\ref{alg:expand}'s space complexity is $\mathcal{O}(m\!+\!n)$.

\vspace{1mm}
\noindent \textbf{Score Function}.
In Algorithm~\ref{alg:expand}, at Lines~\ref{alg:expandwhile_4}-\ref{alg:expandwhile_5}, we add $b$ of $u$'s neighbors with the largest scores to $V_p$.
We define two new scores: \textit{merit} and \textit{penalty}, since adding one vertex $w \notin V_p$ to $V_p$ may need to include more of $w$'s neighbors to help $w$ stay in a \pc. A vertex that can better balance the trade-off between merit and penalty should be added to $V_p$.

\vspace{1mm}
\noindent \textit{Density}.
As mentioned in Section~\ref{sec:prob:dis}, for a queried vertex $q$, we search for a subgraph containing $q$ with higher density.
We first define the concept of graph density.
Given an undirected graph $G = (V, E)$, the density is defined as:

\begin{small}
\begin{equation}
\label{sec:mpc:eq:density}
d(G) = 2|E| / (|V|(|V| - 1)),
\end{equation}
\end{small}

\noindent
where $|V|$ (\resp $|E|$) is the number of vertices (\resp edges) in $G$, and $0 < d(G) \leq 1$.
$d(G) = 1$ means the graph $G$ is fully connected, \ie, the graph is the densest.

Given the partial vertex set $V_p$ of a \pc that includes queried vertex $q$, let $G(V_p)$ and $E(G(V_p))$ be its induced subgraph and the corresponding edge set, respectively.
We have $E(G(V_p)) = \{(u,v)|(u,v) \in E, u \in V_p, v\in V_p\}$.
Based on Eq.~(\ref{sec:mpc:eq:density}), the density of $G(V_p)$ is given by:

\begin{small}
\begin{equation}
\label{sec:mpc:eq:density_1}
d(G(V_p)) = 2|E(G(V_p)| / (|V_p|(|V_p| - 1)).
\end{equation}
\end{small}

When adding a vertex $w \notin V_p$ to $V_p$, the density is:

\begin{small}
\begin{equation}
\label{sec:mpc:eq:density_2}
d(G(V_p \cup \{w\})) = \frac{2(|E(G(V_p))| + deg(w, G(V_p)))}{(|V_p|+1)|V_p|}.
\end{equation}
\end{small}

By adding a candidate $w$ with larger $deg(w, G(V_p))$, the density of the resulting subgraph can be increased.

Furthermore, we may need to include additional neighbors of $w$ into $V_p$ to maintain \pc.
Let $l$ be the number of neighbors required for $w$ to maintain a \pc. We have $l = \max(\lceil deg(w, G) \times p \rceil - deg(w, G(V_p)),0)$.
Let $N_c(w) = \{o_1, o_2, \dots, o_l\}$ ($o_i \in N(w, G-G(V_p))\}$) be $w$'s neighbors (outside $V_p$) chosen to be expanded to $V_p$, and $\widetilde{V_p} = V_p \cup \{w\} \cup N_c(w)$.
The potential density of $G(\widetilde{V_p})$ is:

\begin{small}
\begin{equation}
\label{sec:mpc:eq:density_3}
\begin{aligned}
& d(G(\widetilde{V_p})) \approx \frac{2}{(|V_p|+1+l)(|V_p|+l)} \\
&\!\times\!(|E(G(V_p))|\!+\!deg(w, G(V_p))\!+\!\sum_{i=1}^{l}(1+deg(o_i, G(V_p)))).
\end{aligned}
\end{equation}
\end{small}

Generally speaking, the potential density depends primarily on $d^\prime = \frac{deg(w, G(V_p)) + \sum_{i=1}^{l} 1+deg(o_i, G(V_p))}{l(l+1)}$.
A higher $d^\prime$ value will result in a higher subgraph density: the vertex $w$ with larger $deg(w, G(V_p))$ and more neighbors in common with $q$ in $V_p$ should be chosen.
In light of this, we propose the following merit score.

\vspace{1mm}
\noindent \underline{\textit{Merit}}.
The inclusion of a vertex $w \notin V_p$ should (i) increase the density of $G(V_p)$ and (ii) contribute to the decrease of the number of $w$'s neighbors $u$ in $V_p$ with $deg(u, G(V_p)) < \lceil deg(u, G) \times p \rceil$.
We define the merit score of $w$ as follows:

{\revision
\begin{small}
\begin{equation}
\label{sec:mpc:eq:merit}
s\!^+\!(w)\!=\!\frac{deg(w\!,\!G\!(\!V_p\!))}{deg(w\!,\!G)}\!\times\!\frac{|N_{cn}(w\!,\!q\!,\!G\!(\!V_p\!))|}{deg(w\!,\!G)}\!\times\!\frac{|N_{ctri}(w\!,\!G\!(\!V_p\!))|}{deg(w\!,\!G)},
\end{equation}
\end{small}
}

\noindent
where $N_{cn}(w,\!q,\!G(V_p))$ is the common neighbors between $w$ and $q$ in $V_p$, and $N_{cn}(w, q, G(V_p))\!=\!\{u| u\!\in\!N(q,\!G(V_p)),\!u\!\in\!N(w, G(V_p)),\!w\!\notin\!V_p\}$.
Similarly, $N_{ctri}(w, G(V_p))$ is the set of $w$'s neighbor $u$ in $V_p$ that do not satisfy the $p$ constraint, \ie, $N_{ctri}(w, G(V_p))\!=\!\{u| u\!\in\!N(w,\!G(V_p)),\!deg(u,\!G(V_p))\!<\!\lceil\!deg(u, G)\!\times\!p\!\rceil\}$.
The first two parts of $s^+(w)$ account for the influence on the $G(V_p)$ density when adding $w$ to $V_p$. The third part of $s^+(w)$ is $w$'s contribution to the increase in the degrees of some vertices within $V_p$.
To eliminate the influence of dimension between indicators, we normalize each part of $s^+(w)$ by $deg(w,G)$ in Eq.~(\ref{sec:mpc:eq:merit}).

\vspace{1mm}
\noindent \underline{\textit{Penalty}}.
To expand a vertex $u\!\in\!V_p$, adding its neighbor $w\!\notin\!V_p$ will have A penalty effect: (i) $w$ may need more neighbors to meet $p$ constraint, and (ii) $d(G(\!\widetilde{V_p}\!))$ may be no larger than $d(G(\!V_p\!))$ if adding both $w$ and its outside neighbors $N_c(w)$.
To quantify this effect, we propose the penalty score:

{\revision
\begin{small}
\begin{equation}
\label{sec:mpc:eq:penalty}
s^-(w) = \frac{\overline{deg}(w, G(V_p))}{deg(w,G)} / (\sum_{i=1}^{l} \frac{deg(o_i, G(V_p)))}{deg(w,G)}),
\end{equation}
\end{small}
}

\noindent
where $\overline{deg}(w,\!G(\!V_p\!))\!=\!\max(0,\!\lceil\!deg(w, G)\!\times\!p\!\rceil\!-\!deg(w, G(V_p)))$ is the number of $w$'s neighbors that is required to meet the $p$ constraint, and $l\!=\!\overline{deg}(w, G(V_p))$. $o_i$ is $w$'s neighbor outside $V_p$ (\ie, $o_i \in N(w,\!G\!-\!G(V_p))$).
Adding $w$ may need more vertices to be added to $V_p$, which may decrease the density. To lessen this effect, we place $\overline{deg}(w, G(V_p))$ at the numerator.
Besides, as shown in Eq.~(\ref{sec:mpc:eq:density_3}), including a vertex $o_i$ with larger $deg(o_i, G(V_p))$ could result in a larger density. To enlarge $o_i$'s effect, we place $deg(o_i, G(V_p))$ at the denominator.
We also normalize each part by $deg(w,G)$.

Considering both merit and penalty, we propose the overall score for each vertex outside $V_p$ as follows:
{\revision
\begin{small}
\begin{equation}
\label{sec:mpc:eq:fw}
s(w) = s^+(w) - s^-(w).
\end{equation}
\end{small}
The score $s(\cdot)$ can help to decide which vertex should be added to $V_p$ in Line~\ref{alg:expandwhile_4} of Algorithm~\ref{alg:expand}
}

\begin{example}
\label{sec:mpc:lsa:expand:example1}
For the graph $G$ in Fig.~\ref{sec:intro:motivation}, suppose $q = u_0$, in Algorithm~\ref{alg:expand}, $u_0$ will be popped in the first iteration, \ie, $u = u_0$ (Line~\ref{alg:expandwhile_1}).
The candidates will be $\mathcal{N}\!=\!\{u_1,\!u_2,\!u_3,\!u_4,\!u_5,\!u_6\}$ (Line~\ref{alg:expandwhile_4}), and $b = 3$.
We may first choose $u_1$, $u_5$, and $u_6$ to $V_p$ in Lines~\ref{alg:expandwhile_4.5}-\ref{alg:expandwhile_5}, \ie, $V_p\!=\!\{u_0,\!u_1,\!u_5,\!u_6\}$.
$u_1$ will be pushed into $\mathcal{Q}$ in Lines~\ref{alg:expandwhile_6_for}-\ref{alg:expandwhile_6_endfor} since it does not satisfy $p$ constraint.
In the second iteration, $u_1$ will be popped for further process, \ie, $u\!=\!u_1$. The candidates for $u_1$ are $\mathcal{N}\!=\!\{u_2,\!u_4,\!u_7,\!u_8\}$,
from which, $u_2$ and $u_7$ will be expanded, \ie, $V_p\!=\!\{u_0,\!u_1,\!u_5,\!u_6,\!u_2,\!u_7\}$.
Since $u_1$'s $b\!=\!\max(3-1,\!0)\!=\!2$ and, $u_2$, $u_7$ are with the top two largest score in $\mathcal{N}$, \ie, $s(u_2)\!=\!s^+(u_2)\!-\!s^-(u_2)\!=\!\frac{2 \time 1 \time 1}{5}\!-\!\frac{\frac{3-2}{5}}{\dots}\!=\!-\frac{3}{5}$,
$s(u_7)\!=\!\frac{4}{3}$, $s(u_4)\!=\!-\frac{3}{5}$, $s(u_8)\!=\!-\frac{4}{5}$.
$u_2$ will be pushed into $\mathcal{Q}$ for further expernsion in Lines~\ref{alg:expandwhile_6_for}-\ref{alg:expandwhile_6_endfor}.
In the next two iterations, $u_3$ and $u_4$ will be expanded respectively, \ie, $V_p\!=\!\{u_0,\!u_1,\!u_5,\!u_6,\!u_2,\!u_7,\!u_3,\!u_4\}$.
Then $\mathcal{Q}$ will be empty. Algorithm~\ref{alg:expand} returns the subgraph induced by $V_p\!=\!\{u_0,\!u_1,\!u_5,\!u_6,\!u_2,\!u_7,\!u_3,\!u_4\}$.
\end{example}

\subsection{\textbf{Shrink Procedure}}
\label{sec:mpc:lsa:shrink}

The expanded \pc containing $q$ returned by Algorithm~\ref{alg:expand} may include redundant vertices. To further minimize the vertices number in \pc and identify $q$'s close connections, redundancies need to be removed.

\begin{algorithm}[ht]
\small
\caption{\textit{~~Shrink} ($C_p(q, G)$, $q$, $p$)}
\label{alg:shrink}
\begin{flushleft}
    \mbox{} \quad \textbf{Input}: $C_p(q, G):$ a \pc of graph $G$ containing $q$, \\
    \mbox{} \qquad \ \ \qquad $q:$ a queried vertex, $p:$ a real value within $(0,1)$\\
    \mbox{} \quad \textbf{Output} : a minimal $p$-cohesive subgraph containing $q$
\end{flushleft}
\begin{algorithmic}[1]
    \State $S \leftarrow C_p(G)$;$M_q := \{q\}$ \Comment{$M_q$: a must included vertex set} \label{alg:shrink_1}
    \State $Tag \leftarrow [0]*|V(S)|$; $Tag[q] := 2$; \label{alg:shrink_2}
    \While{ $0 \in Tag$ } \label{alg:shrinkWhile}
        \For{$v \in S$} \label{alg:shrinkWhile_1_for}
            \State \textbf{if}~$Tag[v] > 0$~\textbf{then}~continue \label{alg:shrinkWhile_1_for_1_If_1}
           \State $S^\prime \leftarrow S$; $Tag^\prime \leftarrow Tag$ \label{alg:shrinkWhile_1_for_2}
            \State $S \leftarrow S - \{v \cup E(v)\}$; $Tag[v] := 1$  \label{alg:shrinkWhile_1_for_3}
            \While{$\exists u \in S$ with $deg(u, S) < \lceil deg(u, G) \times p \rceil$} \label{alg:shrinkWhile_1_for_4_While}
                \State $S \leftarrow S - \{u \cup E(u, S)\}$; $Tag[u] := 1$ \label{alg:shrinkWhile_1_for_4_While_1}
            \EndWhile \label{alg:shrinkWhile_1_for_4_EndWhile}
            \If{$\exists u \in M_q~\&~u \notin S$} \label{alg:shrinkWhile_1_for_5_If}
                \State $S\!\leftarrow\!S^\prime$; $Tag\!\leftarrow\!Tag^\prime$;$M_q\!\leftarrow\!M_q \cup \{v\}$; $Tag[v] := 2$ \label{alg:shrinkWhile_1_for_5_If_1}
            \EndIf \label{alg:shrinkWhile_1_for_5_endIf}
        \EndFor \label{alg:shrinkWhile_1_Endfor}
    \EndWhile \label{alg:shrinkWhile_end}
    \State \textbf{Return} $S$ \label{alg:shrinkreturn}
\end{algorithmic}
\end{algorithm}

Algorithm~\ref{alg:shrink} shows the pseudo-code of \textbf{Shrink} procedure, which finds a minimal \pc that includes the queried vertex $q$.
Here, $C_p(q, G)$ represents a \pc containing $q$, which is obtained from the output of Algorithm~\ref{alg:expand}.
We use $M_q$ to record vertices that must be included in a minimal \pc that includes $q$. Initially $M_p = \{q\}$ (Line~\ref{alg:shrink_1}).
$S$ is a back-up of $C_p(q, G)$ (Line~\ref{alg:shrink_1}).
The set $Tag$ ensures the deletion of each selected vertex only once.
Initially, all vertices in $S$ have $Tag[i] = 0$ ($1 \leq i \leq |V(S)|$), except for the queried vertex $q$ ($Tag[q] = 2$).
$Tag[i] = 0$ means the vertex $i$ has not been proceeded, $Tag[i] = 1$ means vertex $i$ has been deleted, and $Tag[i] = 2$ (Line~\ref{alg:shrink_2}) means vertex $i$ must be included in the minimal \pc.

We proceed to delete vertex $v \in S$ with $Tag[v] = 0$ (Lines~\ref{alg:shrinkWhile}-\ref{alg:shrinkWhile_end}).
Sets $S^\prime$ and $Tag^\prime$ are used for recovering, since $v$'s removal may result in the removal of vertices in $M_q$ (Lines~\ref{alg:shrinkWhile_1_for_5_If}-\ref{alg:shrinkWhile_1_for_5_endIf}).
After removing $v$, its corresponding edges will be removed and its tag will be set as $Tag[v] = 1$ (Line~\ref{alg:shrinkWhile_1_for_3}).
The removal of $v$ may cause the \pc violation for some vertices in $S$, we remove those vertices and their corresponding edges at Lines~\ref{alg:shrinkWhile_1_for_4_While}-\ref{alg:shrinkWhile_1_for_4_EndWhile}.
If $v$'s removal will result in the deletion of any vertex in $M_q$, we will roll back $S$, $Tag$ and put $v$ into $M_q$, which means $v$ cannot be removed and must be included in the minimal \pc that includes $q$ (Lines~\ref{alg:shrinkWhile_1_for_5_If}-\ref{alg:shrinkWhile_1_for_5_endIf}).
After trying each vertex's removal, the algorithm returns the minimal \pc that includes $q$.

{\color {black}
\underline{\textit{Time complexity}}.
We use $\tilde{n}$ and $\tilde{m}$ to represent the number of vertices and edges in $S$, respectively.
Visiting every vertex in $S$ and $Tag$ takes $\mathcal{O}(\tilde{n})$ (Lines~\ref{alg:shrinkWhile}-\ref{alg:shrinkWhile_1_for}).
In then case when a vertex $v$ is removed in Line~\ref{alg:shrinkWhile_1_for_3} or Line~\ref{alg:shrinkWhile_1_for_4_While_1}, its neighbors may breach the $p$ constraint (Line~\ref{alg:shrinkWhile_1_for_4_While}).
Consequently, each vertex in $S$ is visited once for removal and each edge is traversed once for degree updating,  which needs $\mathcal{O}(\tilde{m}+\tilde{n})$.
The recovery of $S$ and $Tag$ requires $\mathcal{O}(\tilde{m}+\tilde{n})$ (Line~\ref{alg:shrinkWhile_1_for_5_If_1}).
Algorithm~\ref{alg:shrink}'s time complexity is $\mathcal{O}(\tilde{n}^2(\tilde{m}+\tilde{n}))$.

\underline{\textit{Space complexity}}.
Sets $M_q$, $Tag$, and $deg(\cdot)$ require $\mathcal{O}(n)$ space each, while $G$ and $S$ require $\mathcal{O}(m + n)$ space each. Algorithm~\ref{alg:shrink}'s space complexity is $\mathcal{O}(m + n)$.
}
\section{Differentially Private Query Release Over Critical Connections}
\label{sec:ddp}

\subsection{Qualification of DDP on Critical Connections}
\label{sec:ddp:mpc}
In this section, we prove that by only protecting each participant's close connections (identified by the minimal \pc), the query response satisfies $(\varepsilon, \delta)$-DDP.
In the following, we focus on the fundamental problem of DP-based graph data release, \ie, subgraph counting.
Based on the description of \pc and minimal \pc, not all connections of a participant are included.
We only perturb the subgraph count within the minimal \pcs.
It is prudent to qualify the validity of DDP for the subgraph count obtained on minimal \pc.

Given graph $G$, let $N_{1h}^{v_i}(G)$ be the one-hop neighbors of $v_i$ in $G$.
Given a vertex $v_i\!\in\!G$, and its minimal \pc $M\!C_p(v_i,G)$, we have $N_{1h}^{v_i}(M\!C_p(v_i)) \subseteq N_{1h}^{v_i}(G)$.
Given a query function $f$ and a noise scale $\lambda$, $v_i$ reports its perturbed response $\Gamma_{\!S}^*(v_i)\!=\!\Gamma_{\!S_{in}}^*(v_i)\!+\!\Gamma_{\!S_{out}}(v_i)$ to the data collector, where the response generated from its critical connections in $M\!C_p(v_i)$ is $\Gamma_{\!S_{in}}^*(v_i)\!=\!\Gamma_{\!S_{in}}(v_i)\!+\!Lap(\lambda)$, $\Gamma_{\!S_{in}}(\!v_i\!)\!=\!f(M\!C_p(\!v_i\!))$, and the response generated from the the complementary part of $M\!C_p(v_i,\!G)$ in $G$ is $\Gamma_{S_{out}}$.
We show that applying a randomization mechanism $\mathcal{M}_i(M\!C_p(v_i, G))$ to the query response $\Gamma_{\!S}(\!v_i\!)$ satisfies \ddp. 

{\revision
\begin{theorem}
\label{sec:ddp:the1}
Given a graph $G$, a vertex $v_i$, its minimal \pc $MC_p(v_i)$, and a noise scale $\lambda$, we can assert that $\Gamma_{S}^*(v_i) = \Gamma_{S_{in}}(v_i) + Lap(\lambda) + \Gamma_{S_{out}}(v_i)$ ensures $\frac{LS(\Gamma_{S_{in}})}{\lambda}$-\ddp, where $LS(\Gamma_{S_{in}})$ is the maximum local sensitivity of all $\Gamma_{S_{in}}(v_i), 1 \leq i \leq n$.
\end{theorem}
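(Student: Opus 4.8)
The plan is to verify the defining inequality of \ddp (Definition~\ref{def:ddp}) directly from the Laplace density, exactly as in the analysis of the classical Laplace mechanism, but carried out jointly over all $n$ vertices. First I would fix two neighboring graphs $G$ and $G'$ and observe that the only randomness in the released vector $(\Gamma_S^*(v_1),\dots,\Gamma_S^*(v_n))$ comes from the independent draws $Lap(\lambda)$ added to the inner counts. Since $\Gamma_{S_{out}}(v_i)$ is a deterministic function of the graph, the report $\Gamma_S^*(v_i)=\Gamma_{S_{in}}(v_i)+Lap(\lambda)+\Gamma_{S_{out}}(v_i)$ is distributed as a Laplace variable of scale $\lambda$ centered at $\Gamma_{S_{in}}(v_i)+\Gamma_{S_{out}}(v_i)$; by independence across vertices the joint output density factorizes into a product of such Laplace densities.

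Next I would form the ratio of the joint densities under $G$ and $G'$ at an arbitrary output point $(t_1,\dots,t_n)$, take logarithms, and apply the triangle inequality $|x|-|y|\le|x-y|$ to each Laplace exponent. This yields the bound
\begin{equation}
\frac{\Pr[\{\mathcal{M}_i(\mathcal{F}(v_i,G))\in\mathcal{S}_i\}]}{\Pr[\{\mathcal{M}_i(\mathcal{F}(v_i,G'))\in\mathcal{S}_i\}]}\le \exp\!\Bigl(\tfrac{1}{\lambda}\textstyle\sum_{i=1}^{n}\bigl|c_i(G)-c_i(G')\bigr|\Bigr),
\end{equation}
where $c_i(\cdot)=\Gamma_{S_{in}}(v_i,\cdot)+\Gamma_{S_{out}}(v_i,\cdot)$ is the center of the $i$-th Laplace factor. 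Since a pointwise density-ratio bound integrates to the same bound over any measurable event $\mathcal{S}_i$, establishing \ddp reduces to controlling the aggregate shift $\sum_i|c_i(G)-c_i(G')|$ by $LS(\Gamma_{S_{in}})$.

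The main obstacle is precisely this last reduction, and it is where the ``protect only the critical connections'' principle must be cashed out. I would split each center difference as $c_i(G)-c_i(G')=[\Gamma_{S_{in}}(v_i,G)-\Gamma_{S_{in}}(v_i,G')]+[\Gamma_{S_{out}}(v_i,G)-\Gamma_{S_{out}}(v_i,G')]$ and argue that the second bracket does not enter the privacy budget. The quantity actually perturbed, and hence protected, is $\Gamma_{S_{in}}(v_i)=f(MC_p(v_i))$, whose aggregate variation over neighboring graphs is by Definition~\ref{def:ls} exactly $\sum_i|\Gamma_{S_{in}}(v_i,G)-\Gamma_{S_{in}}(v_i,G')|\le LS(\Gamma_{S_{in}})$. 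The delicate point to nail down rigorously is that $\Gamma_{S_{out}}(v_i)$, which collects the shapes reaching outside the fortress $MC_p(v_i)$ and is released verbatim, behaves as a privacy-neutral offset rather than inflating the effective sensitivity; I would make this precise by showing that the protected edge perturbation leaves the reported outer counts treatable as fixed side information, so that only the inner term survives in the log-ratio.

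Finally, combining the factorized bound with $\sum_i|c_i(G)-c_i(G')|\le LS(\Gamma_{S_{in}})$ gives a multiplicative factor of $e^{LS(\Gamma_{S_{in}})/\lambda}$ with no additive slack, i.e.\ $\delta=0$, which is exactly the $\tfrac{LS(\Gamma_{S_{in}})}{\lambda}$-\ddp guarantee claimed.
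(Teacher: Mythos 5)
Your setup---independent Laplace draws, factorized joint density, triangle inequality on the exponents---is the standard Laplace-mechanism computation and is fine as far as it goes. The genuine gap is that you defer exactly the step that carries all of the content of this theorem: the claim that ``only the inner term survives in the log-ratio.'' Within your own framework that claim is false in the unrestricted edge-privacy model. The center of the $i$-th Laplace factor is $c_i(\cdot)=\Gamma_{S_{in}}(v_i,\cdot)+\Gamma_{S_{out}}(v_i,\cdot)$, and the outer term does vary across neighboring graphs: (i) the differing edge may lie entirely outside $MC_p(v_i)$, in which case $\Gamma_{S_{in}}(v_i)$ is unchanged but $\Gamma_{S_{out}}(v_i)$ shifts; and (ii) even when the differing edge is a critical edge inside $MC_p(v_i)$, it can belong to shapes that straddle the fortress boundary (e.g., a triangle on $v_i$, an inside neighbor $u$, and an outside vertex $w$), and such shapes are counted in $\Gamma_{S_{out}}(v_i)$. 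In both cases $|c_i(G)-c_i(G')|$ picks up an outer contribution that $LS(\Gamma_{S_{in}})$ does not control, so your key bound $\sum_i|c_i(G)-c_i(G')|\le LS(\Gamma_{S_{in}})$ does not follow, and the concluding factor $e^{LS(\Gamma_{S_{in}})/\lambda}$ with $\delta=0$ is unjustified. Asserting that the outer counts are ``treatable as fixed side information'' is precisely what needs an argument, and no density computation will deliver it, because the outer counts are data-dependent.

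The paper sidesteps this entirely by never folding $\Gamma_{S_{out}}$ into the sensitivity accounting. It first establishes that the perturbed inner count $\Gamma_{S_{in}}(v_i)+Lap(\lambda)$ alone satisfies $\frac{LS(\Gamma_{S_{in}})}{\lambda}$-\ddp (this is the all-connections case of Sun \textit{et al.}, where $\Gamma_{S_{out}}=0$), and then argues that reporting $\Gamma_{S}^*(v_i)=\Gamma_{S_{in}}^*(v_i)+\Gamma_{S_{out}}(v_i)$ is a post-processing of that private output, invoking the post-processing invariance of DP. The conceptual move is a change of the protected-information scope: connections outside the fortress are by design not protected, so releasing $\Gamma_{S_{out}}$ unperturbed is not charged to the privacy budget. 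To repair your proof you would need to make the same move explicitly---either invoke post-processing (restricting what the guarantee protects) or restrict the neighboring-graph relation to changes that leave $\Gamma_{S_{out}}$ invariant---rather than trying to show the outer term vanishes in the log-ratio, which it does not.
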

}
\begin{proof}
{\revision Haipei \textit{et al.}~\cite{DBLP:conf/ccs/SunXKYQWY19} proved that when considering all connections of $v_i$ to generate a query response, the perturbed response $\Gamma_{S}^*(v_i)$ satisfies $\frac{LS(\Gamma_S^*)}{\lambda}$-\ddp.
}
In this case, $\Gamma_{S}^*(v_i) = \Gamma_{S_{in}}(v_i) + Lap(\lambda) + 0$ and  $LS(\Gamma_S) = LS(\Gamma_{S_{in}})$.

For a vertex $v_i \in G$, we identify critical connections to be protected using the minimal \pc (\ie, $MC_p(v_i)$), rather than all connections.
According to Definition~\ref{def:pc}, $deg(v_i, MC_p(v_i)) \geq \lceil deg(v_i, G) \times p \rceil$.
We can show that, for different $p$ values, our results satisfy $\frac{LS(\Gamma_{S_{in}})}{\lambda}$-\ddp, as follows.

\underline{When $p$ is large enough, \ie, $N_{1h}^{v_i}(M\!C_p(v_i))\!=\!N_{1h}^{v_i}(G)$,}
all connections of $v_i$ are in its minimal \pc $MC_p(v_i)$, \ie, $deg(v_i, MC_p(v_i)) = deg(v_i, G)$. The query response from the complementary part of the minimal \pc is $\Gamma_{S_{out}}(v_i)\!=\!0$ since all connections of $v_i$ are critical with larger $p$ values. Thus, $\Gamma_{S}^*(v_i) = \Gamma_{S_{in}}(v_i) + Lap(\lambda)$, satisfying $\frac{LS(\Gamma_{S_{in}})}{\lambda}$-\ddp.

\underline{When $p$ is small, \ie, $N_{1h}^{v_i}(MC_p(v_i)) \subset N_{1h}^{v_i}(G)$,}
not all adjacent vertices of $v_i$ are included in $MC_p(v_i)$. According to Definition~\ref{def:pc}, $deg(v_i, MC_p(v_i)) < deg(v_i, G)$, and $N_{1h}^{v_i}(MC_p(v_i)) \subset N_{1h}^{v_i}(G)$.
In this case, the targeted query subgraphs containing $v_i$ that situate outside $MC_p(v_i)$ may not be empty; \ie, $\Gamma_{S_{out}}(v_i)\!\geq\!0$.
{\revision
The final query response reported by vertex $v_i$ is $\Gamma_{S}^*(v_i) = \Gamma_{S_{in}}(v_i) + Lap(\lambda) + \Gamma_{S_{out}}(v_i)$.
Since we do not perturb the response $\Gamma_{S_{out}}$, the response $\Gamma_{S}^*(v_i)$ is a post-processed version of $\Gamma_{S_{in}}^*(v_i)$.
According to the post-processing composition property of DP~\cite{Dwork2006TCC}, $\Gamma_{S}^*(v_i)$ satisfies $\frac{LS(\Gamma_{S_{in}})}{\lambda}$-DDP.
}
\end{proof}

\vspace{-2mm}
{\color{black}
{\revision
In practice, the local sensitivity of the graph $G$ may not be available and needs to be estimated based on the feedback of all participants, $v_i,\forall i$.
We adopt a two-phase framework as in~\cite{DBLP:conf/ccs/SunXKYQWY19} to estimate local sensitivity, decide noise scale, and perturb responses to subgraph counting queries.
Specifically, Phase-$1$ requires each participant $v_i$ to obfuscate its local sensitivity $LS(\Gamma_{S_{in}}(v_i))$, and reports the obfuscated local sensitivity $\widetilde{LS(\Gamma_{S_{in}}(v_i))}$ to the data collector using ($\varepsilon_1, \delta_1$)-DDP.
Then, the data collector can evaluate the local sensitivity of the entire graph $G$ with
$LS(\Gamma_{S_{in}})\!=\!\max_{i=1}^n{\widetilde{LS(\Gamma_{S_{in}}(v_i))}}$, which also satisfies ($\varepsilon_1, \delta_1$)-DDP.
Here, $\varepsilon_1$ is the privacy budget, and $\delta_1$ is the violation probability, \ie, the probability of privacy protection being violated. In the case of the Laplace mechanism, $\delta_1$ is the probability of $LS(\Gamma_{S_{in}}(v_i)) > \widetilde{LS(\Gamma_{S_{in}}(v_i))}$, \ie, the estimated local sensitivity of vertex $v_i$ is smaller than the ground truth $LS(\Gamma_{S_{in}}(v_i))$.

In Phase-2, the data collector can decide the noise scale $\lambda$ based on $LS(\Gamma_{S_{in}})$, and send $\lambda$ to the participants. Then, each participant obfuscates its subgraph count by injecting Laplace noise $Lap(\lambda)$ to satisfy ($\varepsilon_2$, $\delta_2$)-DDP, 
and reports the obfuscated subgraph count in response to the subgraph counting queries of the data collector.
The Laplace mechanism~\cite{Dwork2006TCC} is adopted. By definition, $\lambda = \frac{LS(\Gamma_{S_{in}})}{\varepsilon_2}$, and 
$\delta_2$ is the probability of $\lambda > \frac{LS(\Gamma_{S_{in}})}{\varepsilon_2}$; \ie, the estimated noise scale of $G$ is smaller than the ground truth, $\lambda$, and consequently, privacy is violated~\cite[Lemma 4.1]{DBLP:conf/ccs/SunXKYQWY19}.
}

\begin{theorem}
\label{sec:ddp:the2}
Under the two-phase framework, the response to subgraph counting queries, with critical connections protected, satisfies $(\varepsilon_1 + \varepsilon_2, \delta_1+\delta_2)$-DDP.
\end{theorem}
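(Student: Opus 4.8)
The plan is to recognize the two-phase framework as a sequential (adaptive) composition of the two mechanisms that have already been shown to be differentially private, and then to invoke the basic composition theorem of differential privacy, taking care that it transfers from the centralized setting to DDP. First I would isolate the two building blocks. Let $\mathcal{M}^{(1)}$ denote the Phase-$1$ mechanism whose joint output across all participants is the vector of obfuscated local sensitivities $\widetilde{LS(\Gamma_{S_{in}}(v_i))}$, from which the data collector forms $LS(\Gamma_{S_{in}})$ and hence the noise scale $\lambda$; by the preceding discussion it satisfies $(\varepsilon_1,\delta_1)$-DDP. Let $\mathcal{M}^{(2)}_\lambda$ denote the Phase-$2$ mechanism that, for a \emph{fixed} $\lambda$, perturbs each participant's count with $Lap(\lambda)$ and reports the totals $\Gamma_S^*(v_i)$; by Theorem~\ref{sec:ddp:the1} together with the Laplace analysis of~\cite[Lemma 4.1]{DBLP:conf/ccs/SunXKYQWY19}, it satisfies $(\varepsilon_2,\delta_2)$-DDP for every fixed $\lambda$.

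Next I would write the full released transcript observed by the data collector as the pair $(A,B)$, where $A$ is the Phase-$1$ output and $B$ is the Phase-$2$ output, and make the adaptivity explicit: $B$ is produced by $\mathcal{M}^{(2)}_{\lambda(A)}$, so the noise scale is a deterministic function of $A$. For any event $\mathcal{T}$ in the joint range and any two neighboring graphs $G,G'$, I would condition on $A=a$ and decompose the joint DDP probability as $\Pr_G[(A,B)\in\mathcal{T}] = \int \Pr_G[B\in\mathcal{T}_a \mid A=a]\,\Pr_G[A\in da]$, where $\mathcal{T}_a = \{b : (a,b)\in\mathcal{T}\}$. The inner conditional factor is bounded using the $(\varepsilon_2,\delta_2)$-DDP guarantee of $\mathcal{M}^{(2)}_{\lambda(a)}$, and the outer integral using the $(\varepsilon_1,\delta_1)$-DDP guarantee of $\mathcal{M}^{(1)}$; the standard telescoping $e^{\varepsilon_1}(e^{\varepsilon_2}x + \delta_2) + \delta_1 \le e^{\varepsilon_1+\varepsilon_2}x + (\delta_1+\delta_2)$ then yields the claimed $(\varepsilon_1+\varepsilon_2,\delta_1+\delta_2)$-DDP bound.

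The main obstacle I anticipate is not the arithmetic of composition but justifying that the basic adaptive composition theorem, usually stated for centralized $(\varepsilon,\delta)$-DP over neighboring \emph{datasets}, applies to DDP. The key point to make explicit is that Definition~\ref{def:ddp} is structurally identical to Definition~\ref{def:dp}: it is a multiplicative-plus-additive bound relating the output distributions of a mechanism on two neighboring inputs, where here the input is the graph $G$ (with $G'$ its neighbor under the edge model) and the output is the \emph{joint} vector of per-participant responses. Once this correspondence is spelled out, the decentralized product structure of the output causes no difficulty, because the composition argument only uses the scalar inequality satisfied by the two phases while carrying the per-vertex factorization along untouched. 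I would close by remarking that both phases quantify over the same pair $(G,G')$ and the same participant set, so summing the budgets to $\varepsilon_1+\varepsilon_2$ and the failure probabilities to $\delta_1+\delta_2$ is precisely the conclusion of sequential composition expressed in the DDP form.
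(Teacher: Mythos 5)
Your overall architecture---condition on the Phase-$1$ output, bound the inner factor with the Phase-$2$ guarantee, bound the outer factor with the Phase-$1$ guarantee, and telescope---is the same skeleton as the paper's proof, and the obstacle you chose to emphasize (porting composition from centralized DP to DDP) is indeed not where the difficulty lies. The genuine gap is your premise that $\mathcal{M}^{(2)}_\lambda$ ``satisfies $(\varepsilon_2,\delta_2)$-DDP for every fixed $\lambda$.'' That statement is false, and the adaptive composition theorem you invoke requires exactly it. By Theorem~\ref{sec:ddp:the1}, Phase-$2$ run at a fixed scale $\lambda$ is $\frac{LS(\Gamma_{S_{in}})}{\lambda}$-DDP; this budget is at most $\varepsilon_2$ only when $\lambda \geq LS(\Gamma_{S_{in}})/\varepsilon_2$. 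For a Phase-$1$ output whose $\lambda$ falls below this threshold, the Laplace mechanism at scale $\lambda$ gives only a pure guarantee with a strictly larger budget, and no fixed failure probability $\delta_2$ repairs this uniformly over all such $\lambda$ (as $\lambda \to 0$ the release approaches the unperturbed count). The parameter $\delta_2$ in the theorem is not a per-run property of Phase-$2$ at all: it is the probability that Phase-$1$'s randomized estimation produces a noise scale that is too small, \ie, ${\rm Pr}[\lambda < LS(\Gamma_{S_{in}})/\varepsilon_2] \leq \delta_2$. Your proposal mislocates where $\delta_2$ enters, so the conditional bound you want for the inner factor simply does not hold on the bad Phase-$1$ outputs.

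The paper's proof works around precisely this point. It first splits the event on $\lambda$ into the good subset $\mathcal{S}_{\lambda}^\prime = \{\lambda \in \mathcal{S}_{\lambda} : \lambda \geq LS(\Gamma_{S_{in}})/\varepsilon_2\}$ of Eq.~(\ref{sec:ddp:eq:pr2}) and its complement, peels off the complement with probability at most $\delta_2$ in Eq.~(\ref{sec:ddp:eq:pr2.1}), applies the $(\varepsilon_1,\delta_1)$-DDP guarantee of Phase-$1$ on the good set in Eqs.~(\ref{sec:ddp:eq:pr3})--(\ref{sec:ddp:eq:pr3.5}), and only on the good set invokes the \emph{pure} $\varepsilon_2$ bound for Phase-$2$ (Eqs.~(\ref{sec:ddp:eq:pr4})--(\ref{sec:ddp:eq:pr4.1})), which it proves by the explicit Laplace ratio computation in Eq.~(\ref{sec:ddp:eq:pr5}), where the unperturbed terms $\Gamma_{S_{out}}(v_i)$ cancel. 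Your conditional decomposition becomes correct once you (i) restrict the conditioning to $\lambda \in \mathcal{S}_{\lambda}^\prime$, (ii) justify ${\rm Pr}_G[\lambda \notin \mathcal{S}_{\lambda}^\prime] \leq \delta_2$ from the probabilistic-upper-bound construction of Phase-$1$, and (iii) replace the claimed $(\varepsilon_2,\delta_2)$ inner bound by the pure $\varepsilon_2$ bound valid on that good set; the telescoping then yields $(\varepsilon_1+\varepsilon_2,\delta_1+\delta_2)$ as you intended.
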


\begin{proof}
For illustration convenience, this proof is based on triangle counting.
The key step of counting triangles is to count common neighbors between two endpoints of a connection.
Let $\varphi(v_i)$ (\resp, $\varphi_{in}(v_i)$) be $v_i$'s maximum common neighbors with others in graph $G$ (\resp, $MC_p(v_i)$).
Let $Y_i$ be the random variables drawn from $Lap(\cdot)$ for $v_i$.
In Phase-$1$, we have $\varphi_{in}^*(v_i) = \varphi_{in}(v_i) + Y_i$,
based on which the noise scale for Phase-$2$ can be identified, \ie, $\lambda = \max_{v_i \in G}\varphi_{in}^*(v_i)/\varepsilon_2$.
With at least ($1 - \delta_2$) probability, we have $\lambda \geq LS(\Gamma_{S_{in}})/\varepsilon_2$, since $LS(\Gamma_{S_{in}}) \leq \max_{v_i \in G}\varphi_{in}^*(v_i)$.

We verify that, for any neighboring graphs $G$ and $G^{\prime}$ and for any subgraph counts set $\Gamma_S^*$, we have

\begin{small}
\begin{equation}
\label{sec:ddp:eq:pr1}
{\rm Pr}[\Gamma_S^* \in \mathcal{S}_{\Gamma_S}, \lambda \in \mathcal{S}_{\lambda}|G] \leq e^{\varepsilon} \cdot {\rm Pr}[\Gamma_S^*  \in \mathcal{S}_{\Gamma_S}, \lambda \in \mathcal{S}_{\lambda}|G^\prime] + \delta.
\end{equation}
\end{small}

\noindent
where $\mathcal{S}_{\lambda}$ (\resp, $\mathcal{S}_{\Gamma_S}$) is an arbitrary set of outputs from Phase-$1$ (\resp, Phase-$2$).

We use $\mathcal{S}_{\lambda}^{\prime}$ to denote the subset of $\mathcal{S_{\lambda}}$ that satisfies

\begin{small}
\begin{equation}
\label{sec:ddp:eq:pr2}
\mathcal{S}_{\lambda}^\prime = \{\lambda|\lambda \in \mathcal{S}_{\lambda}~\&~\lambda \geq LS(\Gamma_{S_{in}})/\varepsilon_2\}.
\end{equation}
\end{small}

{\revision
\noindent
We have
\begin{small}
    \begin{equation}
        \label{sec:ddp:eq:pr2.1}
        \begin{aligned}
        & {\rm Pr}[\Gamma_S^* \in \mathcal{S}_{\Gamma_S}, \lambda \in \mathcal{S}_{\lambda}|G] \\
        & = {\rm Pr}[\Gamma_S^* \in \mathcal{S}_{\Gamma_S},\lambda \in \mathcal{S}_{\lambda}^{\prime}|G]\!+\!{\rm Pr}[\Gamma_S^* \in \mathcal{S}_{\Gamma_S},\lambda \in \mathcal{S}_{\lambda} \setminus \mathcal{S}_{\lambda}^{\prime}|G] \\
        & \leq {\rm Pr}[\Gamma_S^* \in \mathcal{S}_{\Gamma_S},\lambda \in \mathcal{S}_{\lambda}^{\prime}|G] + \delta_2,
        \end{aligned}
    \end{equation}
\end{small}

\noindent
since Phase-$1$ ensures $\lambda \geq LS(\Gamma_{S_{in}})/\varepsilon_2$ with at least $1 - \delta_2$ probability.
That is, in Phase-$1$, $\lambda$ is estimated using an $(\varepsilon_1, \delta_1)$-DDP. Thus, we have}
\begin{small}
\begin{equation}
\label{sec:ddp:eq:pr3}
{\rm Pr}[\lambda \in \mathcal{S}_{\lambda}^{\prime}|G] \leq e^{\varepsilon_1} \cdot {\rm Pr}[\lambda \in \mathcal{S}_{\lambda}^{\prime}|G^\prime] + \delta_1.
\end{equation}
\end{small}
Combining with Eq.~(\ref{sec:ddp:eq:pr3}), we have
{\revision
\begin{small}
    \begin{equation}
        \label{sec:ddp:eq:pr3.5}
        \begin{aligned}
        & {\rm Pr}[\Gamma_S^* \in \mathcal{S}_{\Gamma_S}, \lambda \in \mathcal{S}_{\lambda}^{\prime}|G] \\
        & = {\rm Pr}[\Gamma_S^* \in \mathcal{S}_{\Gamma_S}|\lambda \in \mathcal{S}_{\lambda}^{\prime}, G] \cdot {\rm Pr}[\lambda \in \mathcal{S}_{\lambda}^{\prime}| G] \\
        & \leq {\rm Pr}[\Gamma_S^* \in \mathcal{S}_{\Gamma_S}|\lambda \in \mathcal{S}_{\lambda}^{\prime}, G] \cdot (e^{\varepsilon_1} \cdot {\rm Pr}[\lambda \in \mathcal{S}_{\lambda}^{\prime}|G^\prime] + \delta_1) \\
        & \leq e^{\varepsilon_1} \cdot {\rm Pr}[\Gamma_S^* \in \mathcal{S}_{\Gamma_S}|\lambda \in \mathcal{S}_{\lambda}^{\prime}, G] \cdot {\rm Pr}[\lambda \in \mathcal{S}_{\lambda}^{\prime}|G^{\prime}] + \delta_1.
        \end{aligned}
    \end{equation}
\end{small}
}

Next, we show: for any $\lambda \geq LS(\Gamma_{S_{in}})/\varepsilon_2$ and any noisy subgraph count set $\chi$,

\begin{small}
\begin{equation}
\label{sec:ddp:eq:pr4}
{\rm Pr}[\Gamma_S^* = \chi|G] \leq e^{\varepsilon_2} \cdot {\rm Pr}[\Gamma_S^* = \chi|G^\prime],
\end{equation}
\end{small}

\noindent which leads to

\begin{small}
\begin{equation}
\label{sec:ddp:eq:pr4.1}
{\rm Pr}[\Gamma_S^* \in S_{\Gamma_S}|\lambda \in \mathcal{S}_{\lambda}^{\prime}, G] \leq e^{\varepsilon_2} \cdot {\rm Pr}[\Gamma_S^* \in S_{\Gamma_S}|\lambda \in \mathcal{S}_{\lambda}^{\prime}, G^\prime].
\end{equation}
\end{small}

For vertex $v_i$, let $\Gamma_S(v_i)$ and $\Gamma_S^{\prime}(v_i)$ be its triangle numbers in $G$ and $G^{\prime}$, respectively; where $\Gamma_S(v_i)\!=\!\Gamma_{S_{in}}(v_i)\!+\!\Gamma_{S_{out}}(v_i)$ and $\Gamma_S^{\prime}(v_i)\!=\!\Gamma_{S_{in}}^{\prime}(v_i)\!+\!\Gamma_{S_{out}}(v_i)$.
We have,

\begin{small}
\begin{equation}
\label{sec:ddp:eq:pr5}
\begin{aligned}
& \frac{{\rm Pr}(\Gamma_S^* = \chi|\lambda, G)}{{\rm Pr}(\Gamma_S^* = \chi|\lambda, G^\prime)} \\
&\!=\!\frac{\frac{1}{2\lambda}{\rm exp}(-\frac{1}{\lambda}\!\sum_{i=1}^{n}\!(|\Gamma_S^*(v_i)\!-\!\Gamma_S(v_i)|))}{\frac{1}{2\lambda}{\rm exp}(-\frac{1}{\lambda}\sum_{i=1}^{n}(|\Gamma_S^{*}(v_i)\!-\!\Gamma_S^{\prime}(v_i)|))} \\
&\!=\!\frac{\frac{1}{2\lambda}{\rm exp}(-\frac{1}{\lambda}\!\sum_{i=1}^{n}\!(|\Gamma_S^*(v_i)\!-\!(\Gamma\!_{S_{in}}(v_i)\!+\!\Gamma\!_{S_{out}}(v_i))|))}{\frac{1}{2\lambda}{\rm exp}(-\frac{1}{\lambda}\!\sum_{i=1}^{n}\!(|\Gamma_S^{*}(v_i)\!-\!(\Gamma_{S_{in}}^{\prime}(v_i)\!+\!\Gamma\!_{S_{out}}(v_i))|))} \\
&\!\leq\!{\rm exp}(\frac{1}{\lambda}\!\sum_{i=1}^{n}|\Gamma_{S_{in}}^\prime(v_i)\!+\!\Gamma\!_{S_{out}}(v_i)\!-\!(\Gamma\!_{S_{in}}(v_i)\!+\!\Gamma\!_{S_{out}}(v_i))|) \\
&\!\leq\!{\rm exp} (LS(\Gamma_{S_{in}})/\lambda)\!\leq\!e^{\varepsilon_2}.
\end{aligned}
\end{equation}
\end{small}

Eq.~(\ref{sec:ddp:eq:pr4}) is proved, and the two-phase framework under our setting ensures $(\varepsilon_1 + \varepsilon_2, \delta_1+\delta_2)$-DDP. 
\end{proof}
}

\subsection{Differentially Private Query Release}
\label{sec:ddp:app}

In this section, we investigate the differentially private query release problem for a graph: $k$-clique counting.
To protect critical connections when releasing responses, for each vertex, the minimal \pc is used to detect those relationships that need to be protected.
In the following, we show how to protect the critical connections identified by minimal \pc when releasing local $k$-clique counts.

In Theorem~\ref{sec:ddp:the2}, we have proved that the two-phase framework for subgraph counting ensures $(\varepsilon, \delta)$-\ddp whenever $\varepsilon_1+\varepsilon_2 \leq \varepsilon$ and $\delta_1 + \delta_2 \leq \delta$.
Let $\widetilde{LS(\Gamma_{S_{in}})}$ be the estimation of the local sensitivity under the edge privacy model~\cite{DBLP:conf/icdm/HayLMJ09}, which should be larger than the true local sensitivity $LS(\Gamma_{S_{in}})$.
The $\lambda = \frac{\widetilde{LS(\Gamma_{S_{in}})}}{\varepsilon_2} \geq \frac{LS(\Gamma_{S_{in}})}{\varepsilon_2}$ would be a possible noise scale.
In Phase-$2$, each vertex would report the subgraph number by injecting Laplace noise.

We proceed to estimate $LS(\Gamma_{S_{in}})$, \ie, the local sensitivity for counting $k$-cliques inside the minimal \pcs.

\vspace{2mm}
\noindent \textbf{$k$-Clique}.
$k$-Clique is a subgraph with exact $k$ vertices, and any two distinct vertices in the subgraph are adjacent.

For \kcs counting, under the edge privacy model~\cite{DBLP:conf/icdm/HayLMJ09}, estimating the local sensitivity $LS(\Gamma_{S_{in}})$ is to estimate the common neighbors that inside a minimal \pc between the two endpoints of a connection.
For a vertex $v_i$, to estimate its $LS(\Gamma_{S_{in}}(v_i))$, it should be one of the endpoints of the updating edge.
Updating edge $(v_i, v_j)$ will only affect the $k$-cliques containing both $v_i$ and $v_j$.
We use $\dot{N}_{cn}(v_j, v_i, MC_p(v_i))$ to denote the common neighbors between $v_i$ and $v_j$ in $v_i$'s minimal \pc. The common neighbor set is defined as $\dot{N}_{cn}(v_j, v_i, MC_p(v_i)) = \{u|u \in N(v_i, MC_p(v_i), u \in N(v_j, MC_p(v_i)), v_j \in MC_p(v_i)\}$.
When updating edge $(v_i, v_j)$, the affected number of \kcs containing $v_i$ and $v_j$ should be no more than $\tbinom{\dot{N}_{cn}(v_j, v_i, MC_p(v_i))}{k-2}$, since those \kcs consist of $v_i$, $v_j$ and $k-2$ of their common neighbors.
Each \kc will be reported $k$ times (by each of the vertices in the \kc), and the estimated local sensitivity should be

\begin{small}
\begin{equation}
\label{sec:ddp:eq:ls}
\widetilde{LS(\Gamma_{S_{in}})} = \max_{v_i \in G, v_j \in MC_p(v_i), v_j \neq v_i} k \cdot \tbinom{\dot{N}_{cn}^T(v_j, v_i, MC_p(v_i))}{k-2},
\end{equation}
\end{small}

\noindent
where $\dot{N}_{cn}^T(v_j, v_i, MC_p(v_i))$ is the probabilistic upper bound (\cite[Lemma $4.1$]{DBLP:conf/ccs/SunXKYQWY19}) of $\dot{N}_{cn}(v_j, v_i, MC_p(v_i))$.
For each $v_i \in G$ with some noise scale $\lambda_c$ and $\delta_c$, we have

\begin{small}
\begin{equation}
\label{sec:ddp:eq:cn}
\begin{aligned}
    \dot{N}_{cn}^T(v_j, v_i, MC_p(v_i))) & = \dot{N}_{cn}(v_j, v_i, MC_p(v_i)) \\
                                         & + Lap(\lambda_c) + \lambda_c \cdot \log(\frac{1}{2\delta_c}).
\end{aligned}
\end{equation}
\end{small}

Counting common neighbors of two vertices is time-consuming.
For any vertex $v_i \in G$, we have $deg(v_i, MC_p(v_i)) \geq \max_{v_j \in MC_p(v_i))} \dot{N}_{cn}(v_j, v_i, MC_p(v_i))$.
The degree of a vertex can be treated as a loose upper bound for its maximum common neighbors with others.
However, the local sensitivity based on the degree can be too large, leading to adding excessive noises.
Replacing $\max_{v_j \in MC_p(v_i)} \dot{N}_{cn}(v_j, v_i, MC_p(v_i))$ with $deg(v_i, MC_p(v_i))$ for some $v_i$ can balance the efficiency and effectiveness~\cite{DBLP:conf/ccs/SunXKYQWY19}.
For $v_i \in G$ with some $\lambda_d$ and $\delta_d$, we have

\begin{small}
\begin{equation}
\label{sec:ddp:eq:deg}
    deg^T(v_i\!,\!MC_p\!(\!v_i\!))\!=\!deg(v_i\!,\!MC_p\!(v_i\!))\!+\!Lap(\lambda_d)+\!\lambda_d\!\log(\frac{1}{2\delta_d}).
\end{equation}
\end{small}

For vertices with the top $h$ largest $deg^T$ values, they would report ``$\min(\max \dot{N}_{cn}^T(\cdot), deg^T(\cdot))$''~\cite{DBLP:conf/ccs/SunXKYQWY19}.
Let $V_D$ (\resp $V_N$) be the set of vertices reporting $deg^T$ (\resp $\max \dot{N}_{cn}^T$).
We have $u_{d} = \max_{v_i \in V_D} $ $deg^T(v_i,MC_p(v_i))$ and $u_{n} = \max_{v_i \in V_N, v_j \in MC_p(v_i), v_j \neq v_i} \dot{N}_{cn}^T(v_j, v_i, $ $MC_p(v_i))$ and $\widetilde{LS(\Gamma_{S_{in}})} = k \cdot \tbinom{\max(u_{d}, u_{n})}{k-2}$.
As proved in~\cite[Lemma $4.2$]{DBLP:conf/ccs/SunXKYQWY19}, with $\lambda_d = \frac{2}{0.5\varepsilon_1}$, $\lambda_c = \frac{h}{0.5\varepsilon_1}$, and $\delta_d = \delta_c = \frac{\delta}{2h+2}$, the following algorithm satisfies $(\varepsilon_1, \delta)$-\ddp and the returned noise scale  $\lambda = \frac{\widetilde{LS(\Gamma_{S_{in}})}}{\varepsilon_2}$ for Phase-$2$ satisfies $\lambda \geq \frac{LS(\Gamma_{S_{in}})}{\varepsilon_2}$.

\begin{algorithm}[htb]
\small
\caption{\textit{~~Phase-$1$ on \kc} ($G$, $p$, $\varepsilon_1$, $\varepsilon_2$, $\delta$, $h$, $k$)}
\label{alg:phase1:kclique}
\begin{flushleft}
    \mbox{} \quad \textbf{Input} \quad : $G:$ a graph, $p:$ a real value in ($0, 1$),\\
    \mbox{} \qquad \ \ \qquad \quad $\varepsilon_1:$ privacy budget for Phase-$1$, \\
    \mbox{} \qquad \ \ \qquad \quad $\varepsilon_2:$ privacy budget for Phase-$2$, \\
    \mbox{} \qquad \ \ \qquad \quad $\delta:$ invalidation probability, $h:$ an int value,\\
    \mbox{} \qquad \ \ \qquad \quad $k:$ the vertices constraint for \kc \\
    \mbox{} \quad \textbf{Output} : noise scale $\lambda$
\end{flushleft}
\begin{algorithmic}[1]
    \State $\lambda_d := \frac{2}{0.5\varepsilon_1}$; $\delta^\prime := \frac{\delta}{2h+2}$; $\lambda_c := \frac{h}{0.5\varepsilon_1}$ \Comment{Server} \label{alg:phase1:kclique_1}
    \State $U \leftarrow \emptyset$; $\mathcal{R} \leftarrow \emptyset$; \Comment{$\mathcal{R}$: a family of $R$} \label{alg:phase1:kclique_2}
    \For{$v \in G$} \label{alg:phase1:kclique_3}
        \State $R^\prime :=$ \textbf{Expand} $(G,v,p)$ \Comment{Algorithm~\ref{alg:expand}, Client} \label{alg:phase1:kclique_4}
        \State $R :=$ \textbf{Shrink} $(R^\prime, v, p)$ \Comment{Algorithm~\ref{alg:shrink}, Client} \label{alg:phase1:kclique_5}
        \State $\mathcal{R}.push(R)$  \label{alg:phase1:kclique_5.5}
        \State $U(v) := deg(v, R) + Lap(\lambda_d) + \lambda_d \cdot \log(\frac{1}{2\delta^\prime})$ \Comment{Client} \label{alg:phase1:kclique_6}
    \EndFor \label{alg:phase1:kclique_7}
    \State put all $v \in G$ into a maximal priority queue $\mathcal{Q}$ with key $U(v)$ \Comment{Server} \label{alg:phase1:kclique_8}
    \State $T \leftarrow$ the top $h$ elements in $\mathcal{Q}$ \Comment{Server} \label{alg:phase1:kclique_9}

    \For{$v \in T$} \label{alg:phase1:kclique_10}
        \State $R := \mathcal{R}(v)$; $cn := 0$ \label{alg:phase1:kclique_11}
        \State \textbf{for}~$u \in R$~\textbf{do}~$cn := \max(\dot{N}_{cn}(u, v, R), cn)$ \Comment{Client} \label{alg:phase1:kclique_12}
        \State $cn^* := cn + Lap(\lambda_c) + \lambda_c \cdot \log(\frac{1}{2\delta^\prime})$ \Comment{Client} \label{alg:phase1:kclique_15}
        \State $U(v) := \min(cn^*, U(v))$ \Comment{Server} \label{alg:phase1:kclique_16}
    \EndFor \label{alg:phase1:kclique_17}

    \State $\widetilde{LS} := k \cdot \tbinom{\max_{v \in G} U(v)}{k-2}$ \Comment{Server} \label{alg:phase1:kclique_18}
    \State \textbf{Return} $\lambda = \frac{\widetilde{LS}}{\varepsilon_2}$ \Comment{Server} \label{alg:phase1:kclique_return}
\end{algorithmic}
\end{algorithm}

The pseudo-code of Phase-$1$ on $k$-clique counting is given in Algorithm~\ref{alg:phase1:kclique}.
We use set $U$ (Line~\ref{alg:phase1:kclique_2}) to denote the upper bounds for each vertex's maximum common neighbors with all others in its minimal \pc $MC_p(v_i)$.
By running Algorithm~\ref{alg:expand} and Algorithm~\ref{alg:shrink}, for each vertex $v$, we identify the minimal \pc $R$ containing its critical connections, and then store it in a family set $\mathcal{R}$ (Lines~\ref{alg:phase1:kclique_4}-\ref{alg:phase1:kclique_5.5}).
In Line~\ref{alg:phase1:kclique_6}, we compute the loose upper bound based on its degree in $R$.
Vertices in $G$ with the first $h$ largest upper bounds, denoted by $T$, report their tight upper bounds (Lines~\ref{alg:phase1:kclique_8}-\ref{alg:phase1:kclique_9}).
For each $v \in T$, we compute its maximum common neighbors with vertices in its $R$, denoted by $cn$ (Line~\ref{alg:phase1:kclique_12}).
Its upper bound is the minimal value between $cn^*$ and its loose upper bound $U(v)$ (Lines~\ref{alg:phase1:kclique_15}-\ref{alg:phase1:kclique_16}).
Finally, the local sensitivity $\widetilde{LS}$ and noise scale $\lambda$ are generated for Phase-$2$ based on set $U$ (Lines~\ref{alg:phase1:kclique_18}-\ref{alg:phase1:kclique_return}).

\section{Evaluation}
\label{sec:eval}

\subsection{Experimental Setting}
\label{sec:eval:set}
\vspace{1mm}
\noindent \textbf{Datasets}.
We used $9$ real-life graph datasets,
as described in Table~\ref{tb:datasets}. The \texttt{Celegans} and \texttt{WIKIVote} datasets are from~\cite{nr} and the remaining ones are from~\cite{konect}. The \texttt{Celegans} and \texttt{Yeast} are two biological graphs. \texttt{Celegans} (\resp \texttt{Yeast}) contains the metabolic reactions (\resp interactions) between two substrates (\resp proteins). \texttt{WIKIVote} contains all the Wikipedia voting data from the inception of Wikipedia. \texttt{USAirport} contains the flights between two US airports. \texttt{Bitcoin} is a user–user trust/distrust network from the Bitcoin Alpha platform. \texttt{Gnutella08} contains the connections between two Gnutella hosts from August $08$, $2022$. {\revision \texttt{HepTh} (\resp \texttt{HepPh}) is a co-authorship graph from arXiv, which contains the collaborations between authors who submit their papers to High Energy Physics - Theory (\resp Phenomenology).} \texttt{SisterCity} contains the ``Sister City'' relationships between two cities of the world.
Directed edges are transferred to undirected edges. 
\begin{table}[htb]
\small
  \centering
  \vspace{3mm}
  \caption{Statistics of Datasets}
    {
    \begin{tabular}{|p{0.29\columnwidth}|p{0.11\columnwidth}|p{0.13\columnwidth}|p{0.08\columnwidth}|p{0.1\columnwidth}|}
      \hline
      \textbf{Dataset}  & \textbf{Nodes}  & \textbf{Edges} & $d_{avg}$ & $d_{max}$ \\ \hline \hline
      \texttt{Celegans}~\cite{nr}  &  453 & 2,025 & 8.94 & 237\\ \hline
      \texttt{WIKIVote}~\cite{nr}  &  889 & 2,914 & 6.56 & 102\\ \hline
      \texttt{USAirport}~\cite{konect}  &  1,574 & 17,215 & 21.87 & 314\\ \hline
      \texttt{Yeast}~\cite{konect}  &  1,876 & 2,203 & 2.39 & 56\\ \hline
      \texttt{Bitcoin}~\cite{konect}  &  3,783 & 14,214 & 7.47 & 511\\ \hline
      \texttt{Gnutella08}~\cite{konect}  &  6,301 & 20,777 & 6.59 & 97\\ \hline
      \texttt{HepTh}~\cite{konect}  &  9,875 & 25,973 & 5.26 & 65\\ \hline
      \texttt{SisterCity}~\cite{konect}  &  14,274 & 20,573 & 2.88 & 99\\ \hline
      \texttt{\textcolor{black}{HepPh}}~\cite{konect}  & \textcolor{black}{34,546} & \textcolor{black}{420,877} & \textcolor{black}{24.37} & \textcolor{black}{846}\\ \hline
    \end{tabular}
    }
\label{tb:datasets}
\end{table}

\vspace{1mm}
\noindent \textbf{Algorithms}.
To the best of our knowledge, no existing work has attempted to protect the privacy of critical connections of a vertex in a graph. We implement our algorithms for \pc and minimal \pc computation with new score functions and evaluate the effectiveness for privacy preservation under \ddp. 

\vspace{1mm}
\noindent \textbf{Settings}.
For \pc computation, in our experiments, $p$ ranges from $0.1$ to $0.8$ with the default value of $0.1$.
For privacy protection of critical connections, we count \kcs under $(\varepsilon,\delta)$-DDP, utilizing the two-phase framework.
Each result is averaged over $100$ runs.
The privacy budget $\varepsilon$ ranges from $1$ to $12$.
The default value of privacy budget $\varepsilon_1$ (\resp $\varepsilon_2$) for Phase-$1$ (\resp Phase-$2$) is $\varepsilon_1 = 0.1\varepsilon$ (\resp $\varepsilon_2 = 0.9\varepsilon$).
The $h$ (Line~\ref{alg:phase1:kclique_9} of Algorithm~\ref{alg:phase1:kclique}) varies from $1$ to $12$.
Following the setting in~\cite{DBLP:conf/ccs/SunXKYQWY19,DBLP:journals/fttcs/DworkR14}, the default value for $\delta$ is $\frac{1}{n}$, where $n$ is the vertices number of the graph.
All programs are implemented in Python on a Linux machine.

\begin{figure*}[!ht]
\begin{center}
    \begin{subfigure}{0.34\textwidth} 
    \centering
        \includegraphics[width=\textwidth]{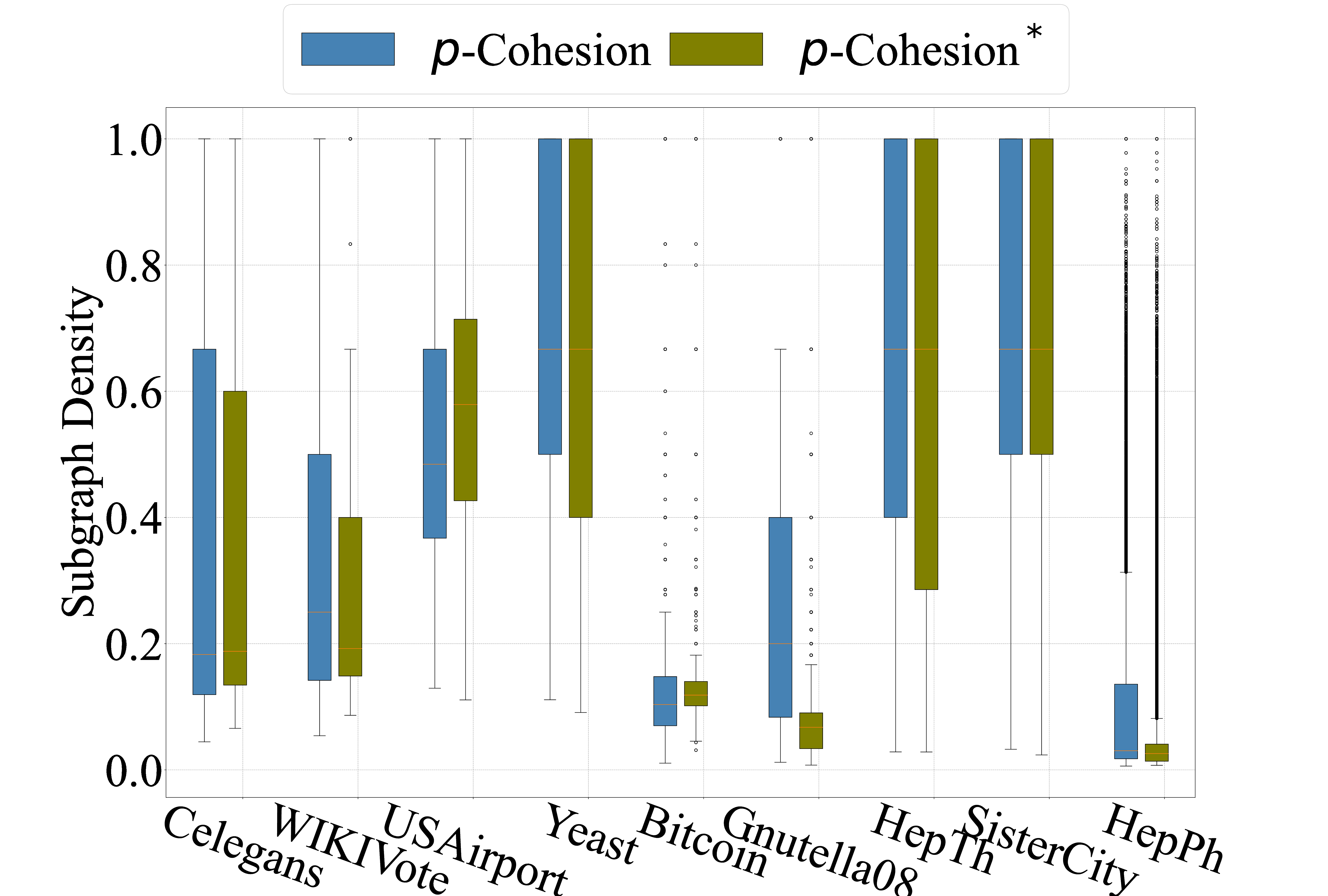}
        \caption{{\revision All Datasets, $p=0.3$}}
        \label{fig:eval:PCsDensity:all}
      \vspace{4mm}
    \end{subfigure}
    \hfill
    \begin{subfigure}{0.325\textwidth}
    \centering
        \includegraphics[width=\textwidth]{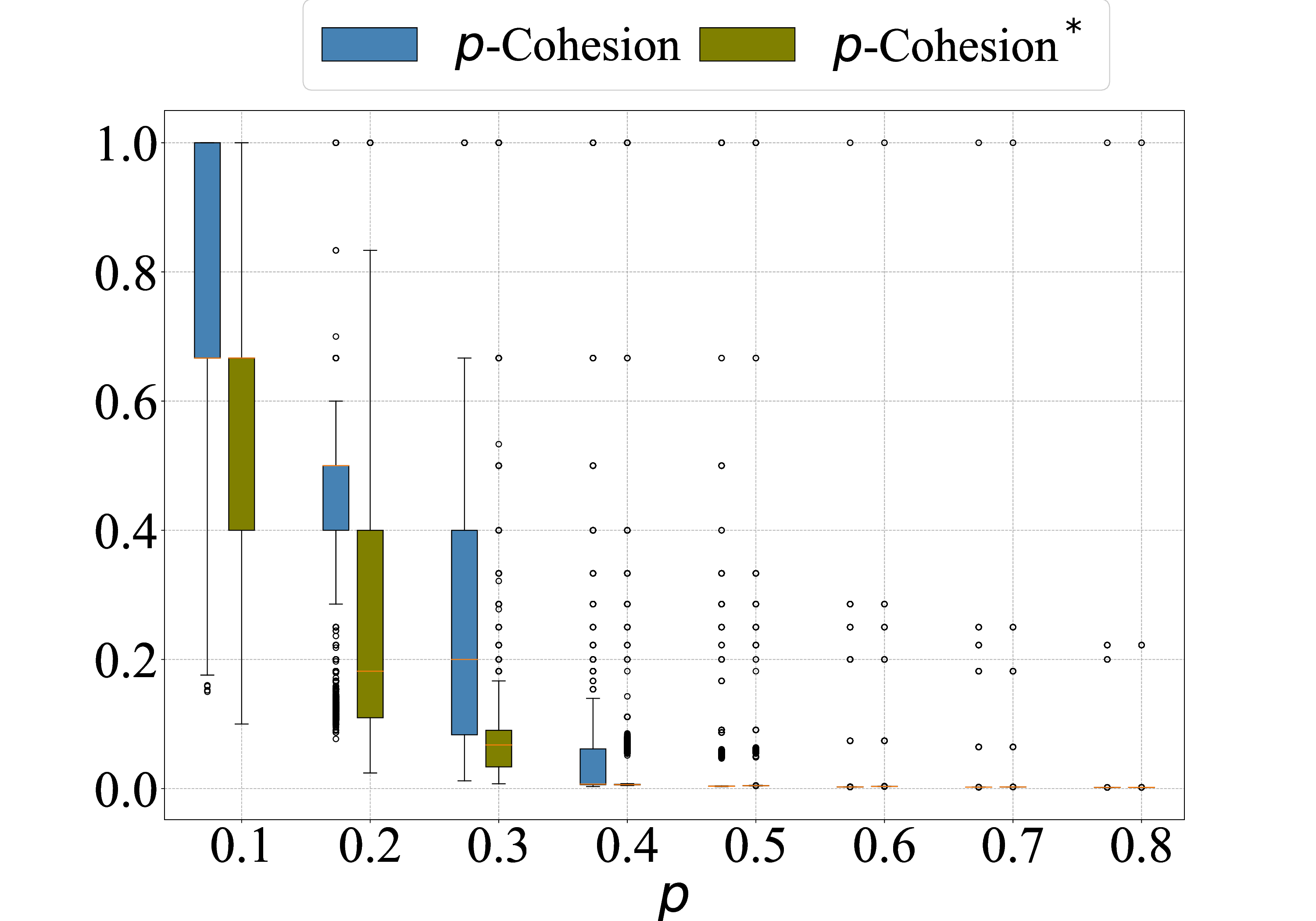}
        \caption{Gnutella08}
        \label{fig:eval:PCsDensity:gnutella08}
        \vspace{4mm}
    \end{subfigure}
    \hfill
    \begin{subfigure}{0.325\textwidth}
    \centering
        \includegraphics[width=\textwidth]{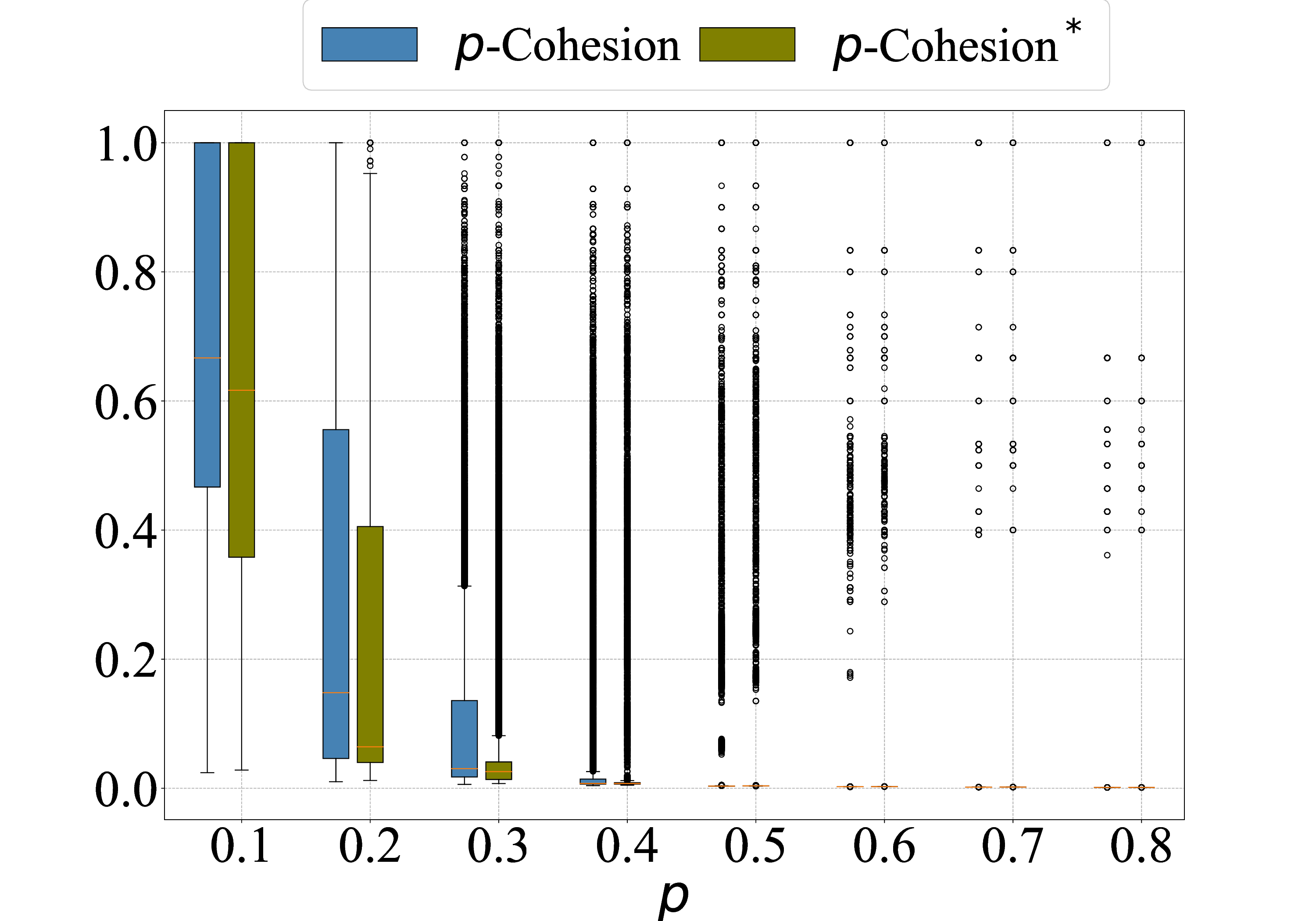}
        \caption{{\revision HepPh}}
        \label{fig:eval:PCsDensity:hepth}
        \vspace{4mm}
    \end{subfigure}
    \hfill
    \vspace{-6mm}
    \caption{\small{Density Distribution of Minimal $p$-Cohesions under Different Score Functions}}
    \label{fig:eval:PCsDensity}
\end{center}
\end{figure*}

\begin{figure*}[!ht]
\centering
    \begin{subfigure}{0.34\textwidth} 
    \centering
        \includegraphics[width=\textwidth]{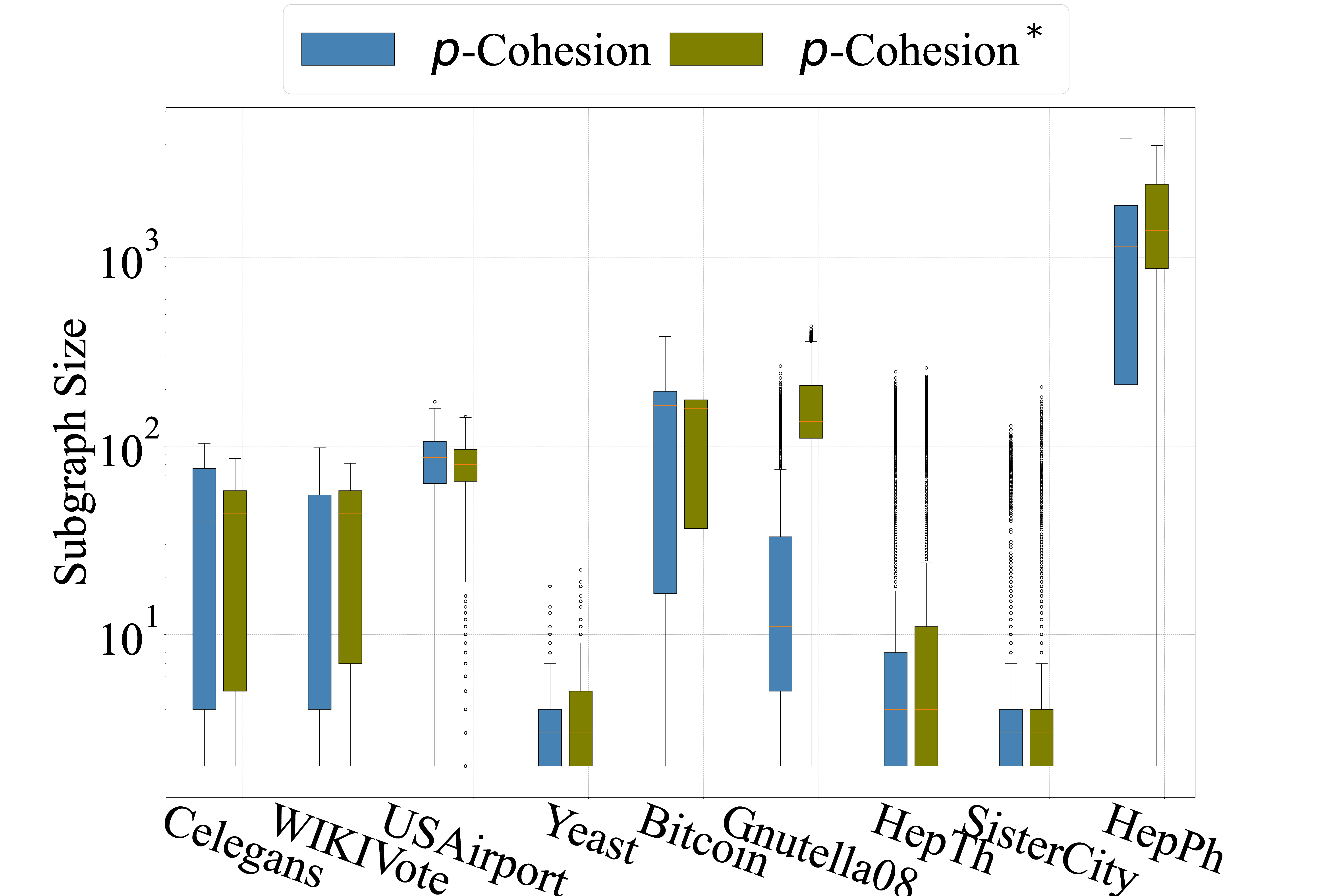}
        \caption{{\revision All Datasets, $p=0.3$}}
        \label{fig:eval:PCsSize:all}
      \vspace{4mm}
    \end{subfigure}
    \hfill
    \begin{subfigure}{0.325\textwidth}
    \centering
        \includegraphics[width=\textwidth]{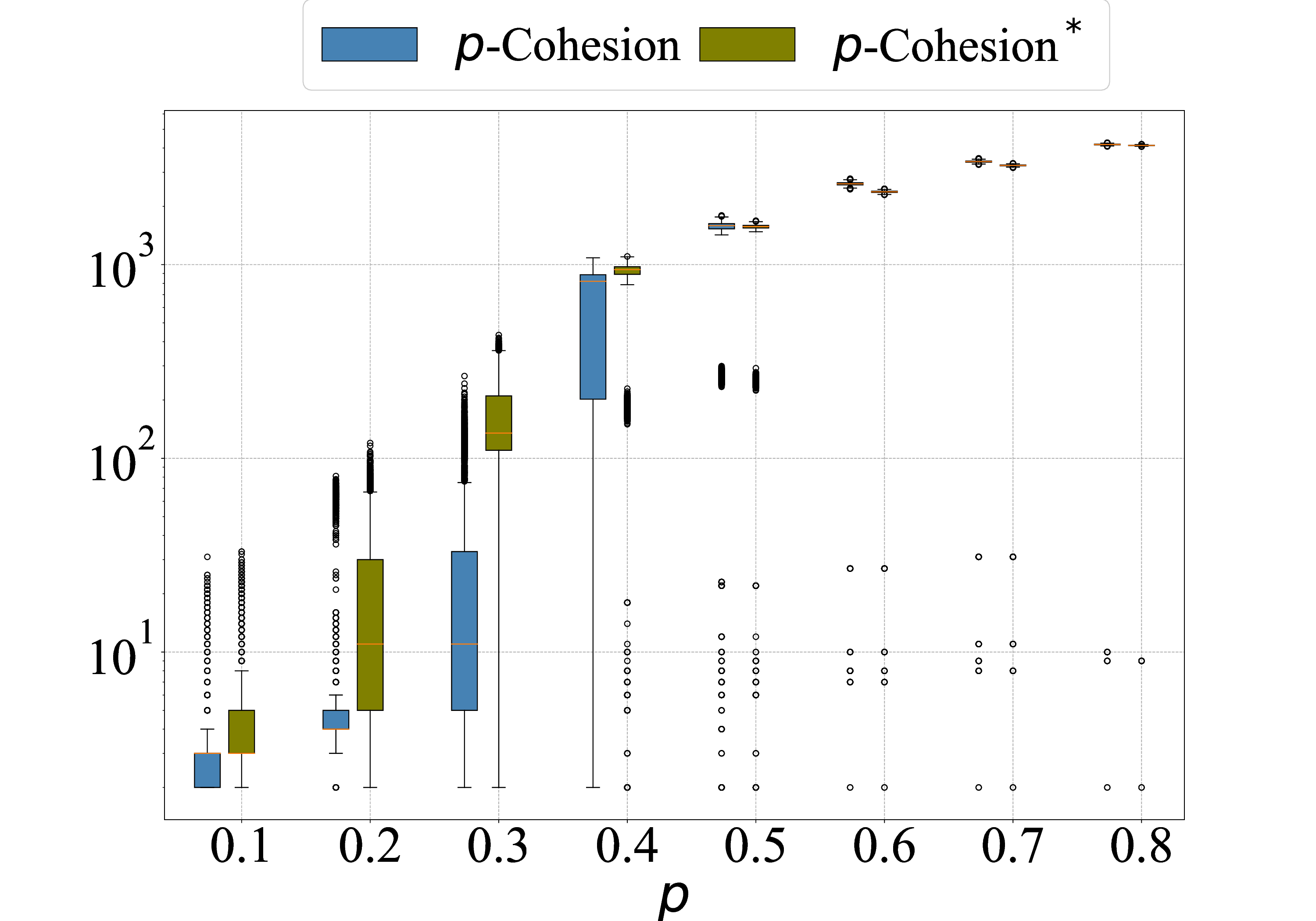}
        \caption{Gnutella08}
        \label{fig:eval:PCsSize:usairport}
        \vspace{4mm}
    \end{subfigure}
    \hfill
    \begin{subfigure}{0.325\textwidth}
    \centering
        \includegraphics[width=\textwidth]{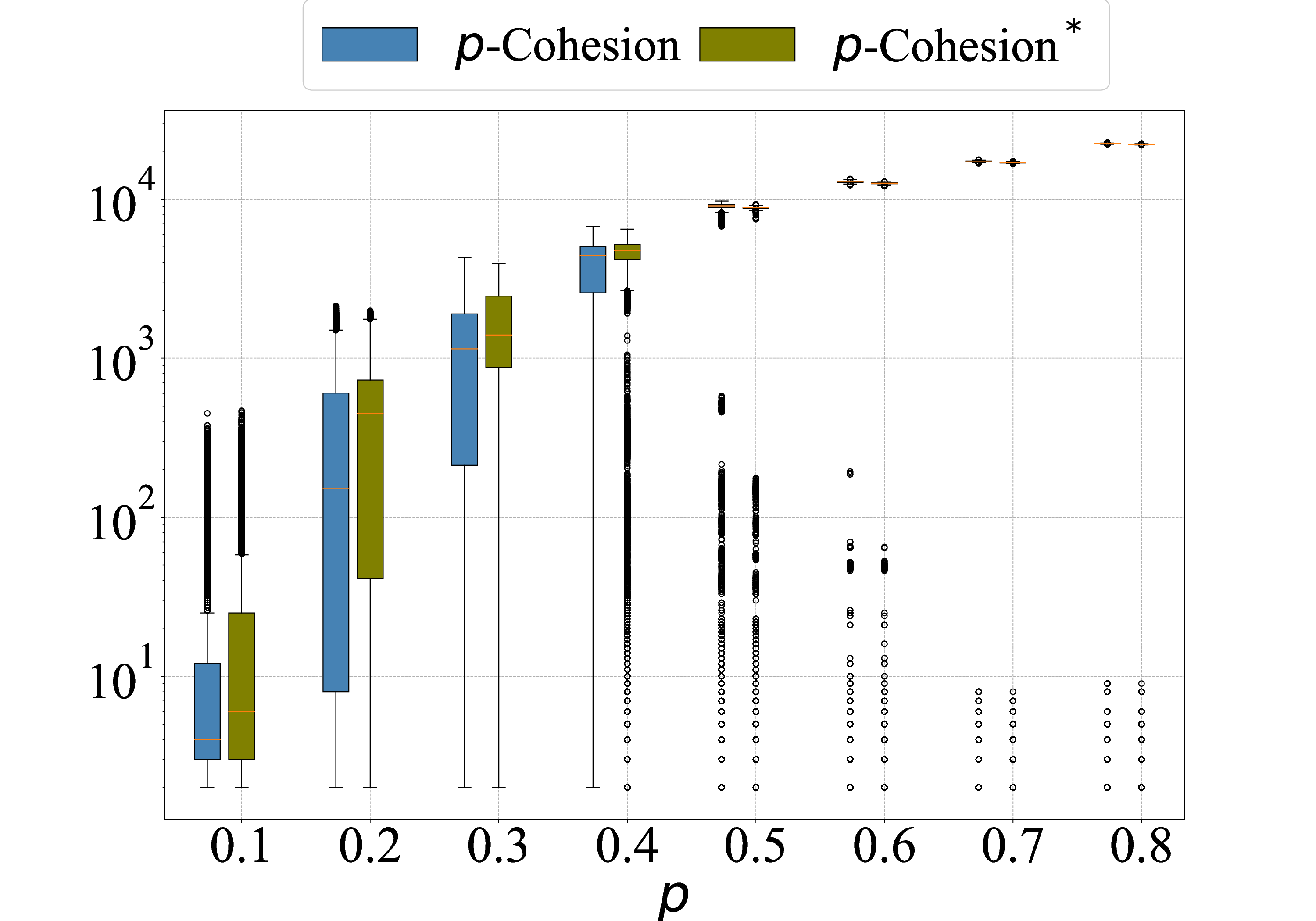}
        \caption{{\revision HepPh}}
        \label{fig:eval:PCsSize:sistercity}
        \vspace{4mm}
    \end{subfigure}
    \hfill
    \vspace{-6mm}
    \caption{\small{Size Distribution of Minimal $p$-Cohesions under Different Score Functions}}
    \label{fig:eval:PCsSize}
\end{figure*}
\begin{figure*}[!ht]
\centering
    \begin{subfigure}{0.34\textwidth} 
    \centering
        \includegraphics[width=\textwidth]{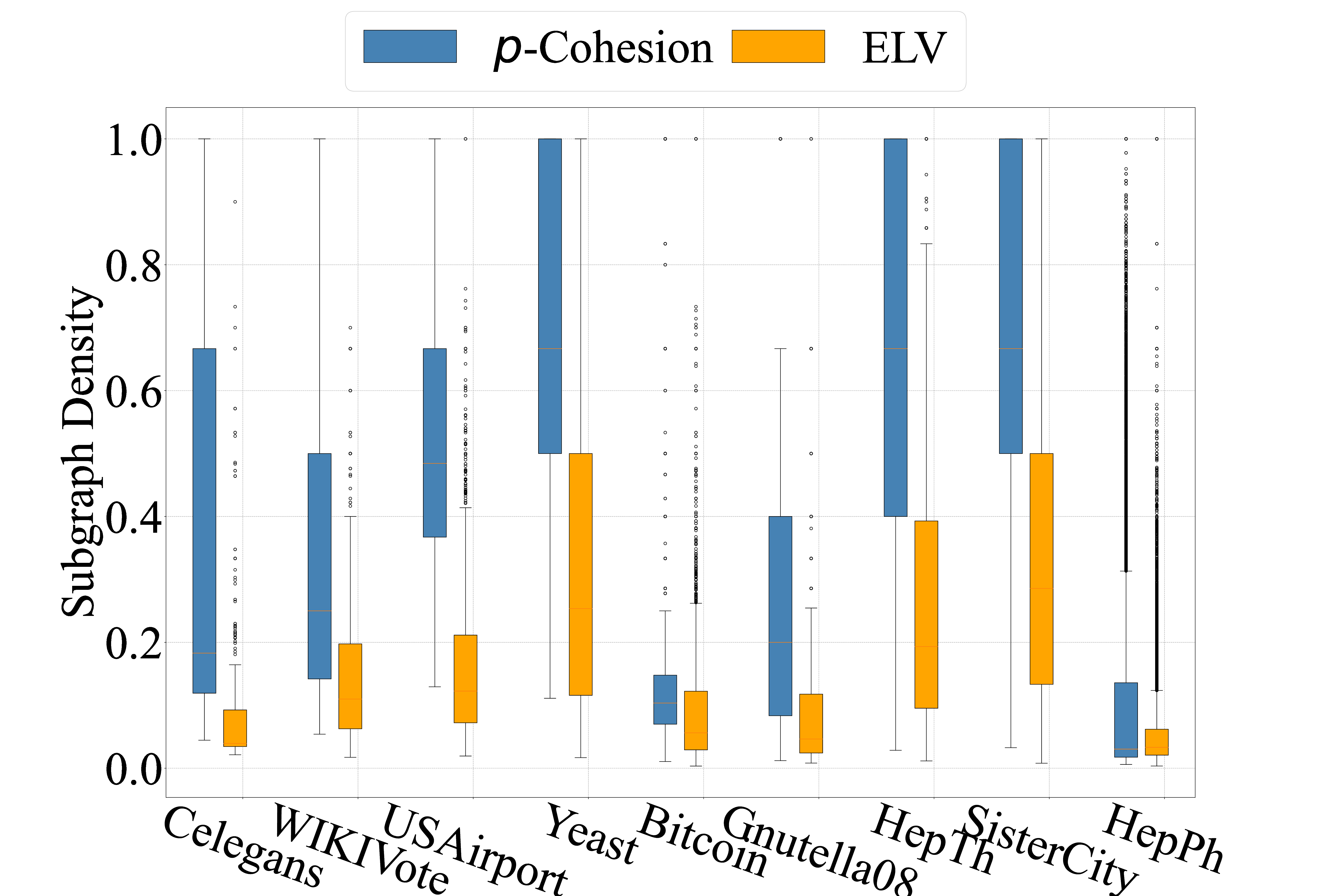}
        \caption{{\revision All Datasets, $p=0.3$}}
        \label{fig:eval:PCELVDen:all}
      \vspace{4mm}
    \end{subfigure}
    \hfill
    \begin{subfigure}{0.325\textwidth}
    \centering
        \includegraphics[width=\textwidth]{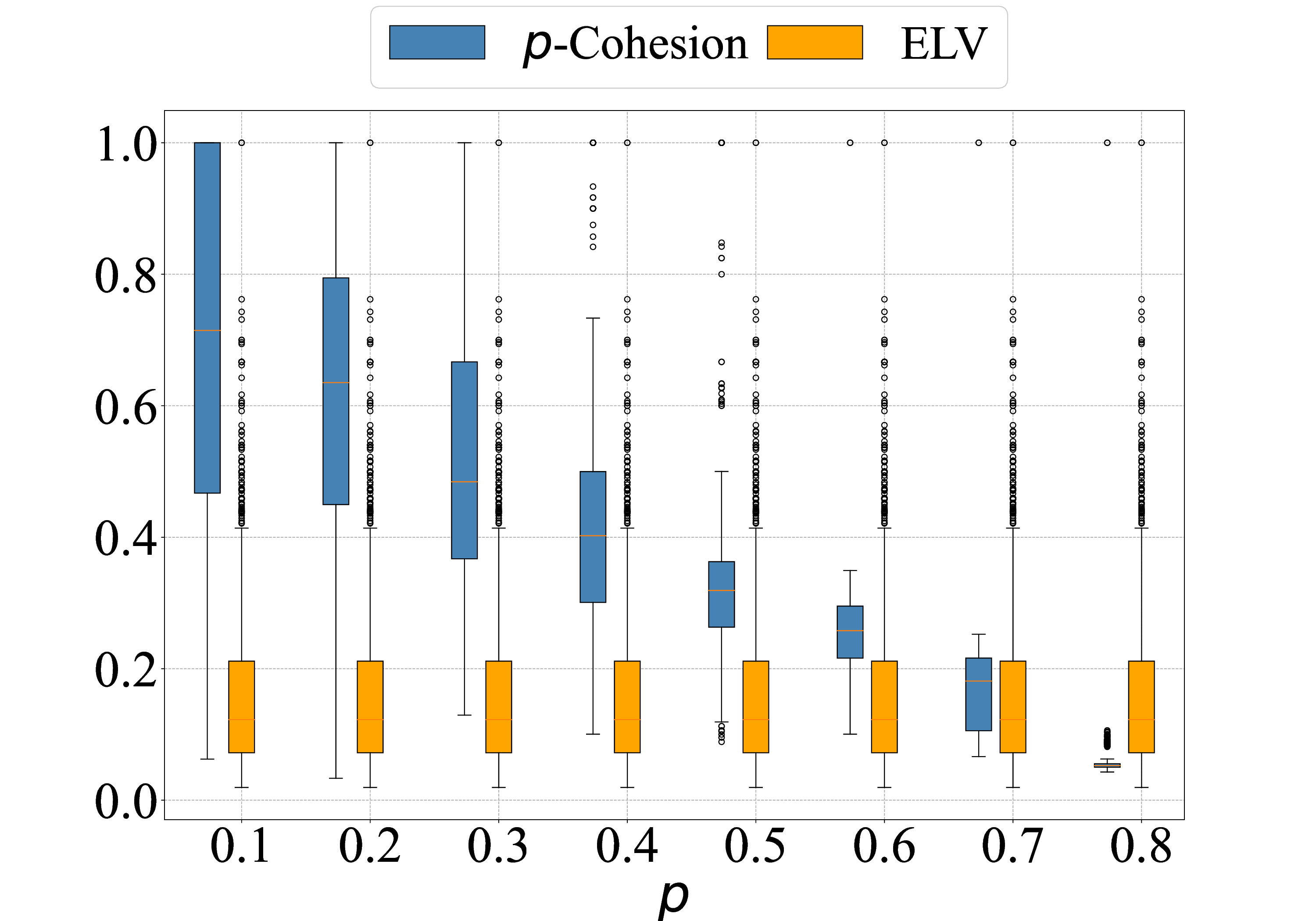}
        \caption{USAirport}
        \label{fig:eval:PCELVDen:usairport}
        \vspace{4mm}
    \end{subfigure}
    \hfill
    \begin{subfigure}{0.325\textwidth}
    \centering
        \includegraphics[width=\textwidth]{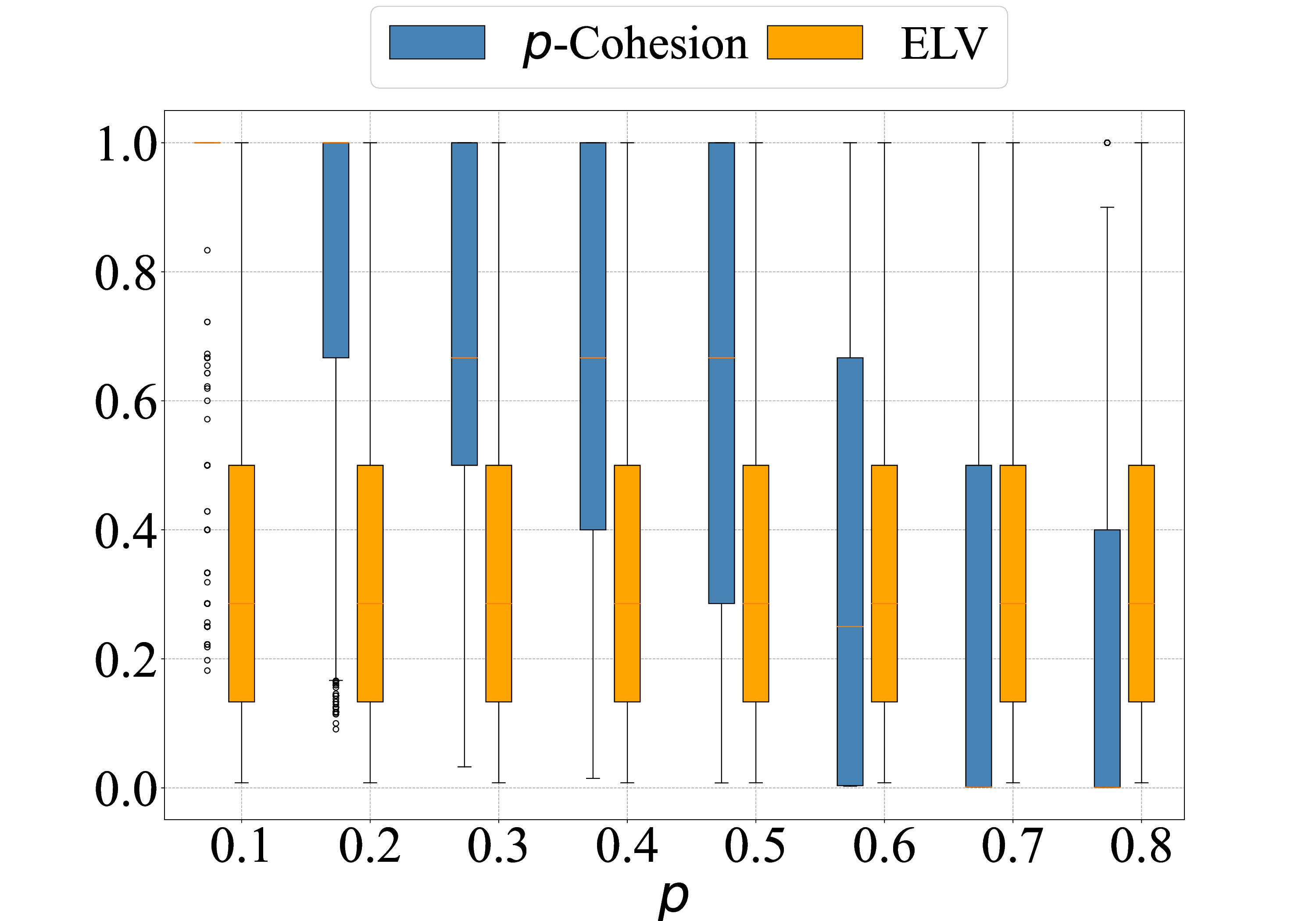}
        \caption{SisterCity}
        \label{fig:eval:PCELVDen:sistercity}
        \vspace{4mm}
    \end{subfigure}
    \hfill
    \vspace{-6mm}
    \caption{\small{Density Distribution of Minimal $p$-Cohesion and ELV}}
    \label{fig:eval:PCELVDen}
\end{figure*}
\begin{figure*}[!ht]
\centering
    \begin{subfigure}{0.34\textwidth} 
    \centering
        \includegraphics[width=\textwidth]{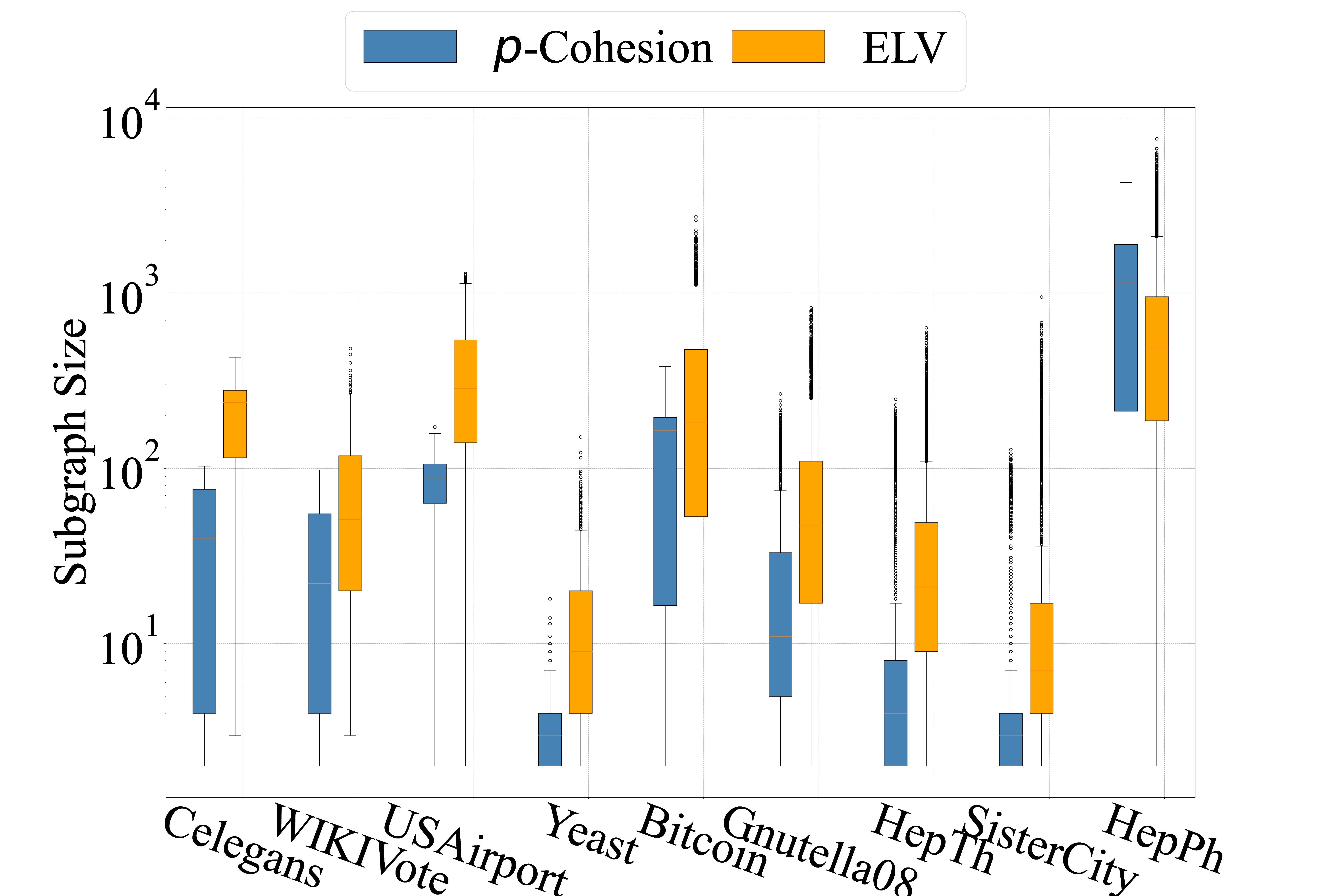}
        \caption{{\revision All Datasets, $p=0.3$}}
        \label{fig:eval:PCELVSize:all}
      \vspace{4mm}
    \end{subfigure}
    \hfill
    \begin{subfigure}{0.325\textwidth}
    \centering
        \includegraphics[width=\textwidth]{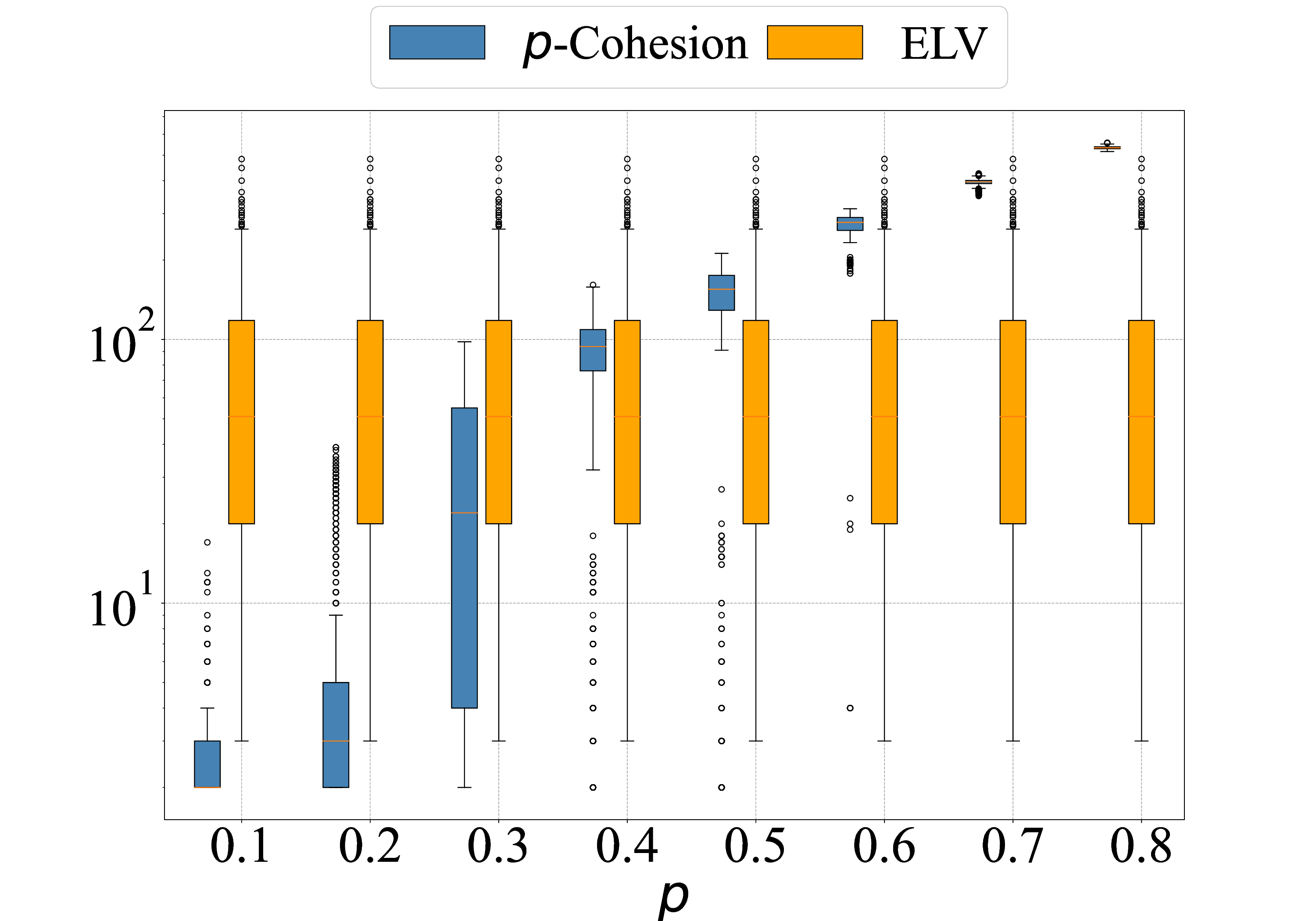}
        \caption{WIKIVote}
        \label{fig:eval:PCELVSize:wikivote}
        \vspace{4mm}
    \end{subfigure}
    \hfill
    \begin{subfigure}{0.325\textwidth}
    \centering
        \includegraphics[width=\textwidth]{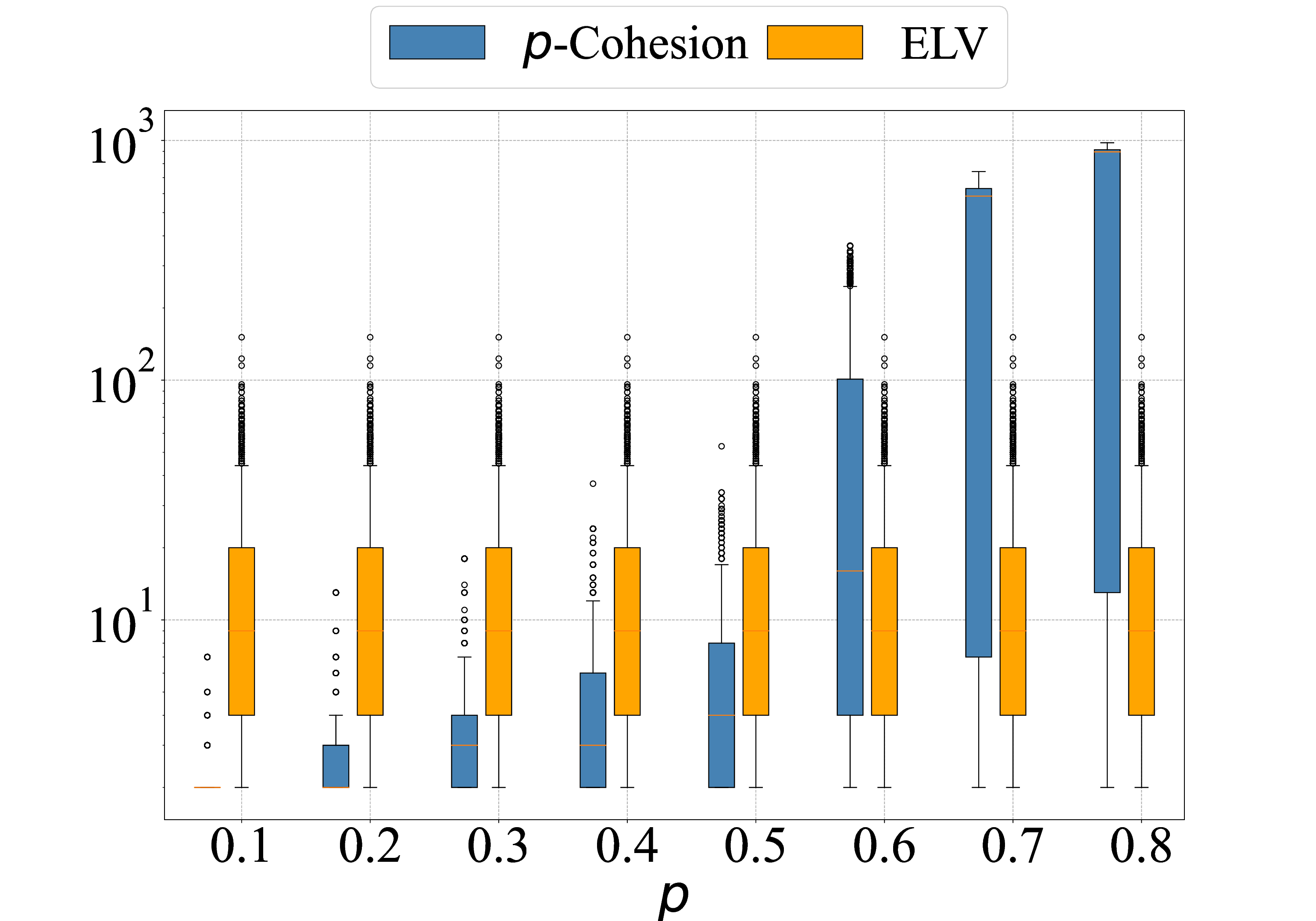}
        \caption{Yeast}
        \label{fig:eval:PCELVSize:yeast}
        \vspace{4mm}
    \end{subfigure}
    \hfill
    \vspace{-6mm}
    \caption{\small{Size Distribution of Minimal $p$-Cohesion and ELV}}
    \label{fig:eval:PCELVSize}
\end{figure*}
\subsection{Statistical Evaluation for Minimal $p$-Cohesions}
\label{sec:eval:stat}

\subsubsection{Minimal $p$-Cohesions with Different Score Functions}

For each vertex $v$, we identify its minimal \pc with Algorithms~\ref{alg:expand} and~\ref{alg:shrink}.
Different score functions (Line~\ref{alg:expand_3} of Algorithm~\ref{alg:expand}) result in different \pcs and minimal \pcs.
In the following, we show the performance of Algorithms~\ref{alg:expand} and \ref{alg:shrink} in both density and size.
\pcalg denotes Algorithm~\ref{alg:expand} running~(\ref{sec:mpc:eq:fw}) followed by Algorithm~\ref{alg:shrink}.
\pcalgOld replaces~(\ref{sec:mpc:eq:fw}) in \pcalg with~\cite[Eq.~(7)]{DBLP:journals/kais/LiZZQZL21}.

\vspace{1mm}
\noindent \underline{Density}.
Fig.~\ref{fig:eval:PCsDensity} reports the graph density distribution of minimal \pcs with different score functions.
\pcalg uses Eq.~(\ref{sec:mpc:eq:merit}) and Eq.~(\ref{sec:mpc:eq:penalty}), and \pcalgOld uses~\cite[Eqs.~(5) \& (6)]{DBLP:journals/kais/LiZZQZL21}.
Fig.~\ref{fig:eval:PCsDensity} displays the result over $9$ datasets at $p = 0.3$. As expected, the \pcalg outperforms \pcalgOld, because both the merit and penalty consider the advantage and disadvantage when including a new vertex $w$ to $V_p$.
For example, on \textit{Yeast} and \textit{HepTh}, all minimal \pcs returned by \pcalg are with larger density than the minimal \pcs returned by \pcalgOld.
{\revision
Figs.~\ref{fig:eval:PCsDensity:gnutella08} and~\ref{fig:eval:PCsDensity:hepth} report the result of two different score functions, with varying values of p ranging from $0.1$ to $0.8$ for \textit{Gnutella08} and \textit{HepPh}, respectively.
Both figures illustrate the density distributions of the minimal \pcs returned by \pcalg and \pcalgOld become similar as $p$ increases, since higher values of $p$ inherently entail a greater number of vertices within a \pc subgraph. Consequently, both \pcalg and \pcalgOld tend to yield comparable vertex sets.
As $p$ reaches large magnitudes, the results exhibit similarity due to the near-complete inclusion of the connected component.
}

\vspace{1mm}
\noindent \underline{Size}.
We show size distributions of all minimal \pcs returned by \pcalg and \pcalgOld in Fig.~\ref{fig:eval:PCsSize}.
Fig.~\ref{fig:eval:PCsSize:all} represents the result for $9$ datasets at $p = 0.3$. On most of the datasets, the minimal \pcs returned by \pcalg have similar sizes to \pcalgOld. On some graphs, \ie, \textit{Yeast} and \textit{HepTh}, our methods can find much smaller sizes minimal \pcs.
{\revision
Figs.~\ref{fig:eval:PCsSize:usairport} and~\ref{fig:eval:PCsSize:sistercity} report the size distributions with the two different score functions on \textit{Gnutella08} and \textit{HepPh}, respectively.
Both figures show that the sizes of minimal \pcs returned by the two methods increase as $p$ grows. This is because a larger value $p$ necessitates a greater number of vertices in a $p$-cohesion.
}

\subsubsection{Minimal $p$-Cohesion v.s. Extended Loccal View}

{\color{black}
Haipei~\etal~\cite{DBLP:conf/ccs/SunXKYQWY19} used ELV as the local view for subgraph counting. But they could not capture critical connections.
We denote the connections identified by their two-hop ELV~\cite[Definition $2.2$]{DBLP:conf/ccs/SunXKYQWY19} as \elvalg.}
From the distributions of density and size, we compare the difference between minimal \pc and ELV.
Here, \pcalg denotes the Algorithm~\ref{alg:expand} equipped with Eq.~(\ref{sec:mpc:eq:fw}), followed by Algorithm~\ref{alg:shrink}.

\vspace{1mm}
\noindent \underline{Density}.
Fig.~\ref{fig:eval:PCELVDen} reports the density distributions comparison between \pcalg and \elvalg.
Fig.~\ref{fig:eval:PCELVDen:all} displays the result across $9$ datasets at $p = 0.3$.
The \pcalg significantly outperforms \elvalg in the subgraph density because the \pcalg can identify a subgraph induced by the critical connections of a vertex. 
For example, on \textit{USAirport}, all vertices' subgraphs returned by \pcalg have larger densities than ELV.
Figs.~\ref{fig:eval:PCELVDen:usairport} and~\ref{fig:eval:PCELVDen:sistercity} depict the result of distinct methods, showcasing variations in $p$ values ranging from $0.1$ to $0.8$ for \textit{USAirport} and \textit{SisterCity}, respectively.
Both figures illustrate that when $0.1 \leq p \leq 0.7$, many vertices' minimal \pcs are with larger density than ELV.
The density of \pcalg decreases as $p$ increases, as a larger value of $p$ inherently necessitates a greater number of vertices in a p-cohesion.
{\color{black} However, when $p > 0.7$, the minimal \pcs of most vertices exhibit lower densities compared to their ELVs.
This is because a large $p$ value results in the retrieval of nearly the entire graph.
}

\vspace{1mm}
\noindent \underline{Size}.
In Fig.~\ref{fig:eval:PCELVSize}, we report the size distributions of \pcalg and \elvalg.
Fig.~\ref{fig:eval:PCELVSize:all} reports the result over $9$ datasets with $p = 0.3$. On all datasets and all vertices, all minimal \pcs returned by \pcalg are with smaller sizes than \elvalg.
Figs.~\ref{fig:eval:PCELVSize:wikivote} and~\ref{fig:eval:PCELVSize:yeast} report the size distributions of the two methods on \textit{WIKIVote} and \textit{Yeast}, as $p$ increases from $0.1$ to $0.8$.
Similarly, within both figures, the sizes of minimal \pcs yielded by \pcalg exhibit growth as $p$ increases.
This phenomenon arises due to the greater number of vertices required for a larger $p$ in a \pc.
When $p$ is large enough, \ie, $p > 0.5$ (\resp $p > 0.6$) on \textit{WIKIVote} (\resp \textit{Yeast}), the \pcalg will return larger subgraphs than \elvalg since the minimal \pc needs more vertices to meet the $p$ constraint.

\begin{figure}[htb]
\centering
    \begin{subfigure}{0.36\columnwidth}
    \vspace{-2mm}
    \centering
        \includegraphics[width=\hsize]{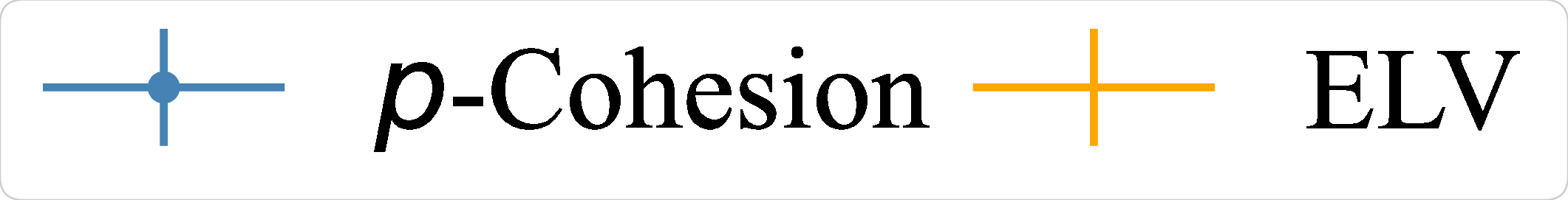}
    \end{subfigure}
    \hfill

  \begin{subfigure}{0.45\columnwidth}
  \centering
        \includegraphics[width=\hsize]{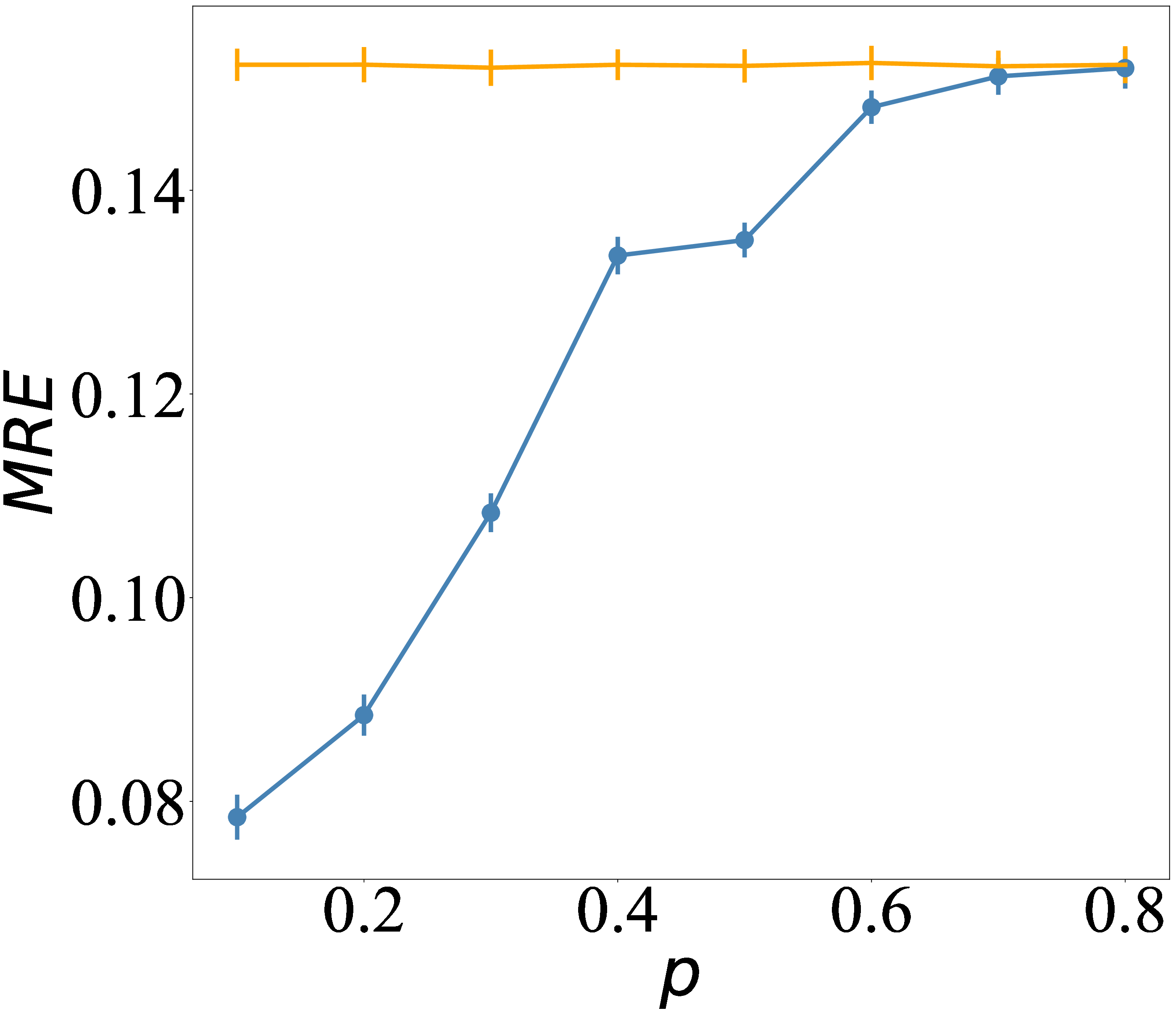}
        \caption{$h = 1$, $\varepsilon = 3$ ($\varepsilon_1 = 0.3$)}
        \label{fig:eval:hepth:h1v2}
        \vspace{4mm}
    \end{subfigure}
    \hfill
    \begin{subfigure}{0.45\columnwidth}
    \centering
        \vspace{3mm}
        \includegraphics[width=\hsize]{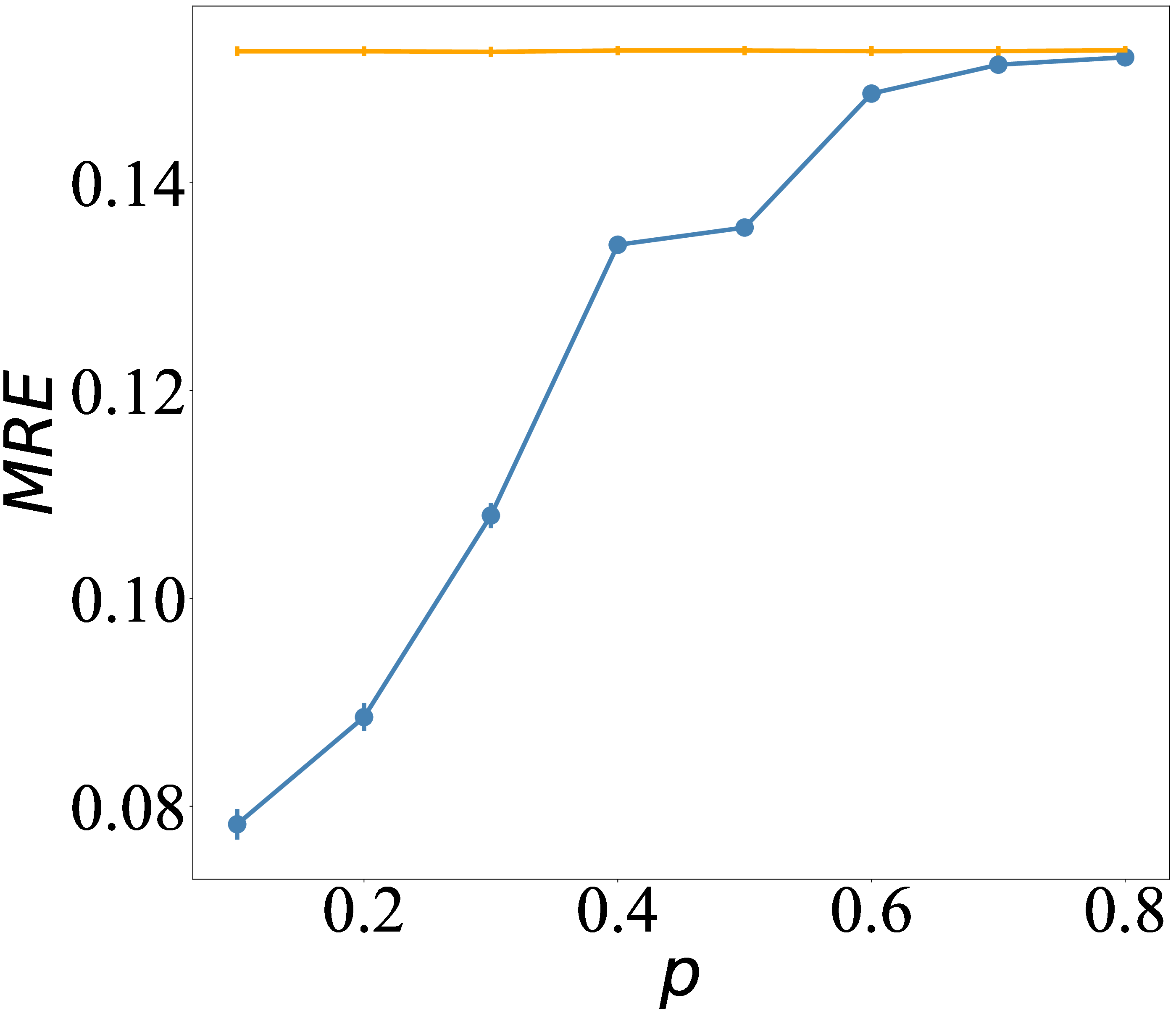}
        \caption{$h = 1$, $\varepsilon = 10$ ($\varepsilon_1 = 1.0$)}
        \label{fig:eval:hepth:h1v10}
        \vspace{4mm}
    \end{subfigure}
    \hfill
    \begin{subfigure}{0.45\columnwidth}
    \centering
        \includegraphics[width=\hsize]{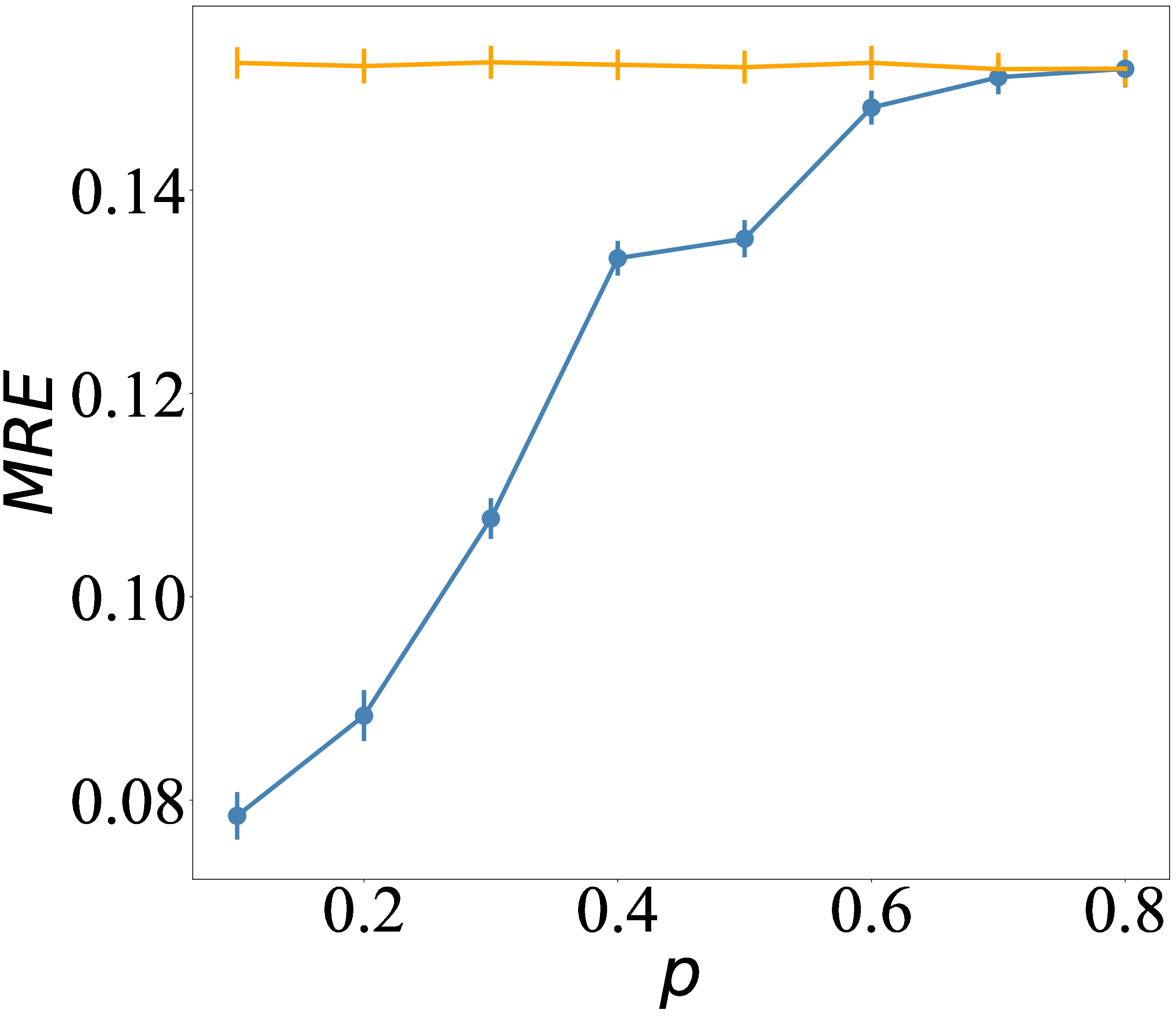}
        \caption{$h = 3$, $\varepsilon = 3$ ($\varepsilon_1 = 0.3$)}
        \label{fig:eval:hepth:h3v2}
        \vspace{4mm}
    \end{subfigure}
    \hfill
    \begin{subfigure}{0.45\columnwidth}
    \centering
        \vspace{3mm}
        \includegraphics[width=\hsize]{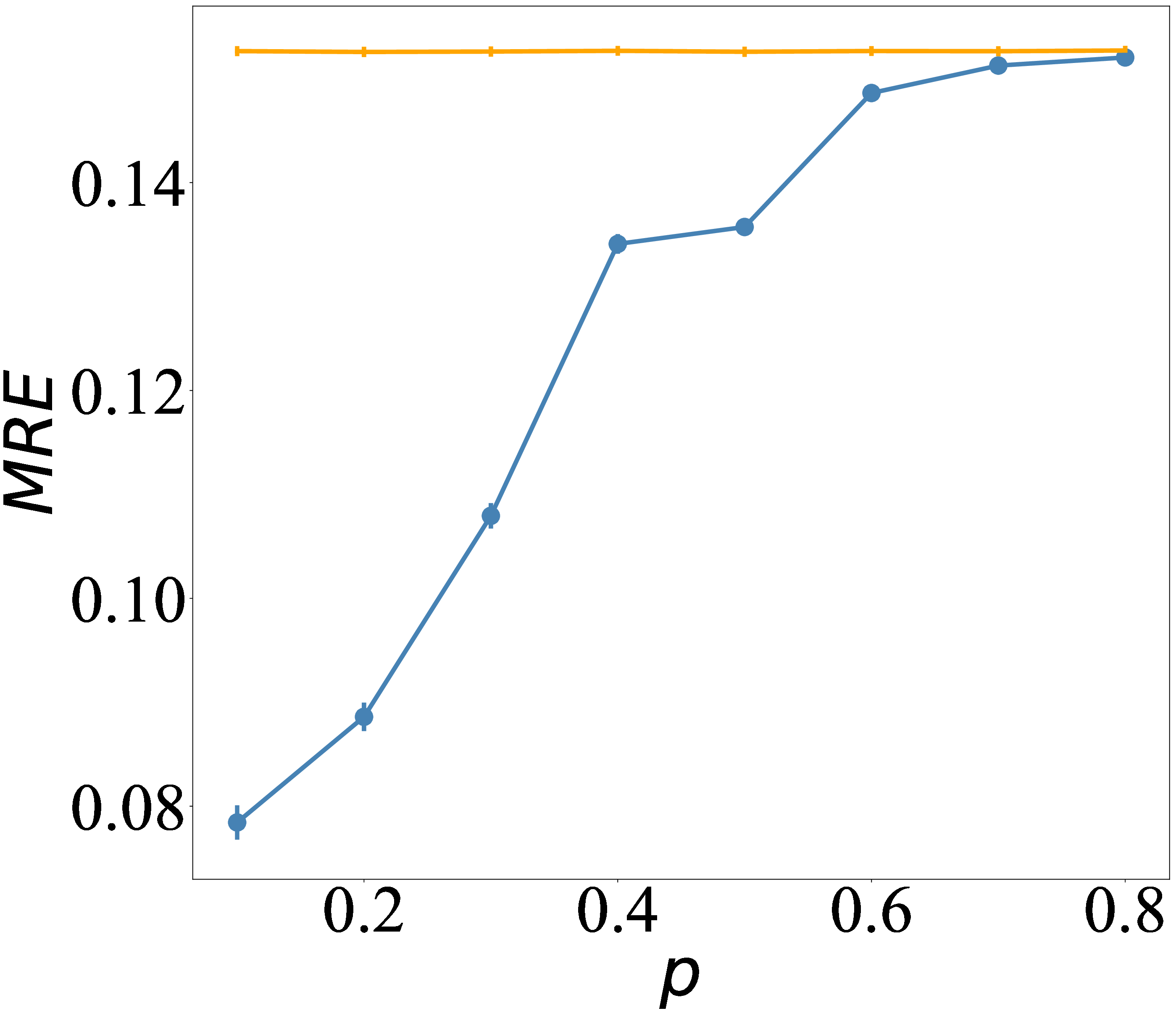}
        \caption{$h = 3$, $\varepsilon = 10$ ($\varepsilon_1 = 1.0$)}
        \label{fig:eval:hepth:h3v10}
        \vspace{4mm}
    \end{subfigure}
    \hfill
    \begin{subfigure}{0.45\columnwidth}
    \centering
        \includegraphics[width=\hsize]{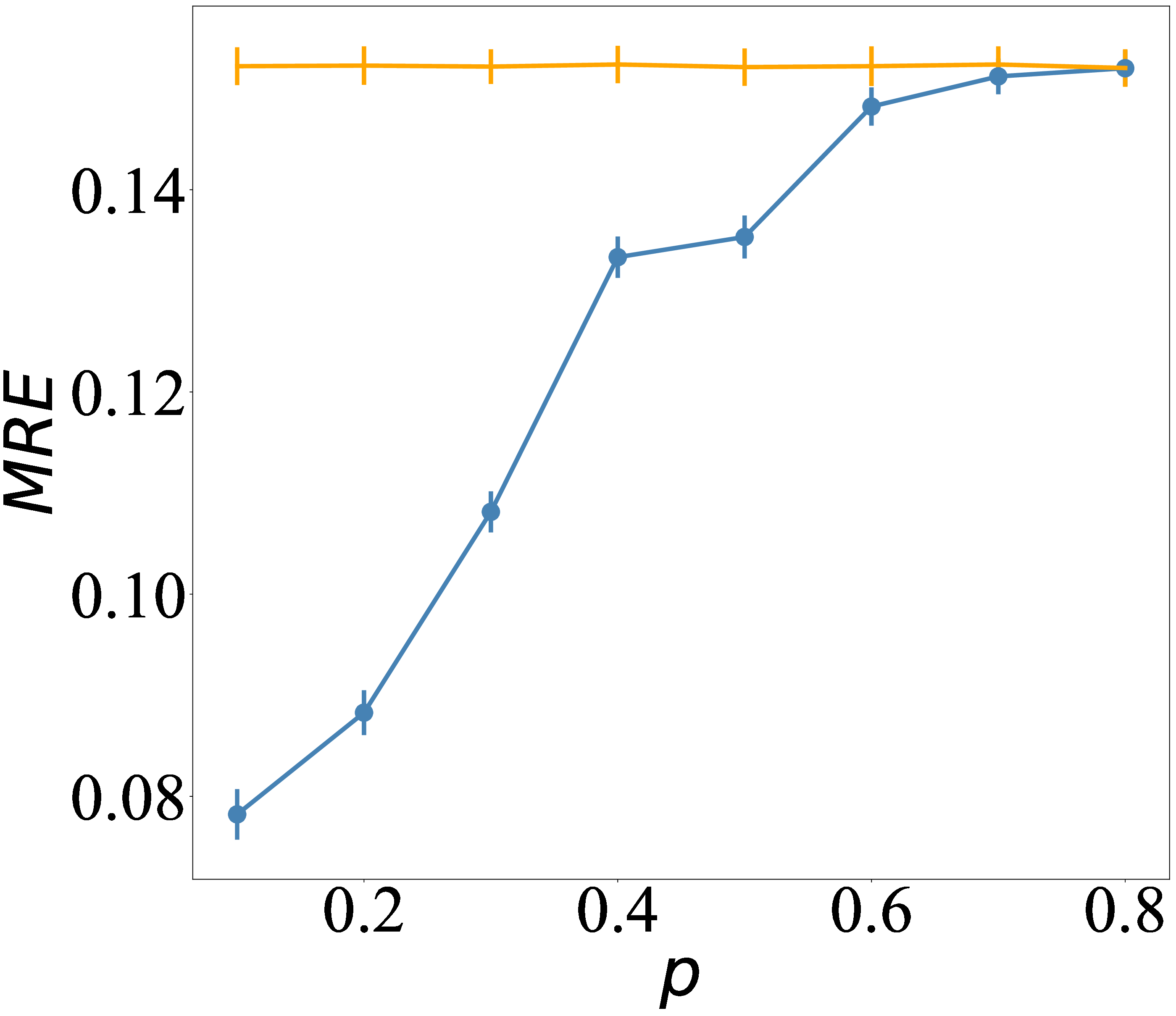}
        \caption{$h = 10$, $\varepsilon = 3$ ($\varepsilon_1 = 0.3$)}
        \label{fig:eval:hepth:h3v10}
        \vspace{4mm}
    \end{subfigure}
    \hfill
    \begin{subfigure}{0.45\columnwidth}
    \centering
        \vspace{3mm}
        \includegraphics[width=\hsize]{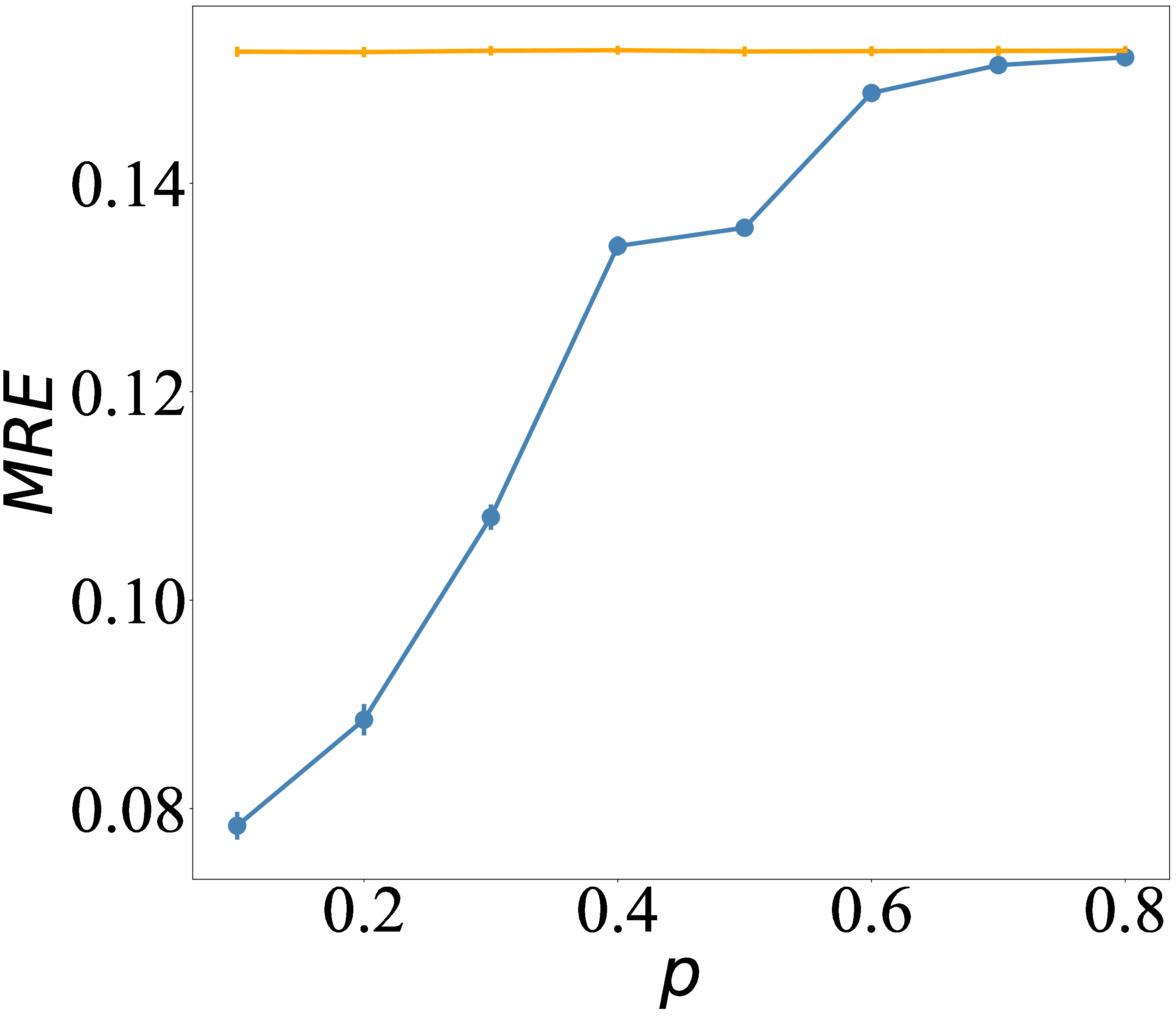}
        \caption{$h = 10$, $\varepsilon = 10$ ($\varepsilon_1 = 1.0$)}
        \label{fig:eval:hepth:var10}
        \vspace{4mm}
    \end{subfigure}
    \hfill
    \caption{\small{$p$ Selection for $3$-Clique Counting on HepTh}}
    \label{fig:eval:hepth:p}
    \vspace{3mm}
\end{figure}

\begin{figure}[!htb]
\centering
    \begin{subfigure}{0.47\columnwidth}
        \centering
        \includegraphics[width=\hsize]{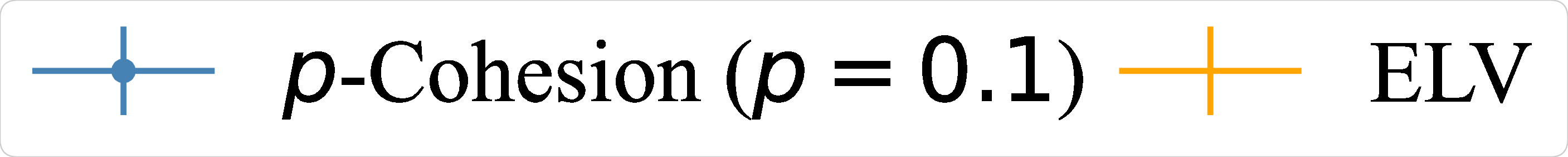}
    \end{subfigure}
    \hfill
    
    \begin{subfigure}{0.45\columnwidth}
    \centering
        \includegraphics[width=\hsize]{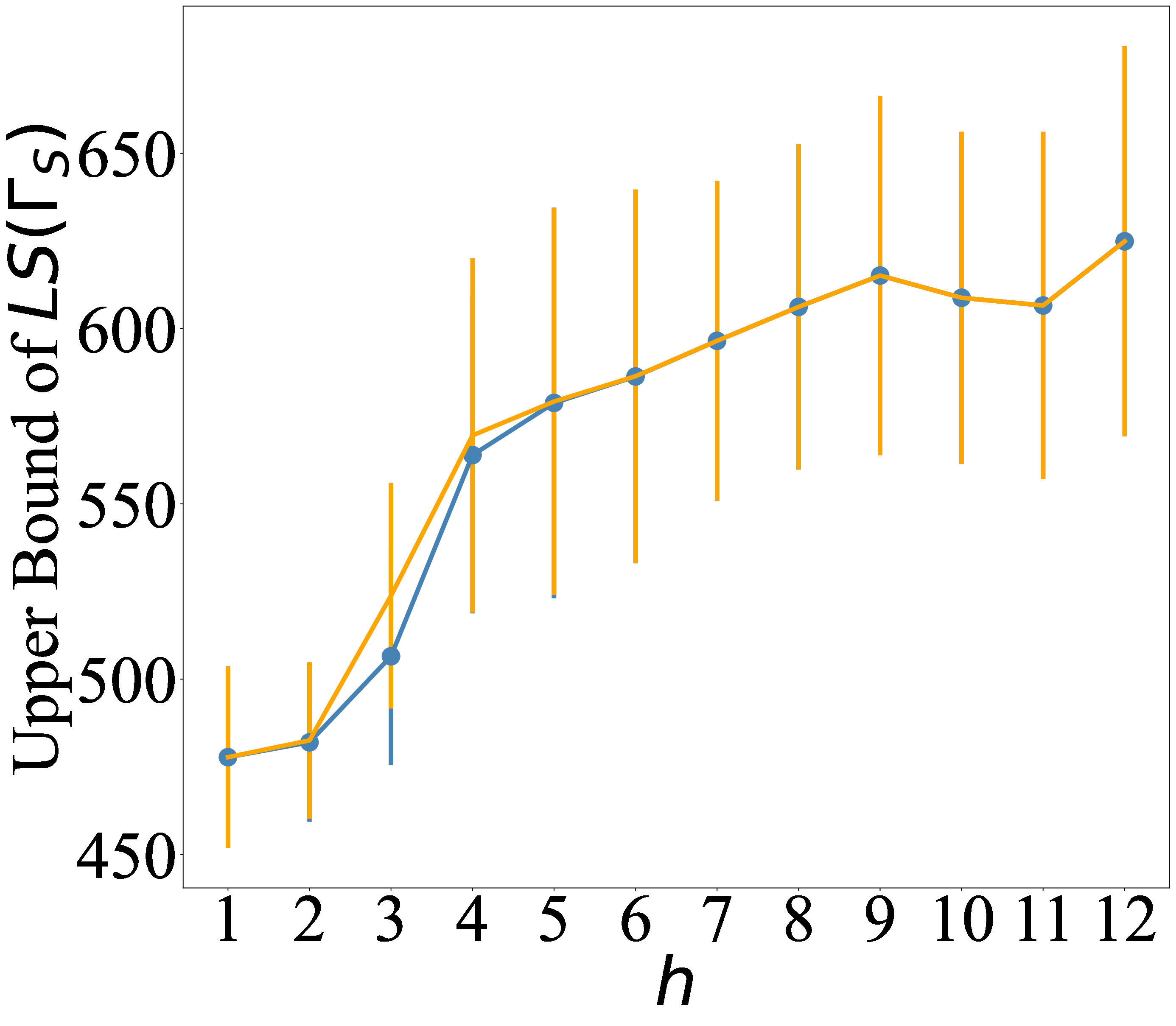}
        \caption{$\varepsilon = 1$ ($\varepsilon_1 = 0.1$)}
        \label{fig:eval:wikivote:var1}
        \vspace{4mm}
    \end{subfigure}
    \hfill
    \begin{subfigure}{0.45\columnwidth}
    \centering
        \vspace{3mm}
        \includegraphics[width=\hsize]{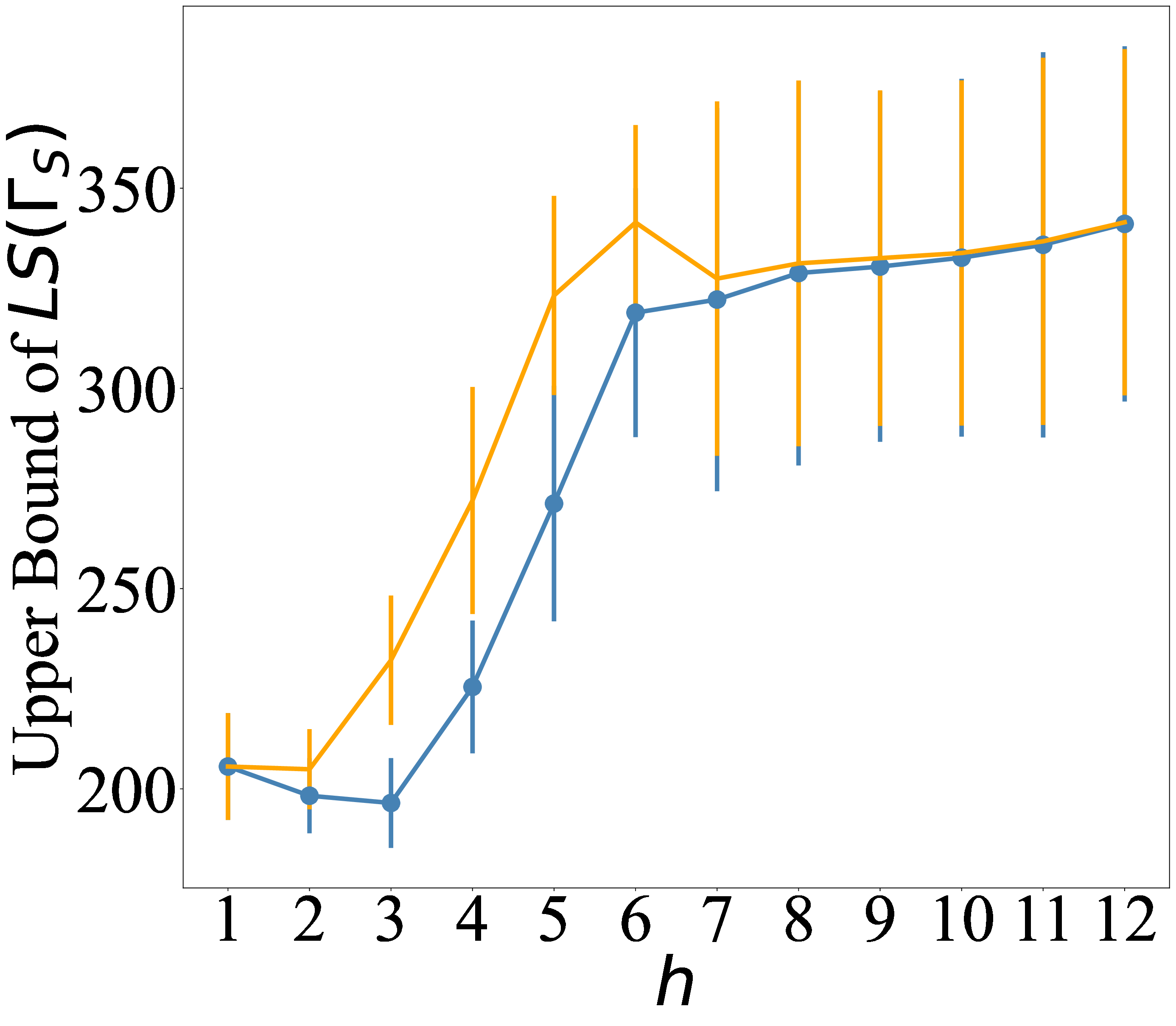}
        \caption{$\varepsilon = 3$ ($\varepsilon_1 = 0.3$)}
        \label{fig:eval:wikivote:var3}
        \vspace{4mm}
    \end{subfigure}
    \hfill
    \begin{subfigure}{0.45\columnwidth}
    \centering
        \includegraphics[width=\hsize]{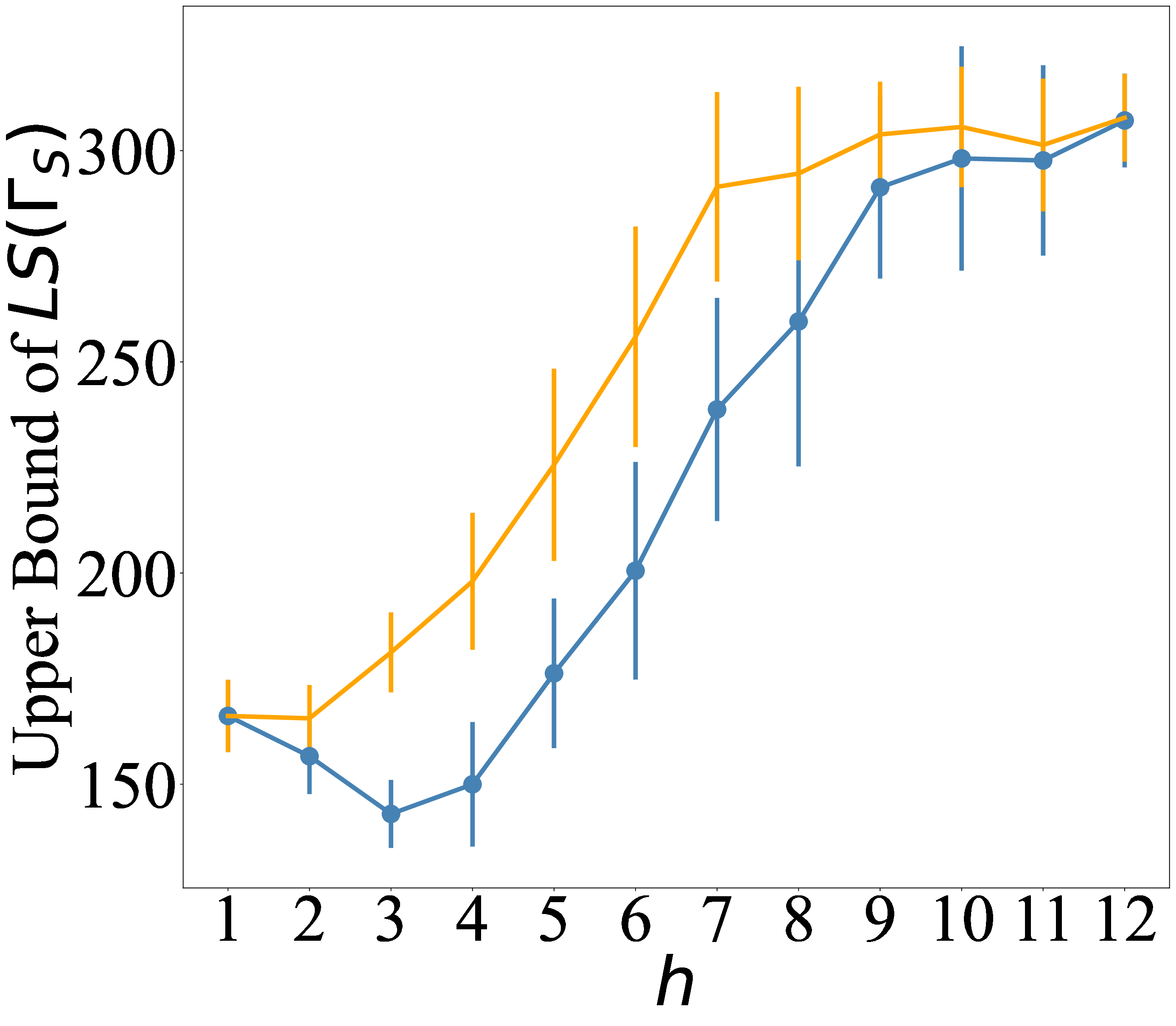}
        \caption{$\varepsilon = 5$ ($\varepsilon_1 = 0.5$)}
        \label{fig:eval:wikivote:var5}
        \vspace{4mm}
    \end{subfigure}
    \hfill
    \begin{subfigure}{0.45\columnwidth}
    \centering
        \vspace{3mm}
        \includegraphics[width=\hsize]{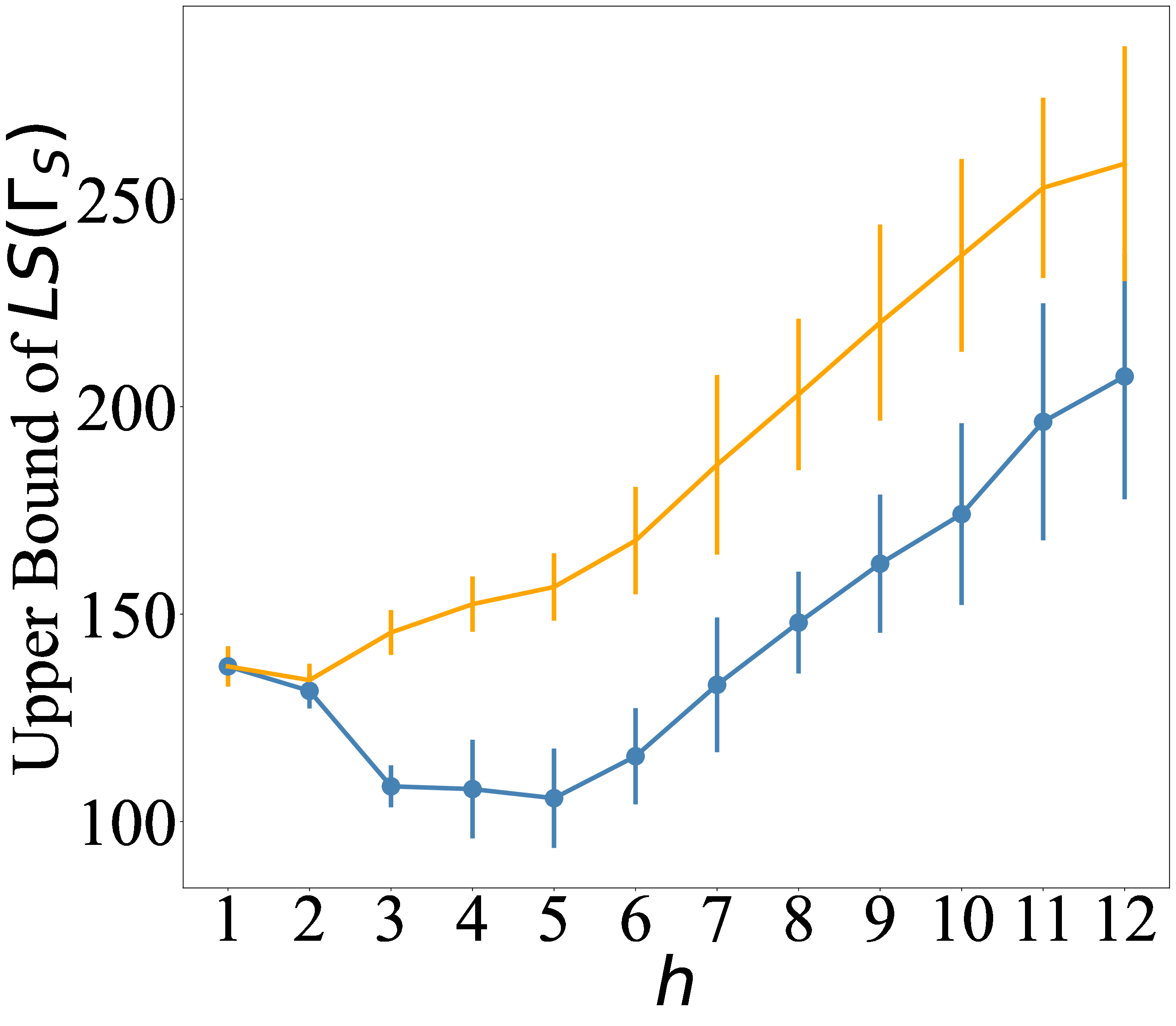}
        \caption{$\varepsilon = 10$ ($\varepsilon_1 = 1.0$)}
        \label{fig:eval:wikivote:var10}
        \vspace{4mm}
    \end{subfigure}
    \hfill
    \caption{\small{$h$ Selection for $3$-Clique Counting on Celegans, $p = 0.1$}}
    \label{fig:eval:wikivote_h}
    \vspace{3mm}
\end{figure}

\subsection{$k$-Clique Counting Under DDP}
\label{sec:eval:graphcount}

In this section, we focus on the influence of $h$, $p$, $\varepsilon$, and $\varepsilon_1$ over $k$-clique counting.
We use \underline{M}ean \underline{R}elative \underline{E}rror (\mre) to evaluate the performance when releasing responses, based on two different connection groups, for subgraphs counting queries.
We use \pcalg (\resp \elvalg) to denote the method for counting subgraphs, under the $(\varepsilon, \delta)$-DDP, based on the minimal \pcs (ELVs).
For subgraph $S$ counting, we have \mre $= \frac{|\Gamma_S^* - \Gamma_S|}{\Gamma_S}$, where $\Gamma_S$ is the real subgraph number and $\Gamma_S^*$ is the perturbed value regarding $\Gamma_S$. Each \mre reported is averaged over $100$ runs.

\subsubsection{$3$-Clique Counting}

We evaluate our algorithms for $3$-clique counting.

\vspace{1mm}
\noindent \underline{$p$ Selection}. Fig.~\ref{fig:eval:hepth:p} shows the average \mre under the $(\varepsilon, \delta)$-DDP for $3$-clique counting based on the subgraphs returned by \pcalg and \elvalg on \textit{HepTh} when $p$ varies from $0.1$ to $0.8$.
We show the results with $h = 1,3,10$ and $\varepsilon = 3.0, 10.0$ for all $p$ values.
As expected, when $p < 0.7$, the \pcalg outperforms \elvalg. The average \mre of \pcalg increases when $p$ becomes larger.
When $p > 0.4$, the \mre grows slowly because, for almost all the vertices, $p$ is large enough to include their connections within their corresponding minimal \pcs.
When $p \geq 0.7$, the ELVs will be subsets of the corresponding minimal \pcs.
It implies a reasonable $p$ can balance the data utility and privacy protection.


\begin{figure}[!htb]
\centering
    \begin{subfigure}{0.68\columnwidth}
    \centering
        \includegraphics[width=\hsize]{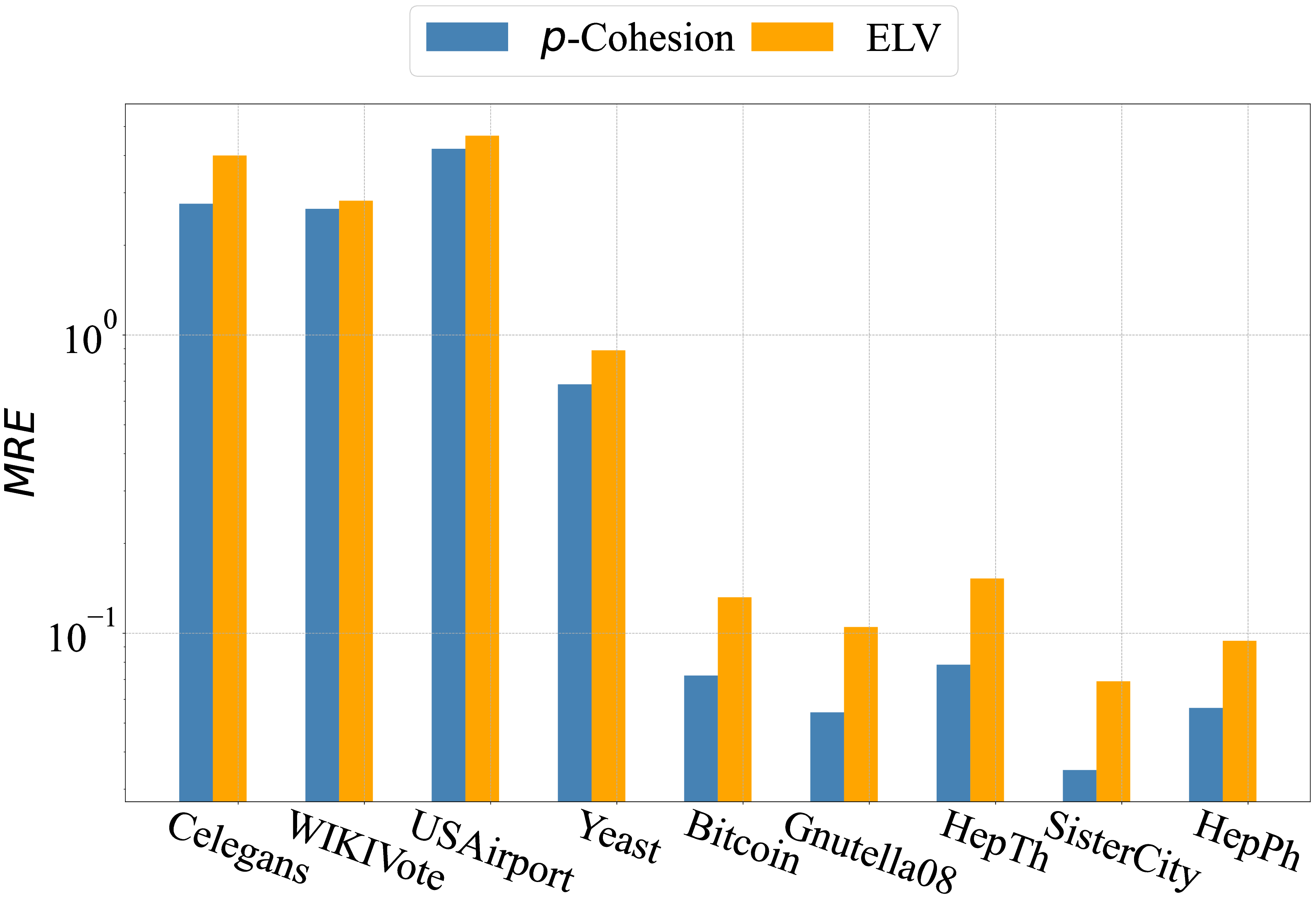}
        \caption{{\revision All Datasets with $h=3$, $\varepsilon = 10$ ($\varepsilon_1 = 1$)}}
        \label{fig:eval:USAirport:all}
        \vspace{4mm}
    \end{subfigure}
    \hfill

    \vspace{3mm}
    \begin{subfigure}{0.45\columnwidth}
    \centering
        \includegraphics[width=\hsize]{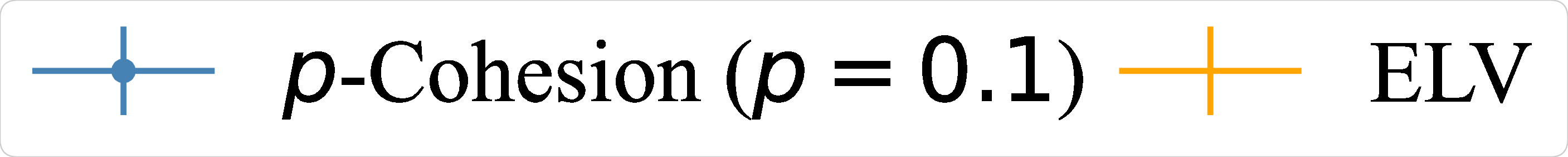}
    \end{subfigure}
    \hfill
    
    \begin{subfigure}{0.45\columnwidth}
    \centering
        \includegraphics[width=\hsize]{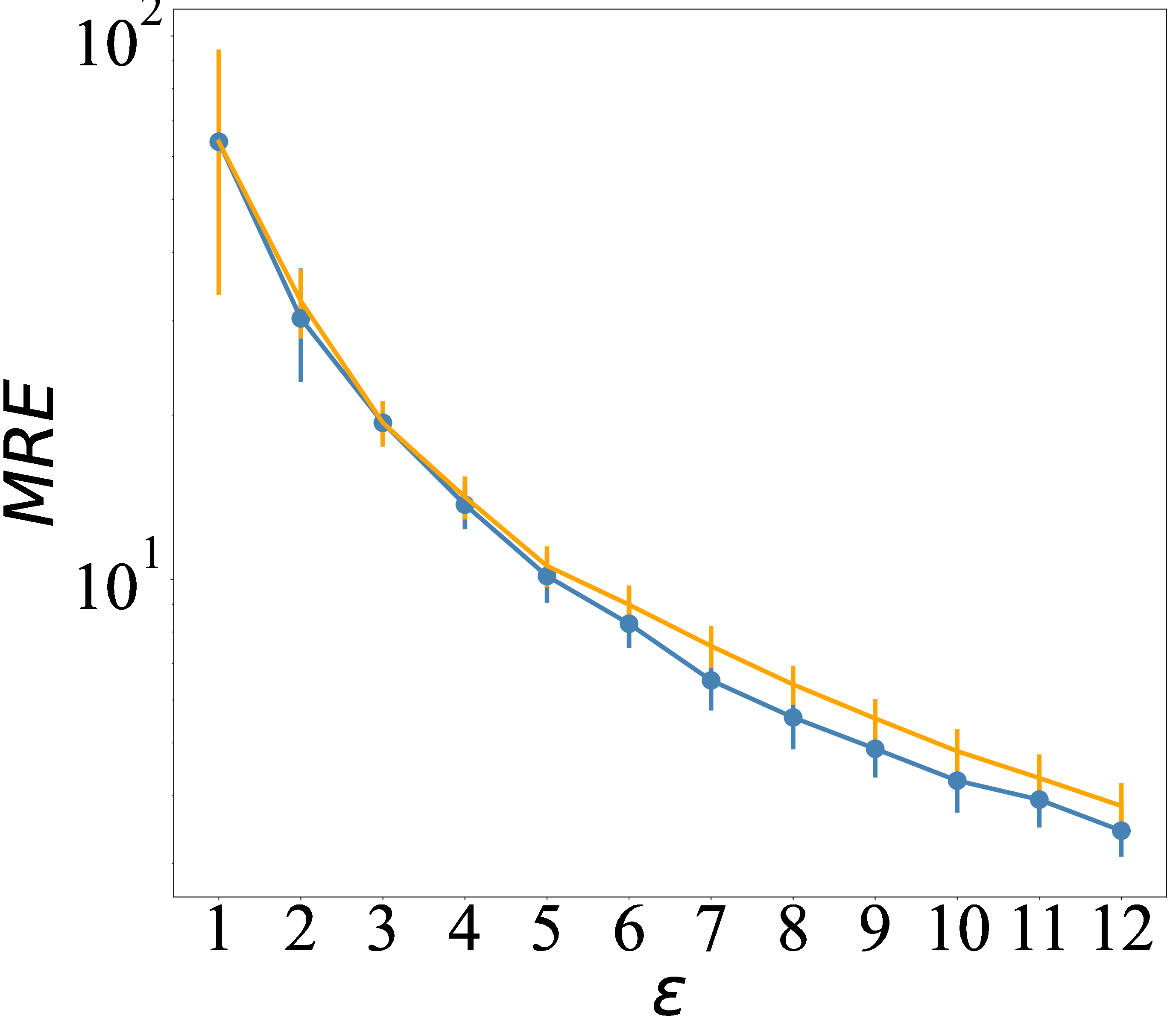}
        \caption{$h=1$ ($\varepsilon_1 = 0.1\varepsilon$)}
        \label{fig:eval:USAirport:h1}
        \vspace{4mm}
    \end{subfigure}
    \hfill
    \begin{subfigure}{0.45\columnwidth}
    \centering
        \vspace{3mm}
        \includegraphics[width=\hsize]{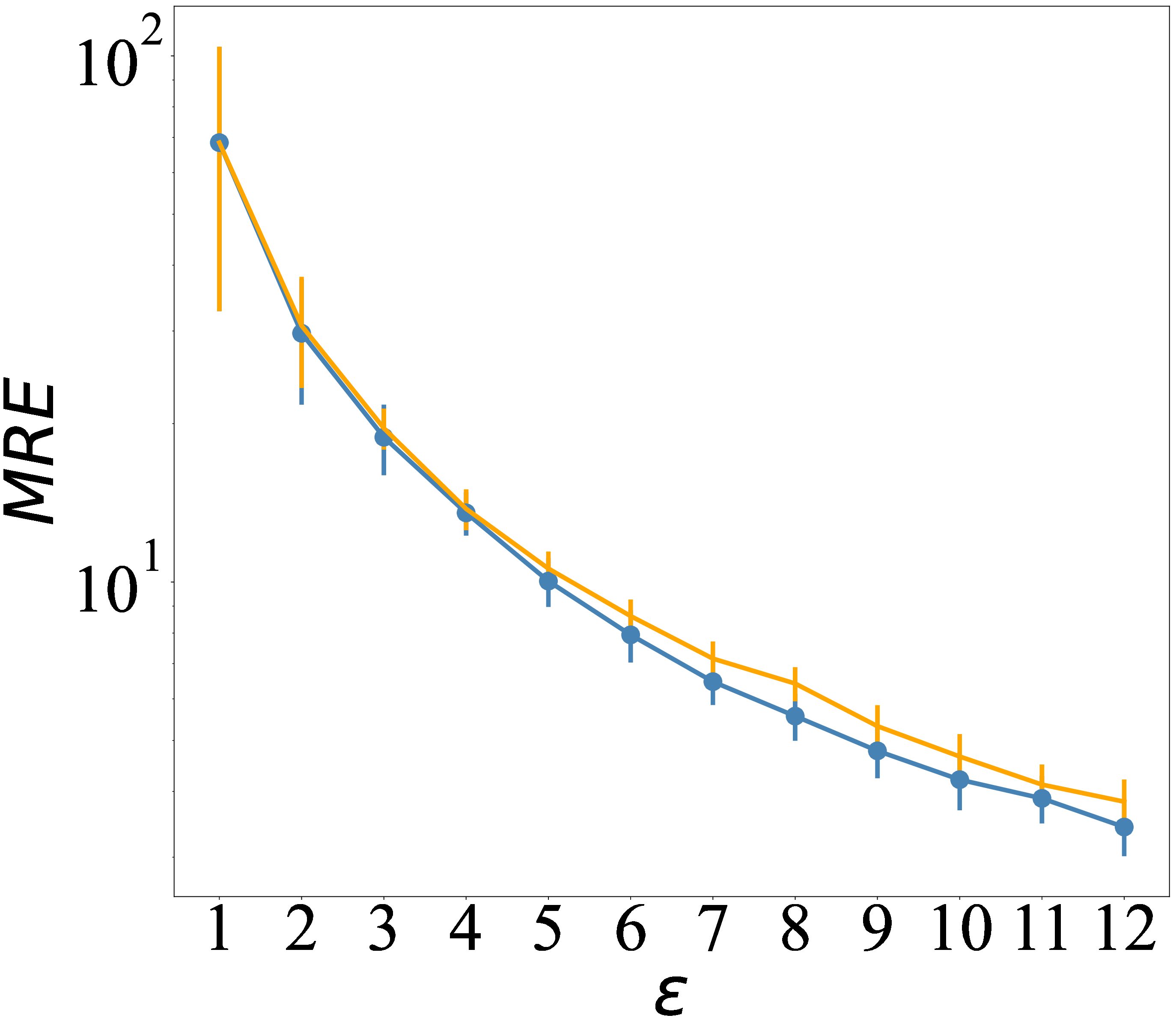}
        \caption{$h=3$ ($\varepsilon_1 = 0.1\varepsilon$)}
        \label{fig:eval:USAirport:h3}
        \vspace{4mm}
    \end{subfigure}
    \hfill
    \begin{subfigure}{0.45\columnwidth}
    \centering
        \includegraphics[width=\hsize]{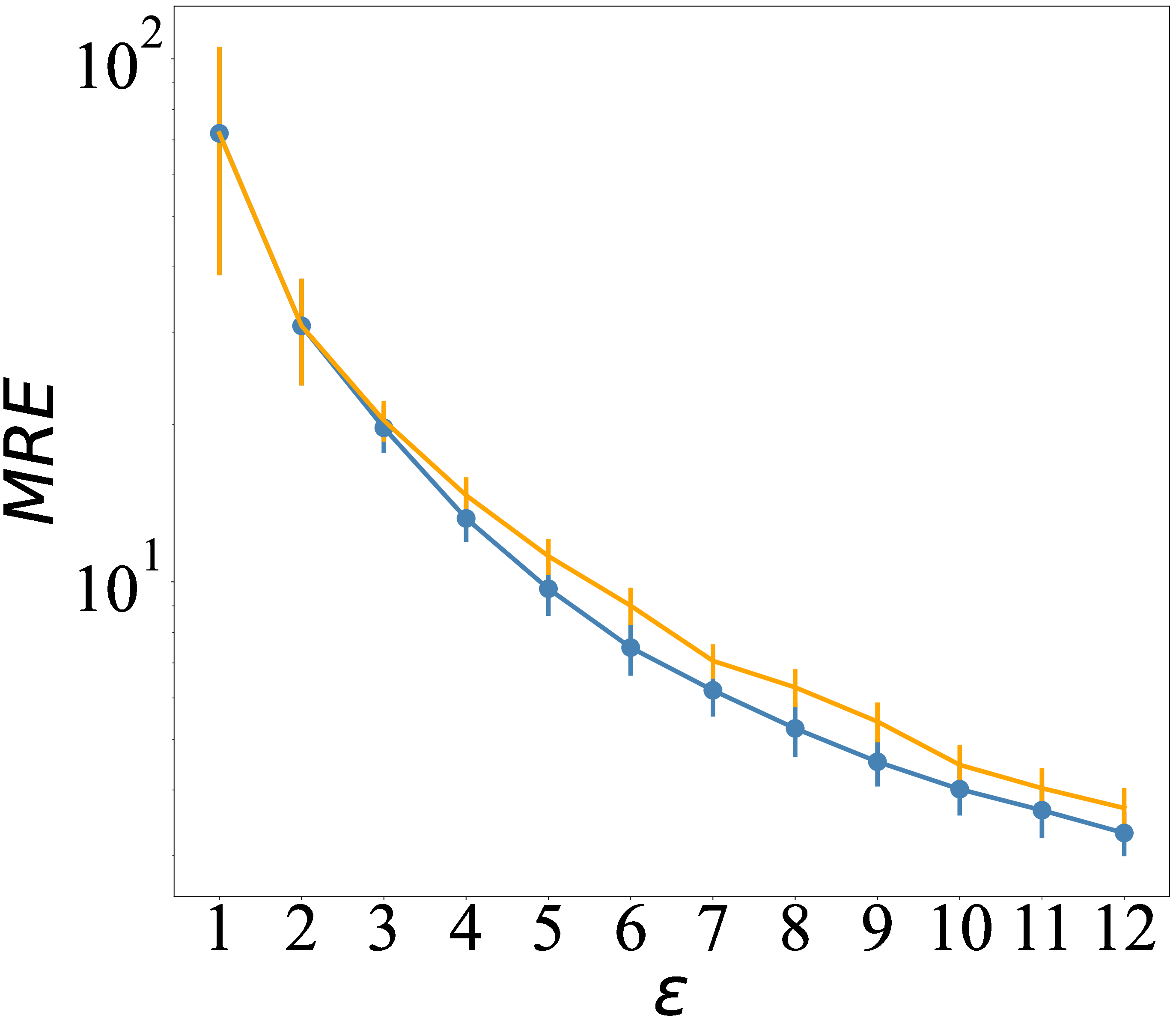}
        \caption{$h=5$ ($\varepsilon_1 = 0.1\varepsilon$)}
        \label{fig:eval:USAirport:h5}
        \vspace{4mm}
    \end{subfigure}
    \hfill
    \begin{subfigure}{0.45\columnwidth}
    \centering
        \vspace{3mm}
        \includegraphics[width=\hsize]{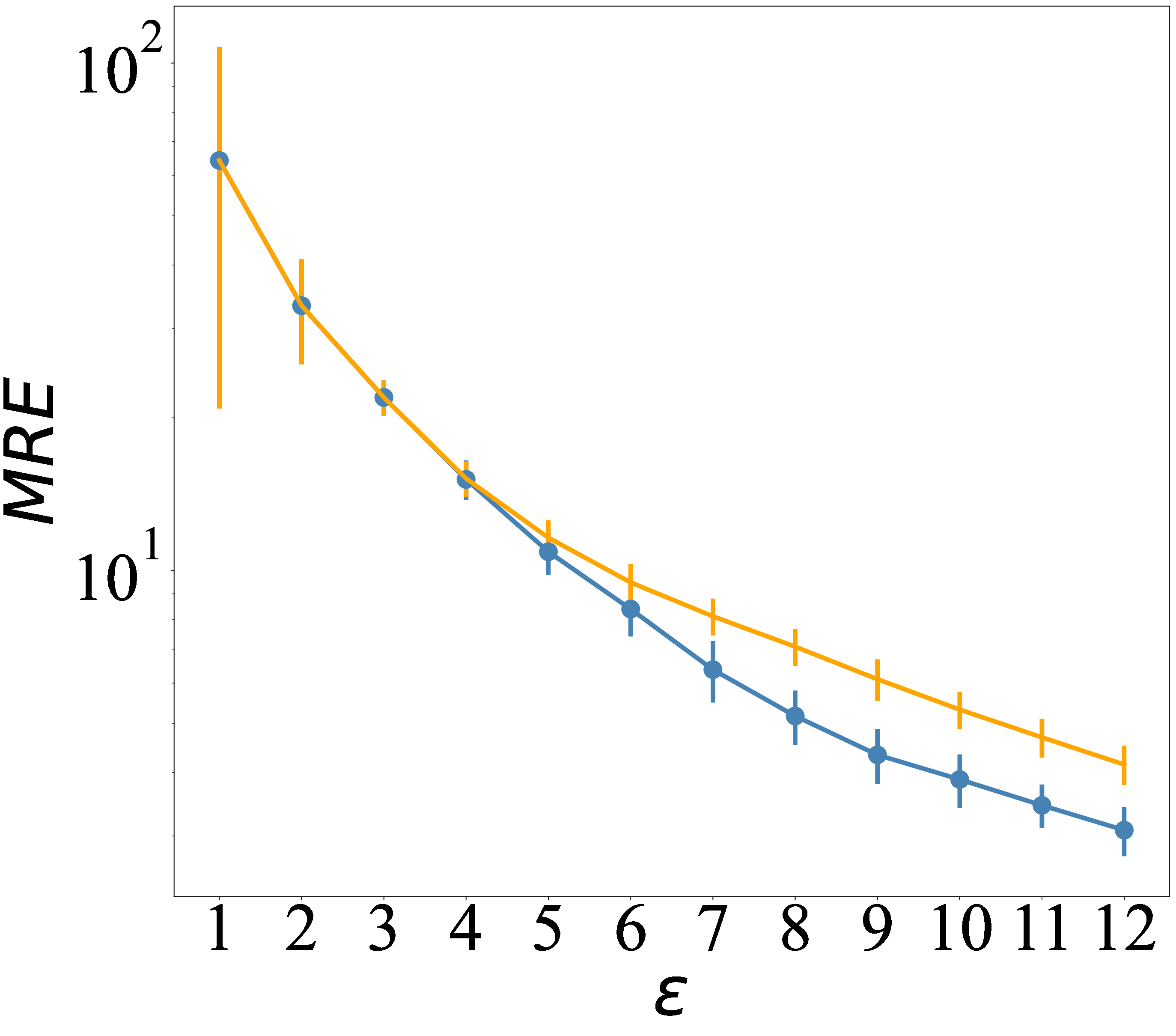}
        \caption{$h=10$ ($\varepsilon_1 = 0.1\varepsilon$)}
        \label{fig:eval:USAirport:h10}
        \vspace{4mm}
    \end{subfigure}
    \hfill
    \caption{\small{$\varepsilon$ Selection for $3$-Clique Counting on USAirport, $p = 0.1$}}
    \label{fig:eval:USAirport_var}
    \vspace{3mm}
\end{figure}

\vspace{1mm}
\noindent \underline{$h$ Selection}.
In Algorithm~\ref{alg:phase1:kclique}, to balance the efficiency and effectiveness for noise scale $\lambda$ estimation, $h$ vertices need to report their common neighbors, instead of degree (Lines~\ref{alg:phase1:kclique_9}-\ref{alg:phase1:kclique_17}).
Fig.~\ref{fig:eval:wikivote_h} reports the upper bound of the local sensitivity generated by \pcalg and \elvalg on \textit{Celegans} when $h$ varies from $1$ to $12$ with a step size of $1$.
We choose $p = 0.1$ for the minimal \pcs computation. For Phase-$1$, the privacy budget is chosen as $\varepsilon_1 = 0.1 \varepsilon$ for different $\varepsilon$ values.
For different privacy budgets $\varepsilon$, the upper bounds returned by \pcalg show similar trends as \elvalg.
When $h = 2, 3$, the upper bounds are optimal for almost all $\varepsilon$ values.
Compared to \elvalg, with the increase of $\varepsilon$, our \pcalg can find more optimal upper bounds for $LS(\Gamma_S)$. 

\vspace{1mm}
\noindent \underline{$\varepsilon$ Selection}.
Fig.~\ref{fig:eval:USAirport:all} reports the average \mre returned by \pcalg and \elvalg on all datasets with $p = 0.1$, $h = 3$, and $\varepsilon = 10$ (\ie, $\varepsilon_1 = 1.0$).
On all datasets, \pcalg outperforms \elvalg.
For example, on \textit{SisterCity}, the \pcalg is two orders of magnitude smaller than \elvalg in terms of the average \mre.
That is because our approach injects less noise into the real response and estimates tighter local sensitivity.
Fig.~\ref{fig:eval:USAirport:h1} - Fig.~\ref{fig:eval:USAirport:h10} report the average \mre over \pcalg and \elvalg on \textit{USAirport} when $\varepsilon$ varies from $1$ to $12$ with a step size of $1$. We choose $p = 0.1$ for minimal \pcs computation.
For all $\varepsilon$, the privacy budget $\varepsilon_1$ for Phase-$1$ is chosen as $\varepsilon_1 = 0.1\varepsilon$.
Our \pcalg shows similar trends as \elvalg under all $h$ but reports increasingly better performance in terms of \mre with $\varepsilon$ growing.

\begin{figure}[htb]
\centering
    \begin{subfigure}{0.68\columnwidth}
    \centering
        \includegraphics[width=\hsize]{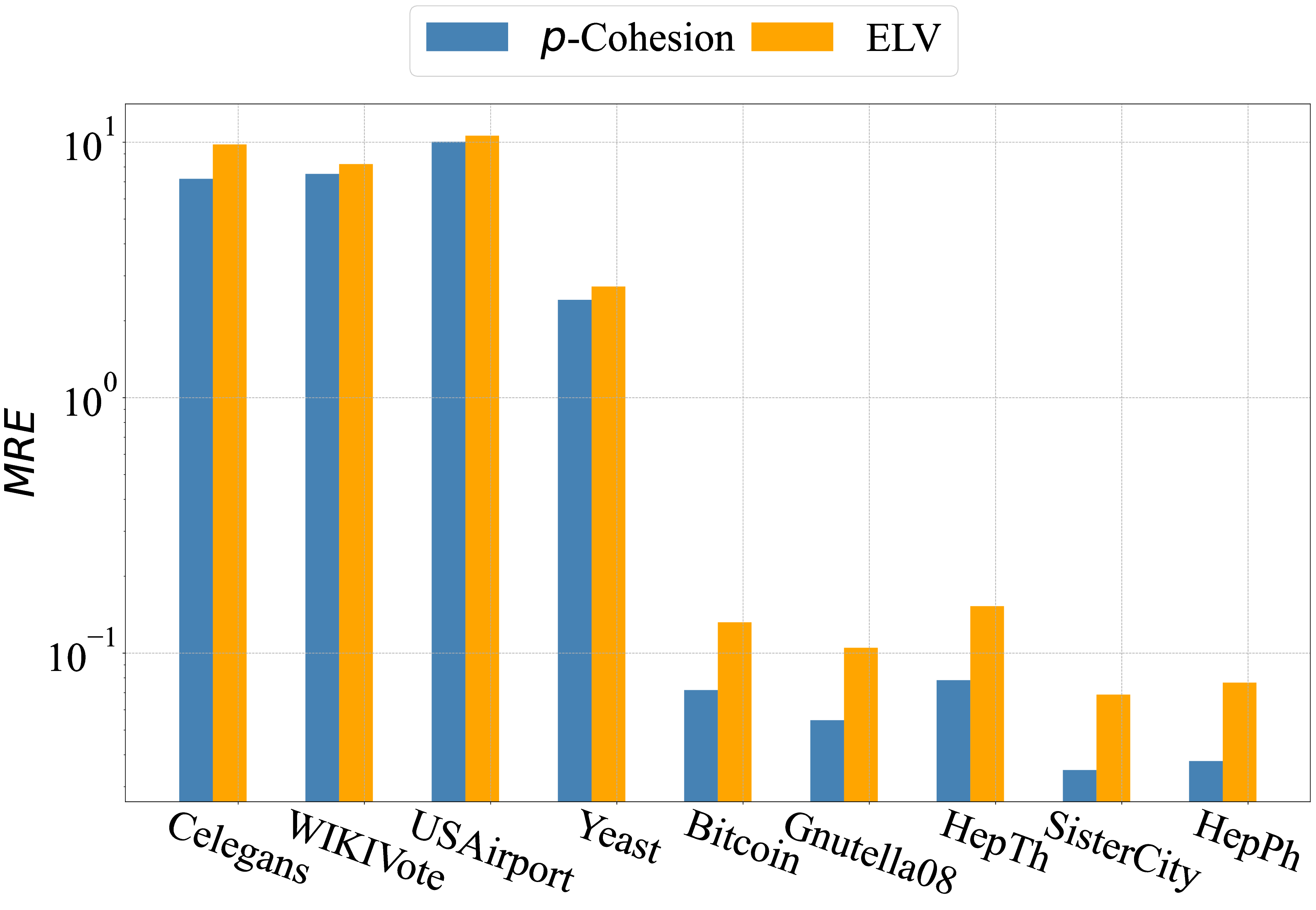}
        \caption{{\revision All Datasets with $h=3$, $\varepsilon_1 = 0.5$ ($\varepsilon = 5$)}}
        \label{fig:eval1:all}
        \vspace{4mm}
    \end{subfigure}
    \hfill
    
    \vspace{3mm}
    \begin{subfigure}{0.45\columnwidth}
    \centering
        \includegraphics[width=\hsize]{images/effectiveness/triangle/var/legend.eps}
    \end{subfigure}
    \hfill
    
    \begin{subfigure}{0.45\columnwidth}
    \centering
        \includegraphics[width=\hsize]{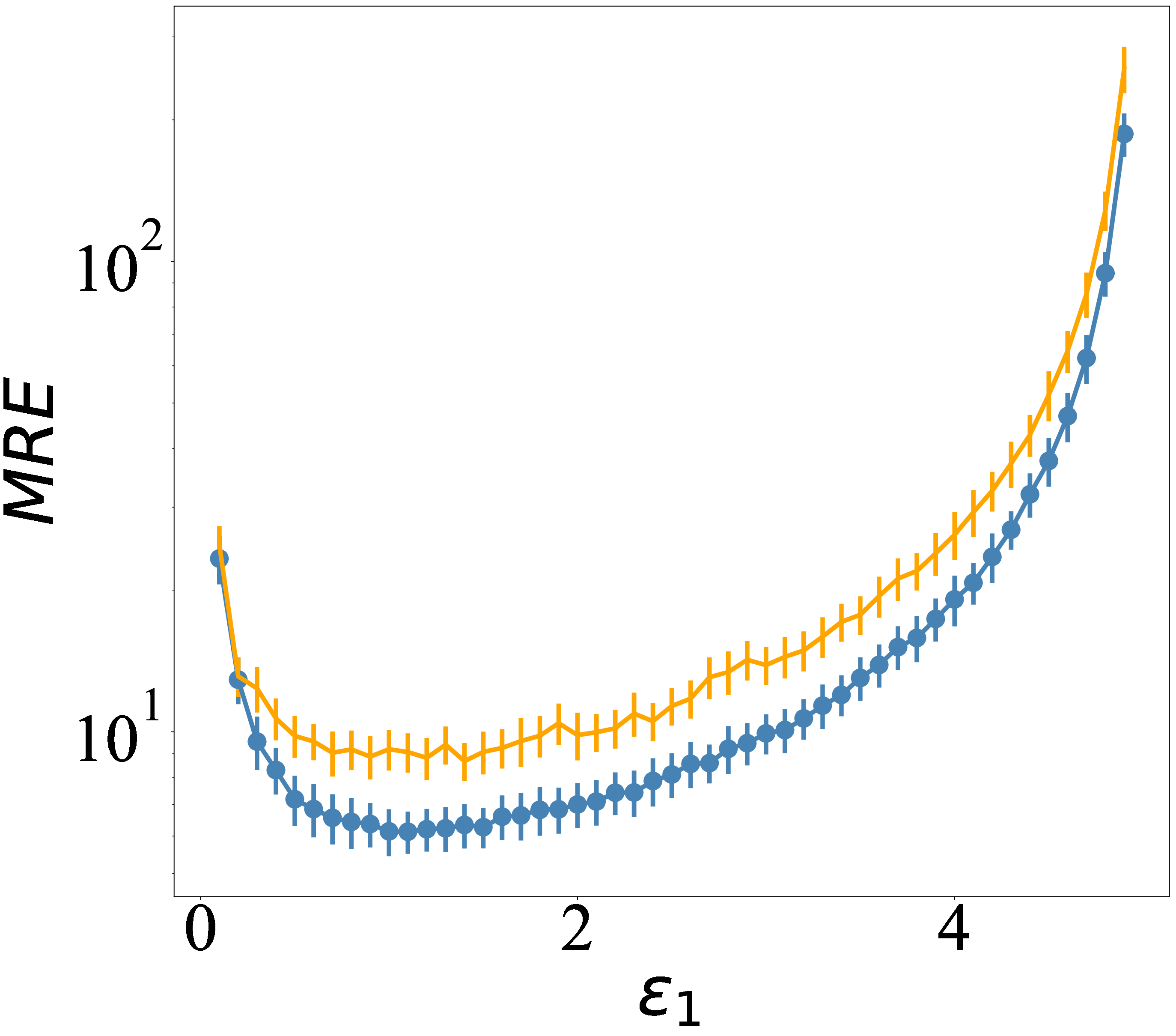}
        \caption{$h = 3$, $\varepsilon = 5$}
        \label{fig:eval1:celegan:h3}
        \vspace{4mm}
    \end{subfigure}
    \hfill
    \begin{subfigure}{0.45\columnwidth}
    \centering
        \vspace{3mm}
        \includegraphics[width=\hsize]{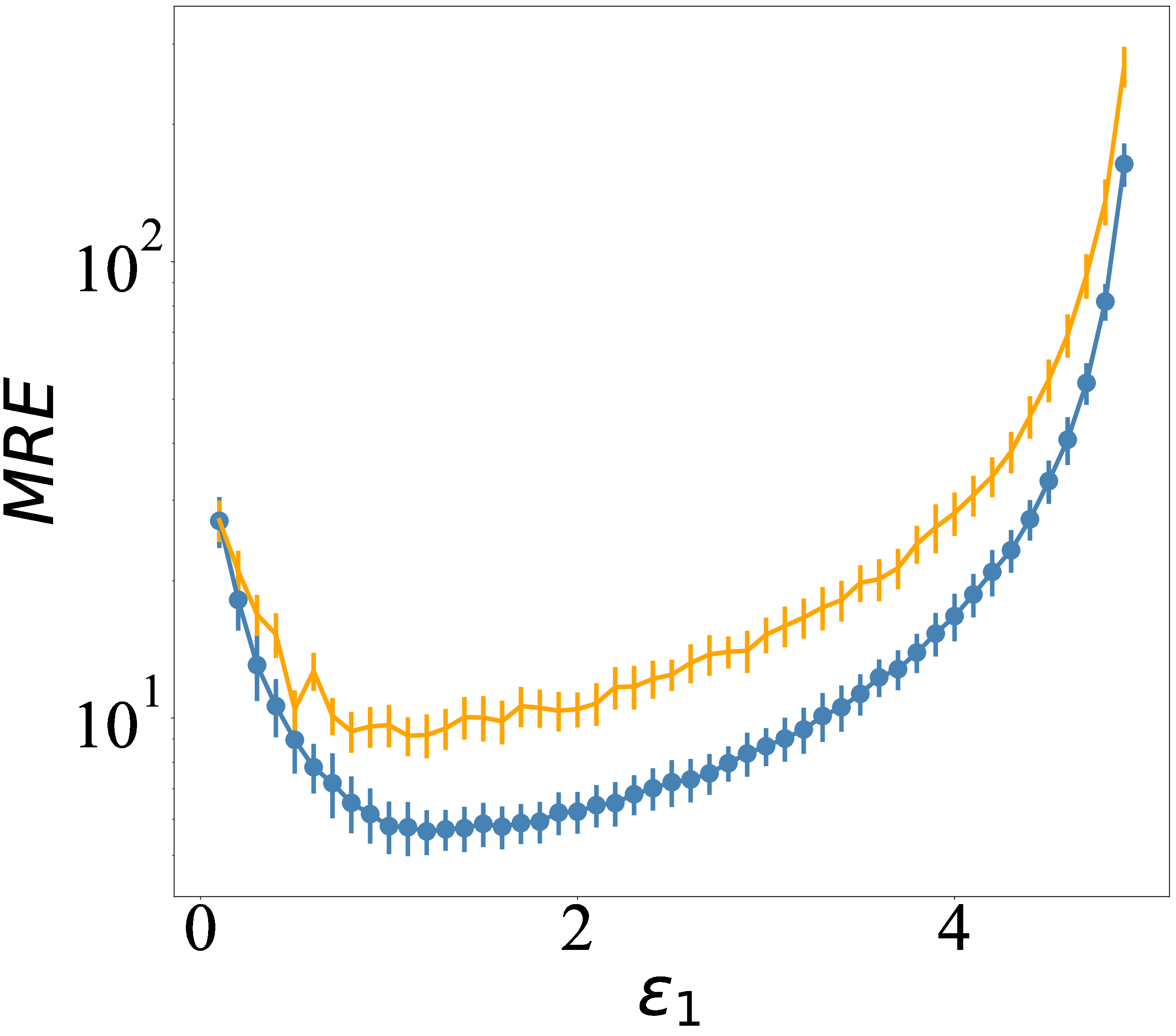}
        \caption{$h = 5$, $\varepsilon = 5$}
        \label{fig:eval1:celegan:h5}
        \vspace{4mm}
    \end{subfigure}
    \hfill
    \begin{subfigure}{0.45\columnwidth}
    \centering
        \includegraphics[width=\hsize]{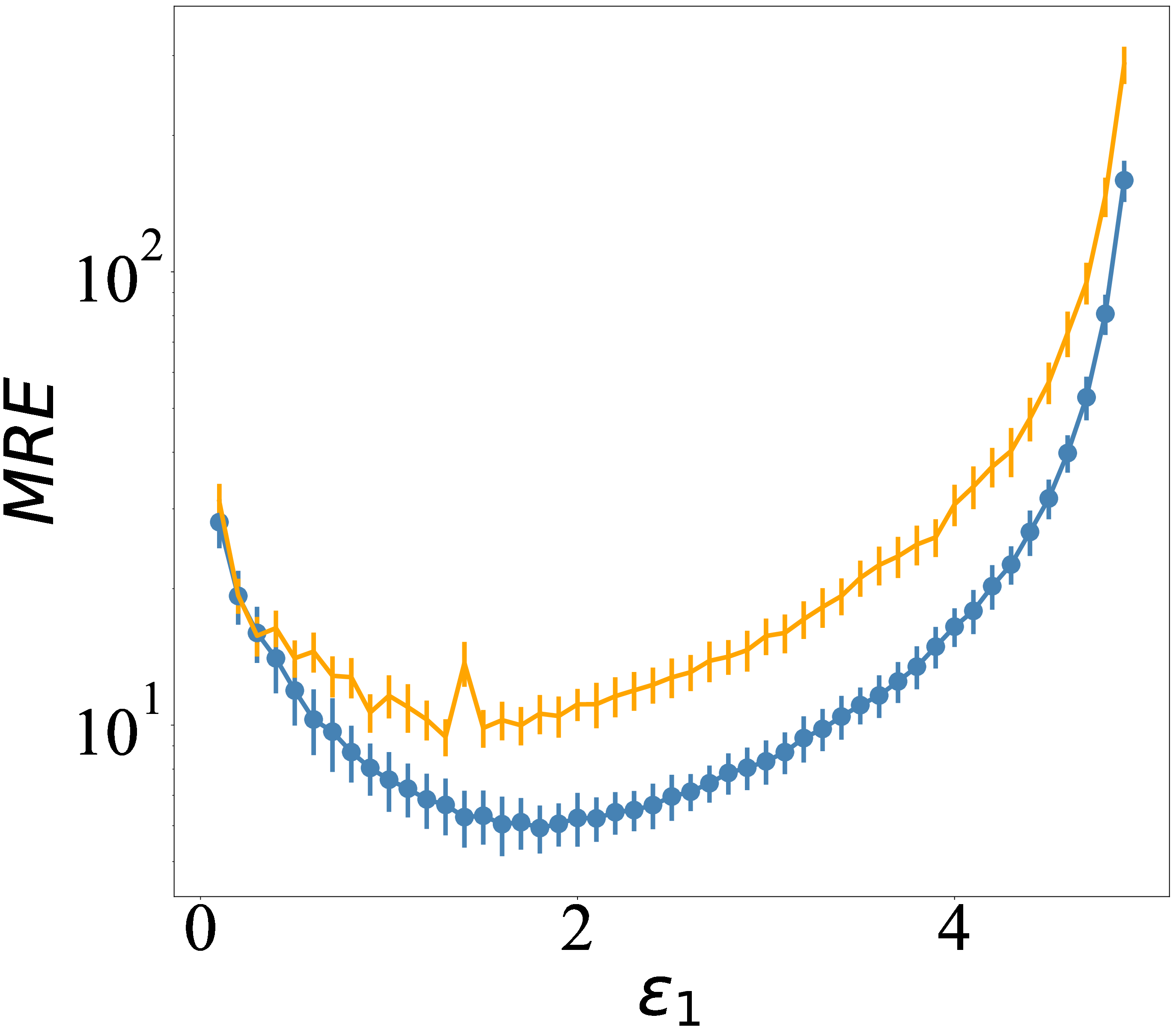}
        \caption{$h = 7$, $\varepsilon = 5$}
        \label{fig:eval1:celegan:h7}
        \vspace{4mm}
    \end{subfigure}
    \hfill
    \begin{subfigure}{0.45\columnwidth}
    \centering
        \vspace{3mm}
        \includegraphics[width=\hsize]{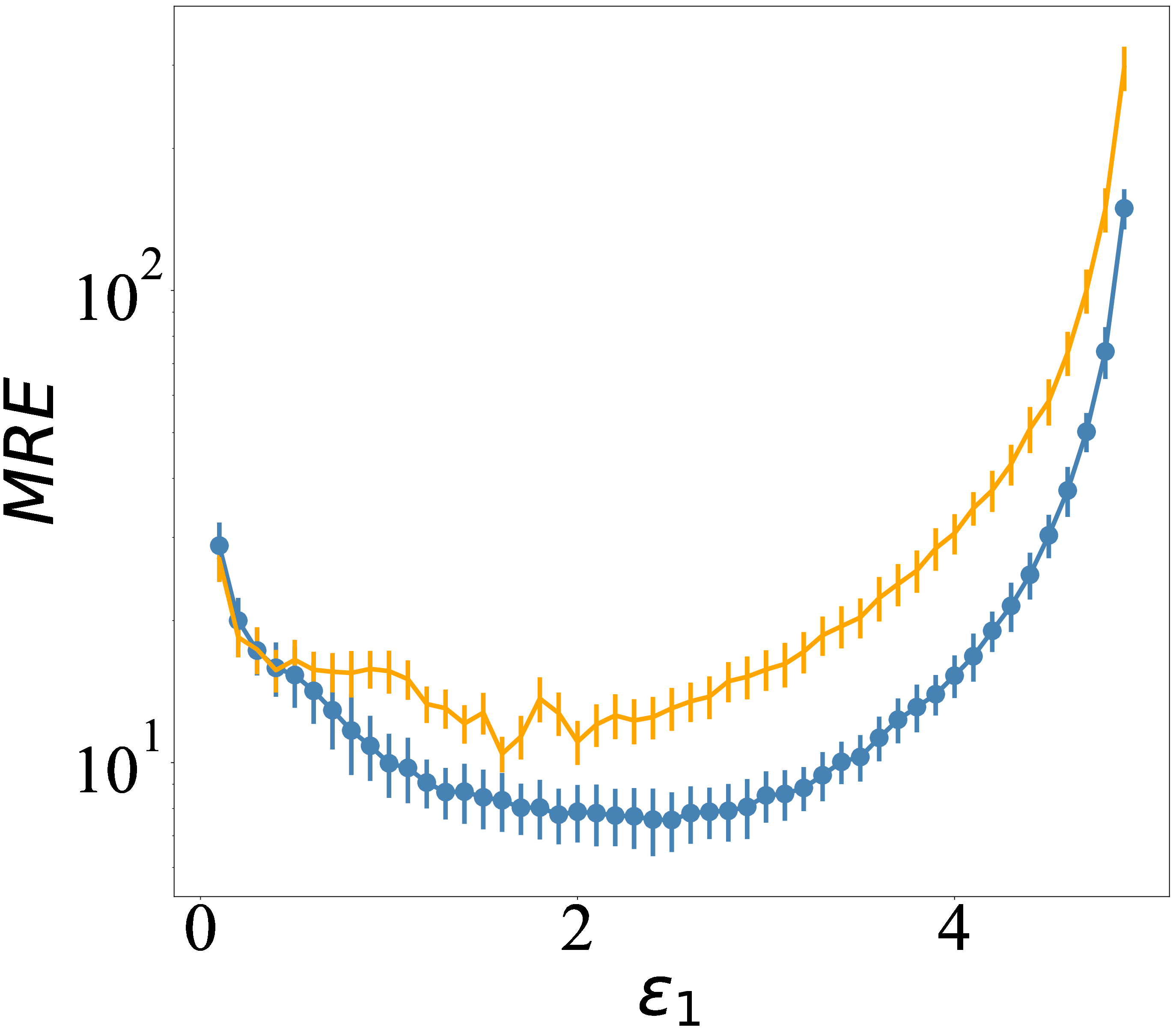}
        \caption{$ h = 10$, $\varepsilon = 5$}
        \label{fig:eval1:celegan:h10}
        \vspace{4mm}
    \end{subfigure}
    \hfill
    \caption{\small{$\varepsilon_1$ Selection for $3$-Clique Counting on Celegans, $p = 0.1$}}
    \label{fig:eval1:celegan_var1}
\end{figure}
\vspace{1mm}
\noindent \underline{$\varepsilon_1$ Selection}.
Fig.~\ref{fig:eval1:celegan_var1} depicts the influence of privacy budget $\varepsilon_1$ of Phase-$1$ on \mre.
For all $\varepsilon_1$, the total privacy budget is $\varepsilon = 5.0$, and we choose $p = 0.1$ for the minimal \pc computation.
The \texttt{MRE}s represent the average values obtained from $100$ independent tests.
Fig.~\ref{fig:eval1:all} shows the average \mre returned by \pcalg and \elvalg on all datasets with $p = 0.1$, $h = 3$, $\varepsilon_1 = 0.5$, and $\varepsilon = 5.0$.
Under the current setting, on all datasets, our \pcalg outperforms \elvalg.
Consistent with the previous result, on \textit{SisterCity}, the \pcalg is two orders of magnitude smaller than \elvalg in the average \mre.
Fig.~\ref{fig:eval1:celegan:h3} - Fig.~\ref{fig:eval1:celegan:h10} report the average \mre over \pcalg and \elvalg on \textit{Celegans} when $\varepsilon_1$ varies from $0.1$ to $4.9$ with a step size of $0.1$.
For all $\varepsilon_1$, our \pcalg outperforms \elvalg in terms of the average \mre.
When $\varepsilon_1 < 1.5$, the average \mre decreases with the growth of $\varepsilon_1$ since a larger $\varepsilon_1$ results in a smaller upper bound for the local sensitivity $LS(\Gamma_S)$.
The average \mre reaches the bottom ($\varepsilon_1 \approx 1.5$), $\varepsilon_1$ and $\varepsilon_2$ reach the balance point, where $\varepsilon_2 = \varepsilon - \varepsilon_1$.
When $\varepsilon_1 > 1.5$, the average \mre increases as $\varepsilon_1$ grows since the balance is broken and more noises are injected.

\begin{figure}[htb]
\centering
    \begin{subfigure}{0.4\columnwidth}
    \centering
        \includegraphics[width=\hsize]{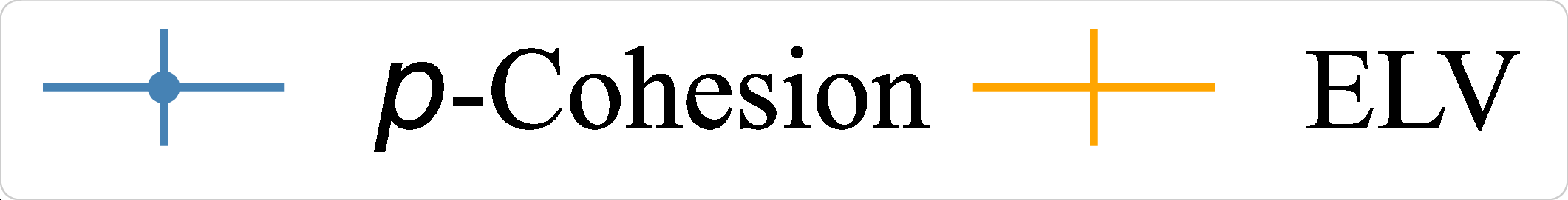}
    \end{subfigure}
    \hfill
    
  \begin{subfigure}{0.45\columnwidth}
  \centering
        \includegraphics[width=\hsize]{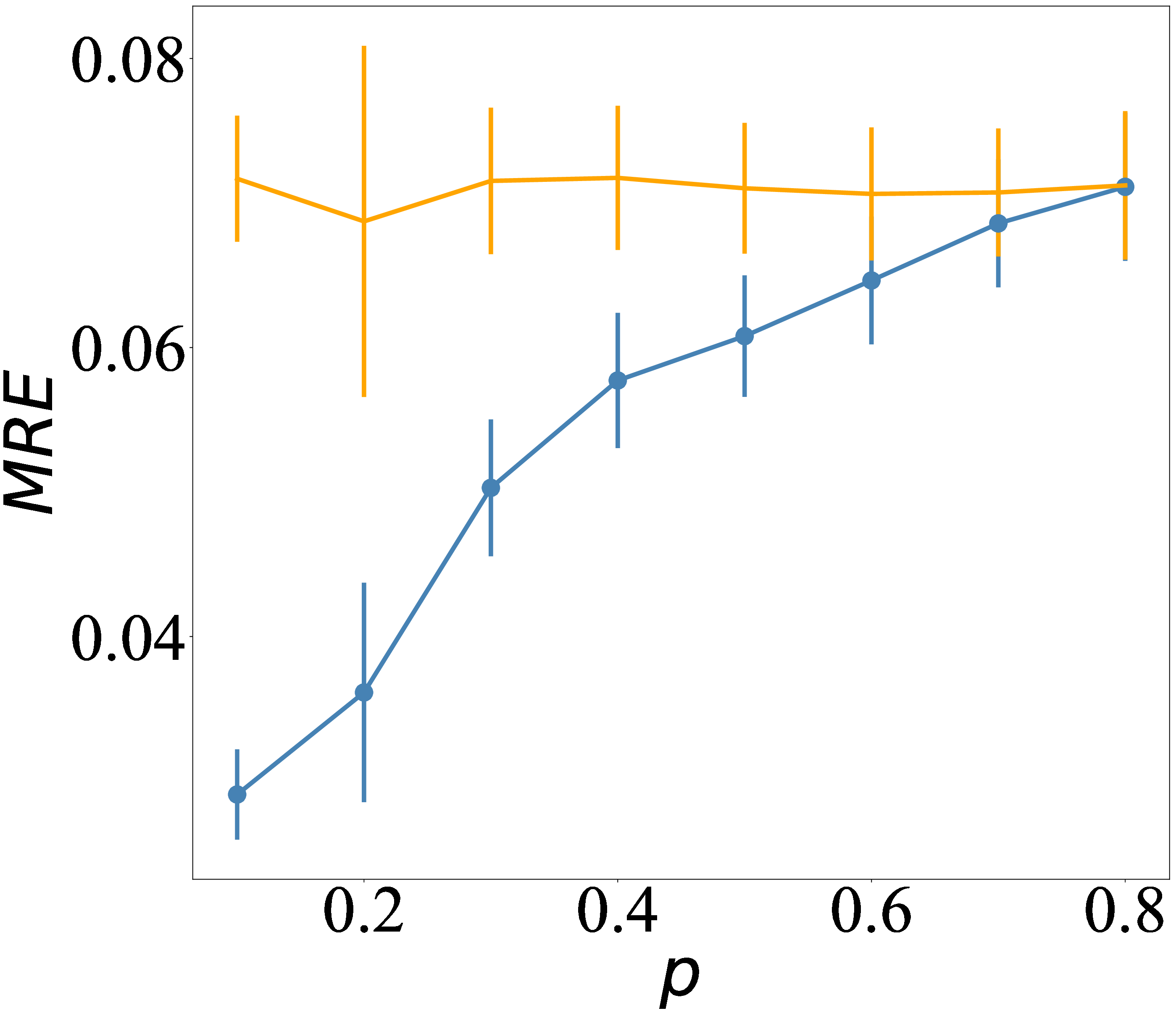}
        \caption{$h = 1$, $\varepsilon = 3$ ($\varepsilon_1 = 0.3$)}
        \label{fig:eval:Bitcoin:h1v2}
        \vspace{4mm}
    \end{subfigure}
    \hfill
    \begin{subfigure}{0.45\columnwidth}
    \centering
        \vspace{3mm}
        \includegraphics[width=\hsize]{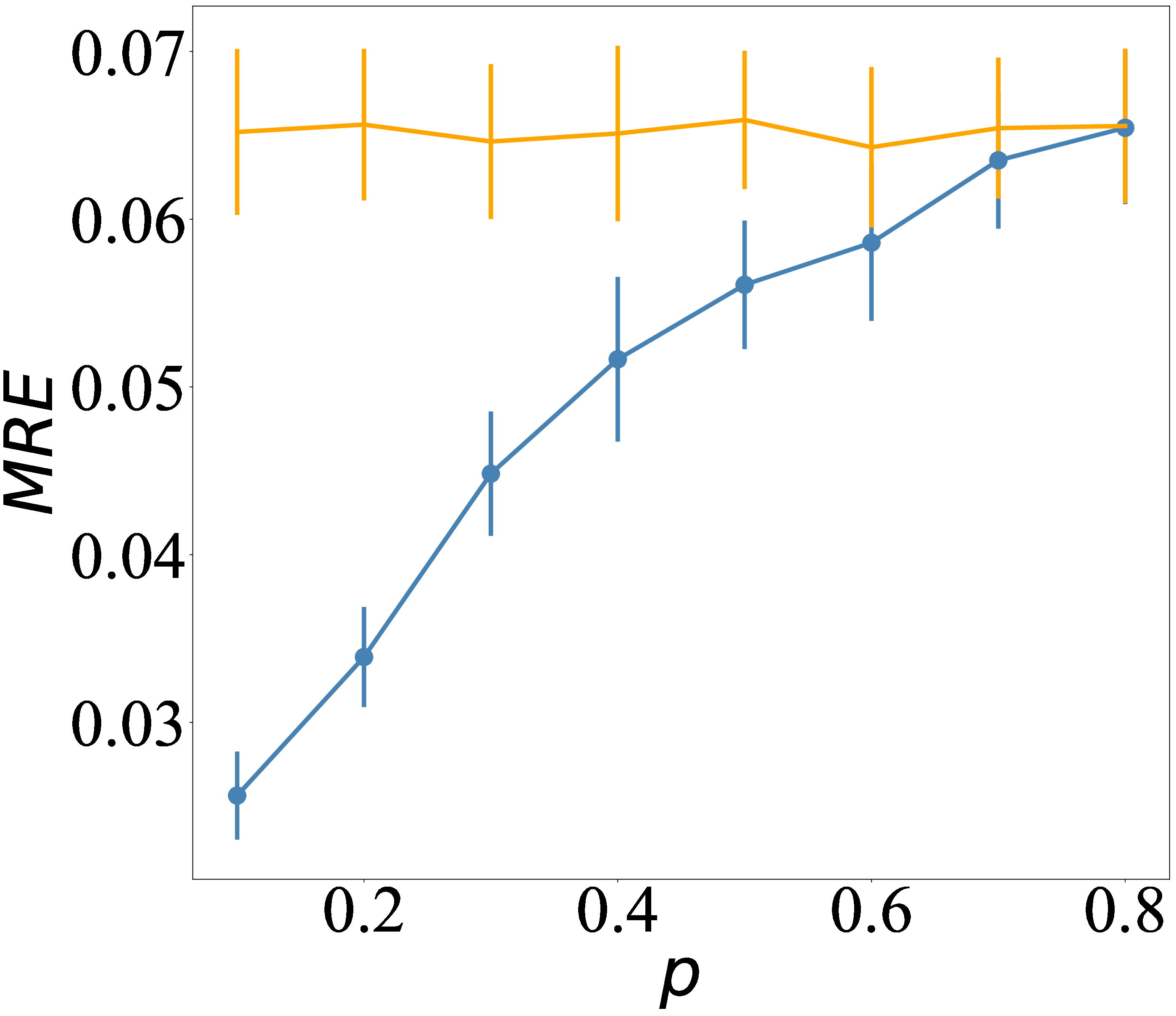}
        \caption{$h = 1$, $\varepsilon = 10$ ($\varepsilon_1 = 1.0$)}
        \label{fig:eval:Bitcoin:h1v10}
        \vspace{4mm}
    \end{subfigure}
    \hfill
    \begin{subfigure}{0.45\columnwidth}
    \centering
        \includegraphics[width=\hsize]{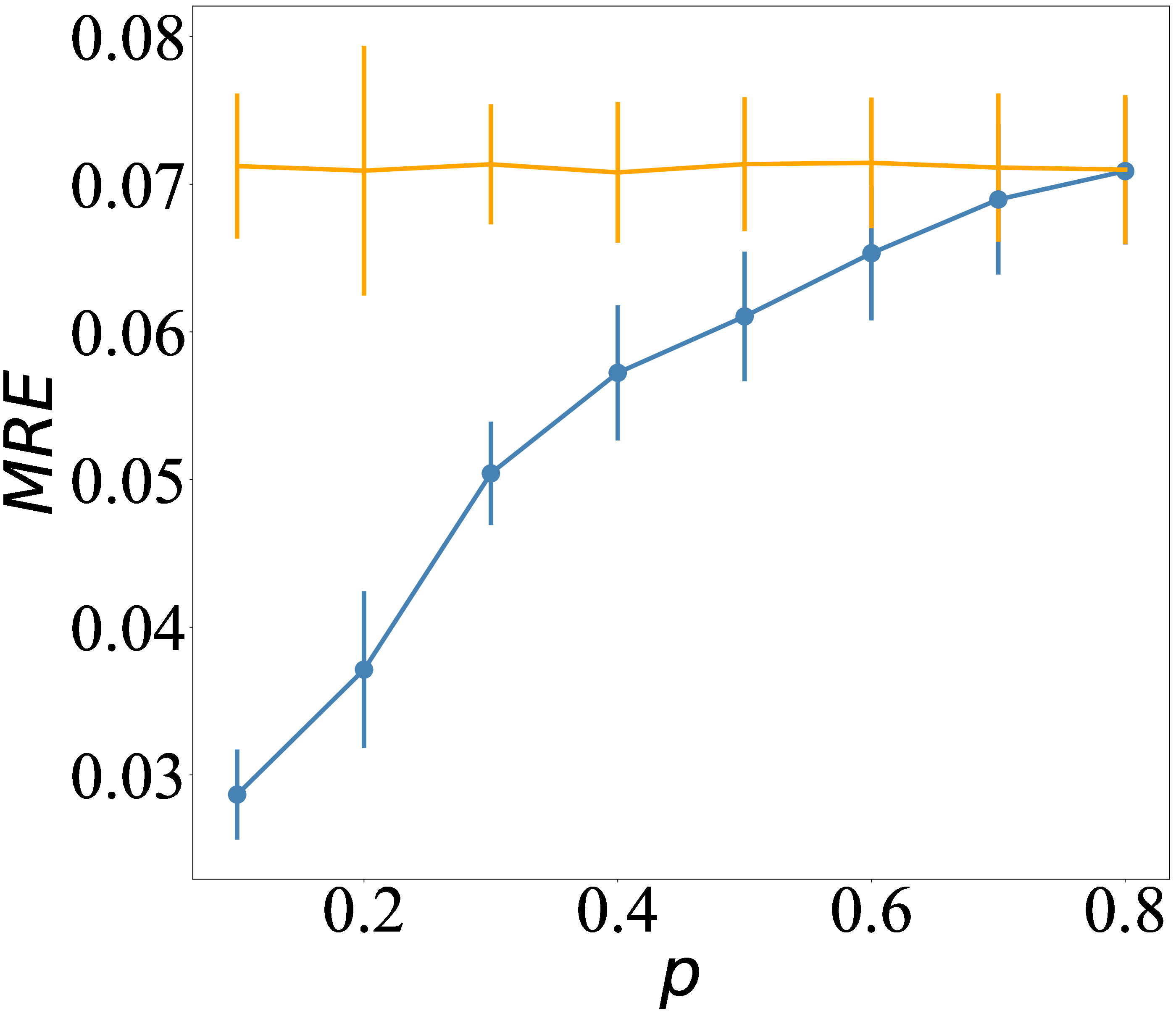}
        \caption{$h = 3$, $\varepsilon = 3$ ($\varepsilon_1 = 0.3$)}
        \label{fig:eval:Bitcoin:h3v2}
        \vspace{4mm}
    \end{subfigure}
    \hfill
    \begin{subfigure}{0.45\columnwidth}
    \centering
        \vspace{3mm}
        \includegraphics[width=\hsize]{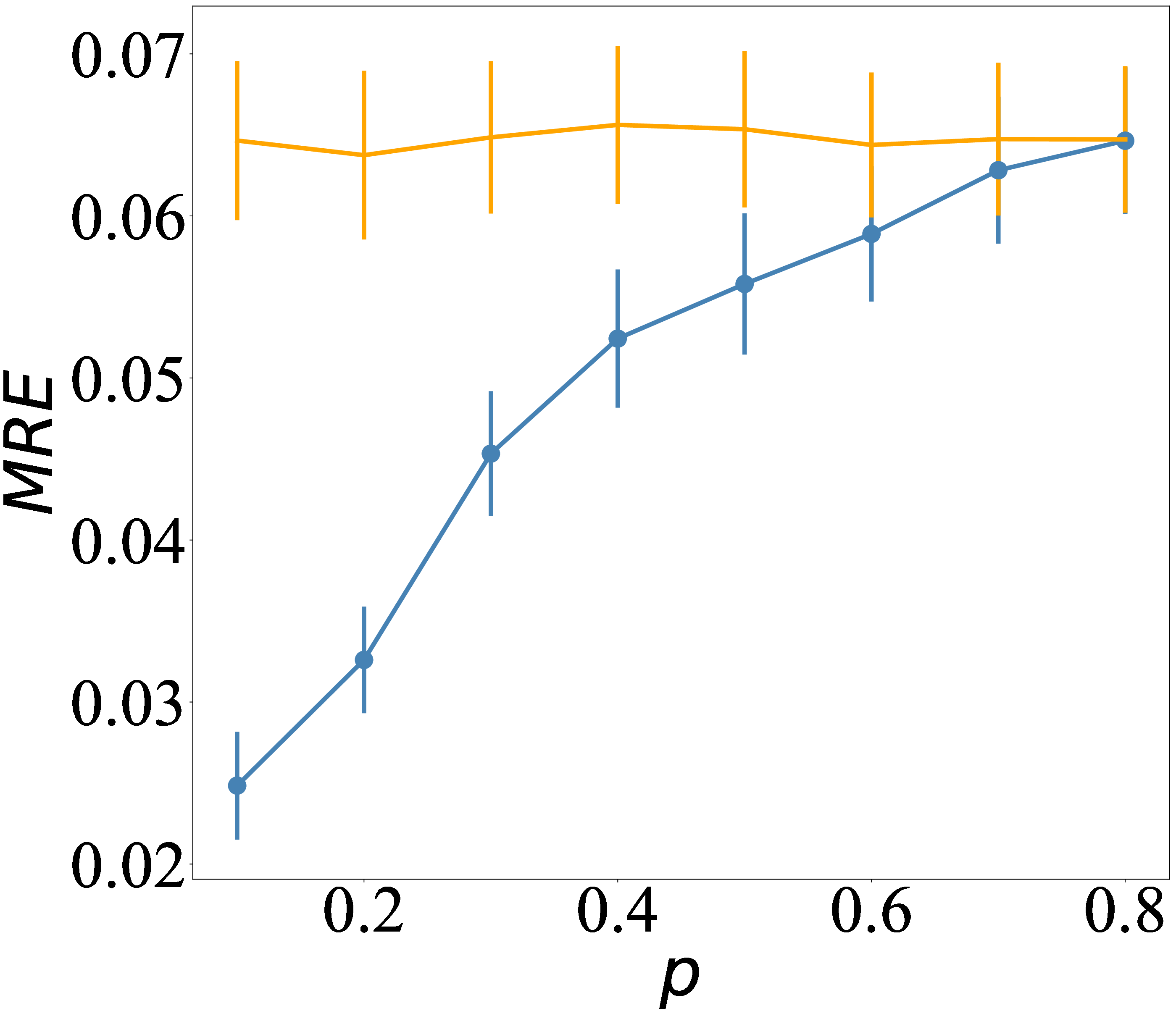}
        \caption{$h = 3$, $\varepsilon = 10$ ($\varepsilon_1 = 1.0$)}
        \label{fig:eval:Bitcoin:h3v10}
        \vspace{4mm}
    \end{subfigure}
    \hfill
    \begin{subfigure}{0.45\columnwidth}
    \centering
        \includegraphics[width=\hsize]{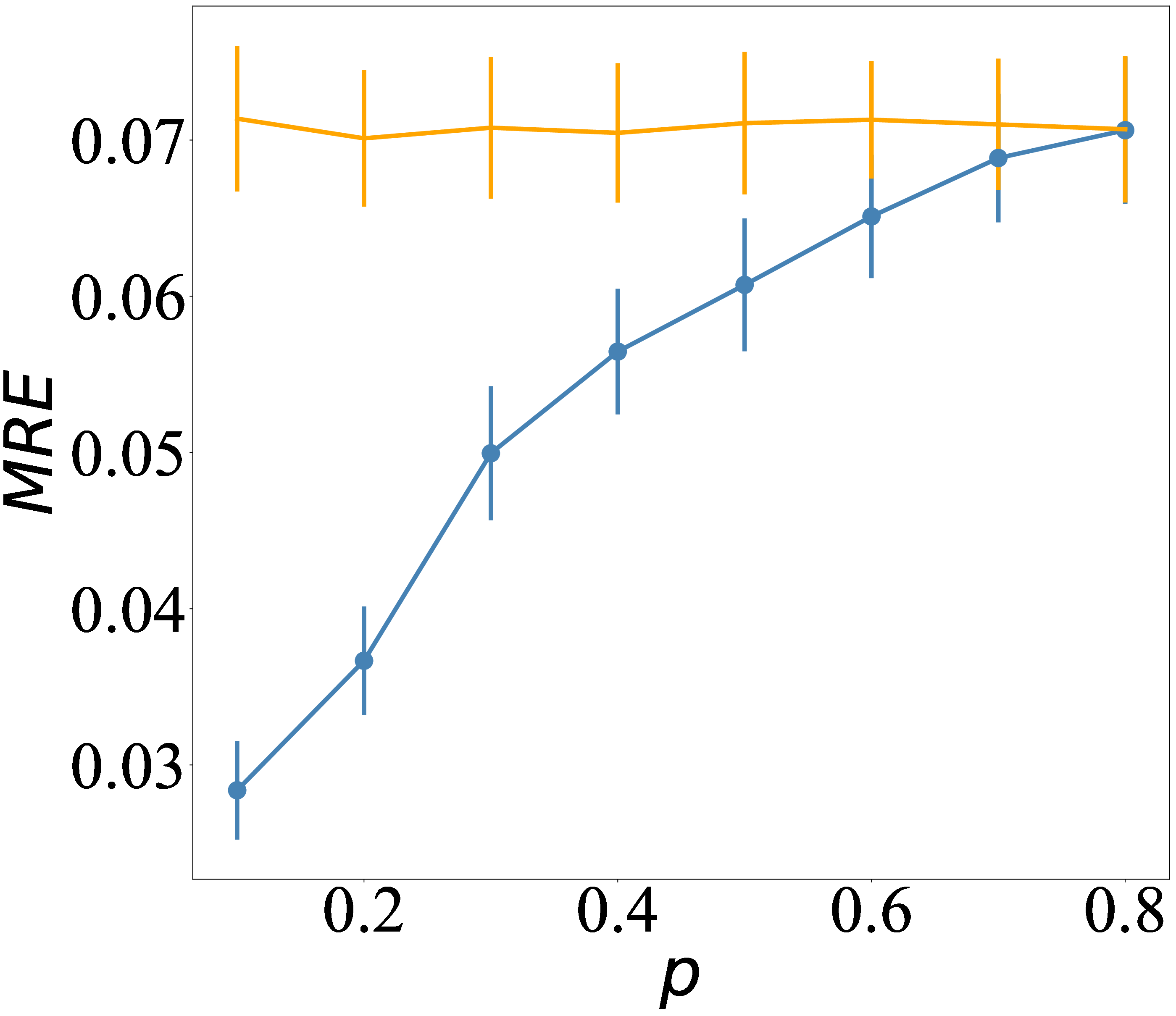}
        \caption{$h = 10$, $\varepsilon = 3$ ($\varepsilon_1 = 0.3$)}
        \label{fig:eval:Bitcoin:h3v10}
         \vspace{4mm}
    \end{subfigure}
    \hfill
    \begin{subfigure}{0.45\columnwidth}
    \centering
        \vspace{3mm}
        \includegraphics[width=\hsize]{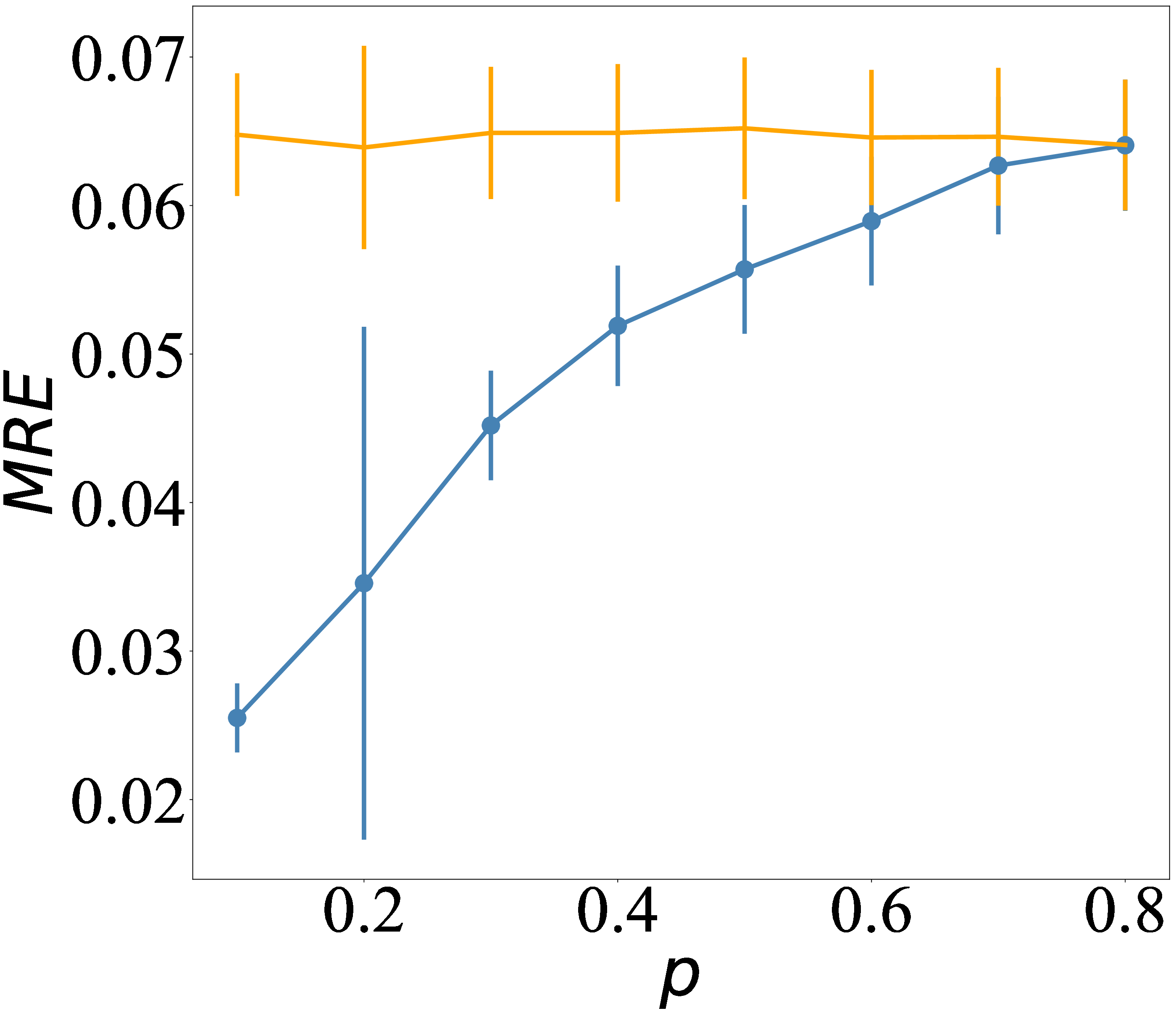}
        \caption{$h = 10$, $\varepsilon = 10$ ($\varepsilon_1 = 1.0$)}
        \label{fig:eval:Bitcoin:var10}
         \vspace{4mm}
    \end{subfigure}
    \hfill
   
    \caption{\small{$p$ Selection for $4$-Clique Counting on Bitcoin}}
    \label{fig:eval:Bitcoin:p}
\end{figure}

\begin{figure}[htb]
\centering
    \begin{subfigure}{0.68\columnwidth}
    \centering
        \includegraphics[width=\hsize]{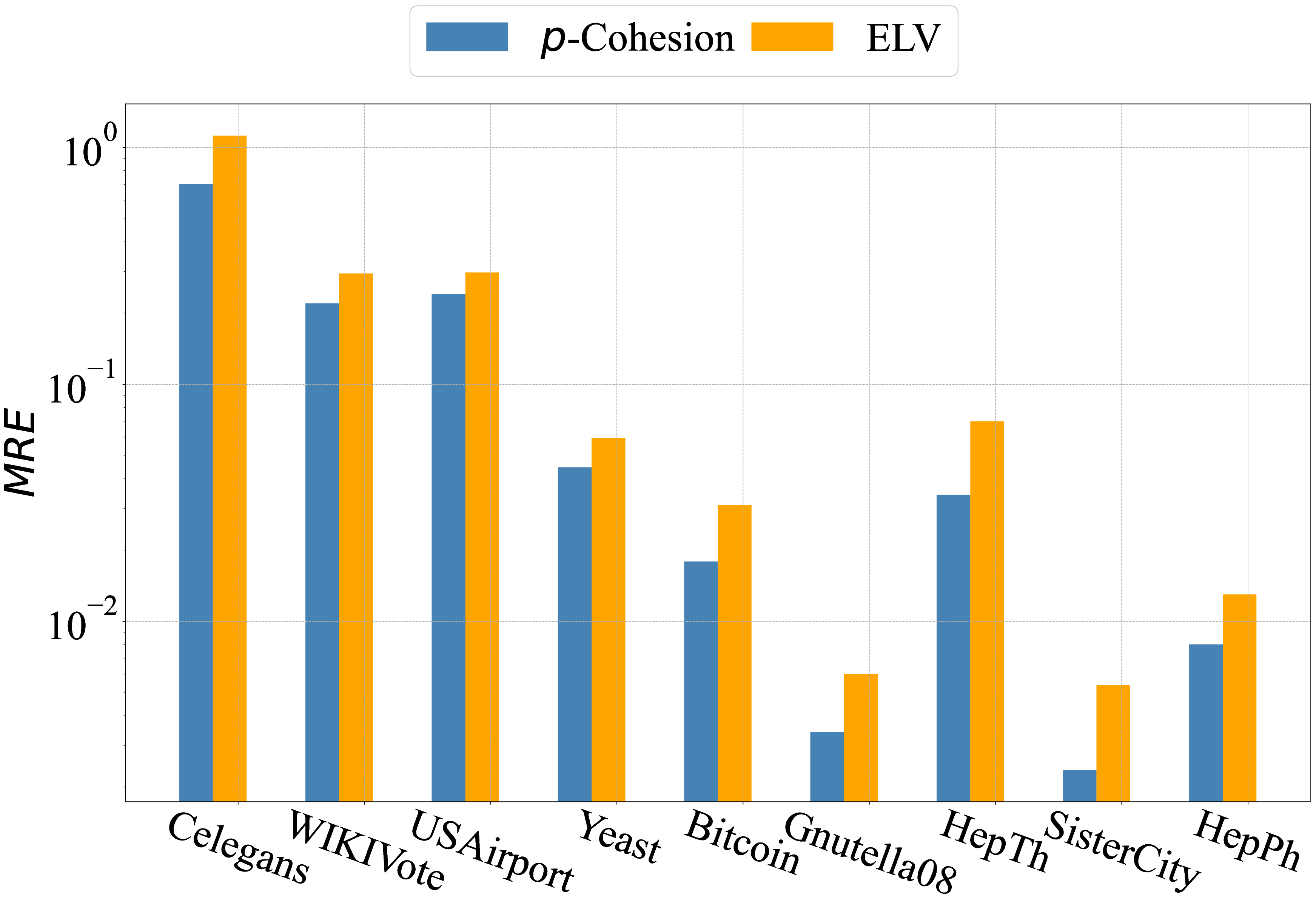}
        \caption{All Datasets with $h=3$, $\varepsilon = 10$ ($\varepsilon_1 = 1$)}
        \label{fig:eval:all}
        \vspace{4mm}
    \end{subfigure}
    \hfill

    \vspace{3mm}
    \begin{subfigure}{0.45\columnwidth}
    \centering
        \includegraphics[width=\hsize]{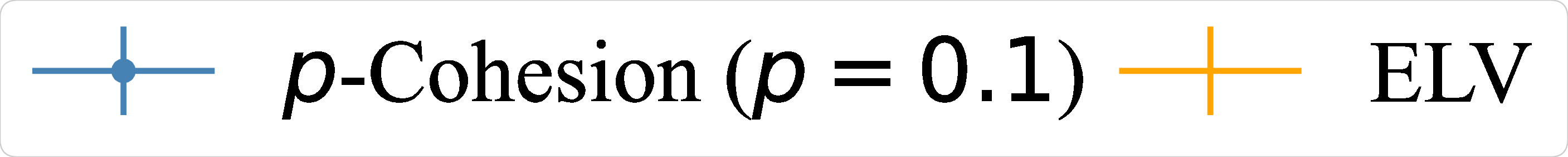}
    \end{subfigure}
    \hfill
    
    \begin{subfigure}{0.45\columnwidth}
    \centering
        \includegraphics[width=\hsize]{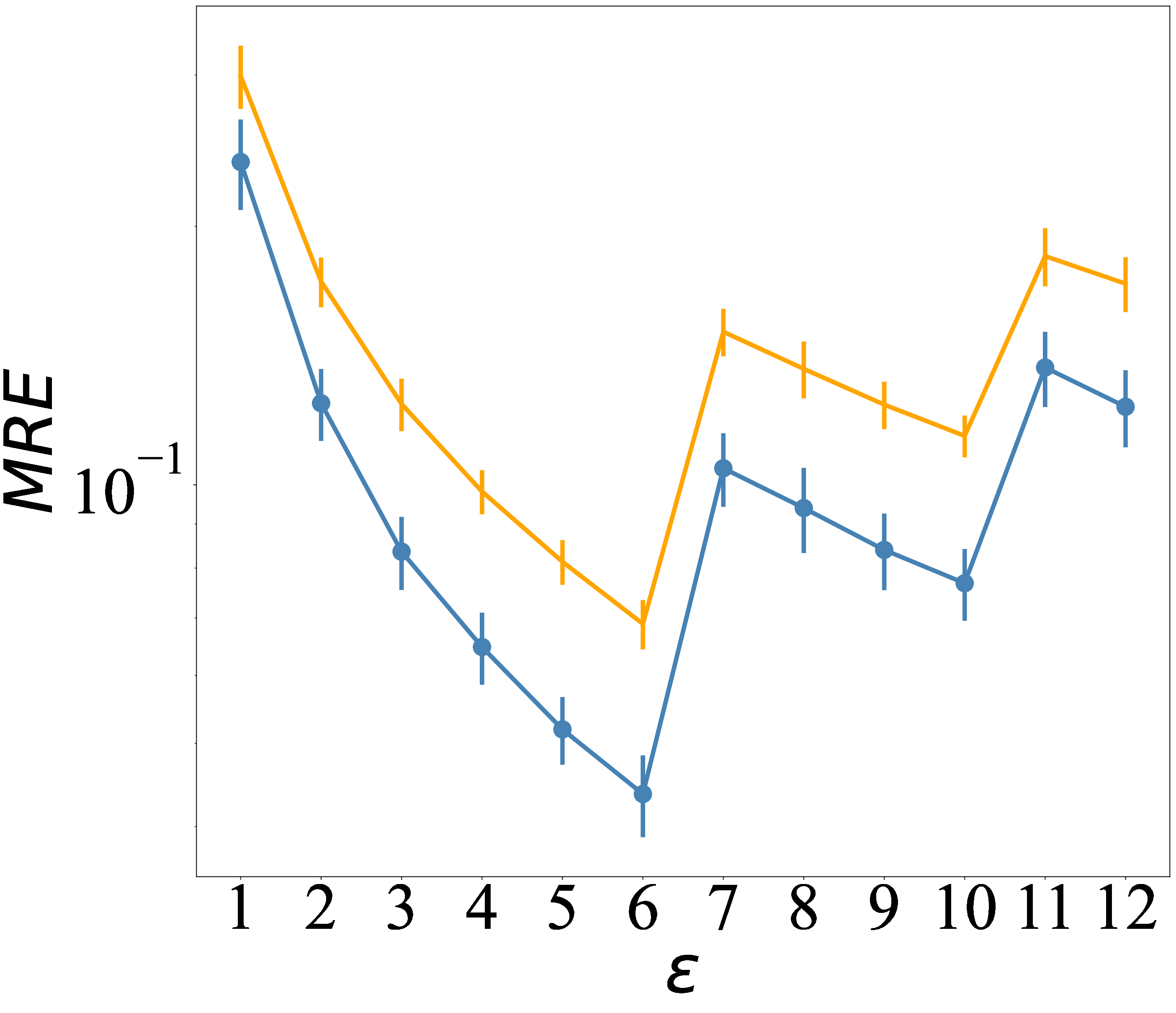}
        \caption{$h=1$ ($\varepsilon_1 = 0.1\varepsilon$)}
        \label{fig:eval:Yeast:h1}
        \vspace{4mm}
    \end{subfigure}
    \hfill
    \begin{subfigure}{0.45\columnwidth}
    \centering
        \vspace{3mm}
        \includegraphics[width=\hsize]{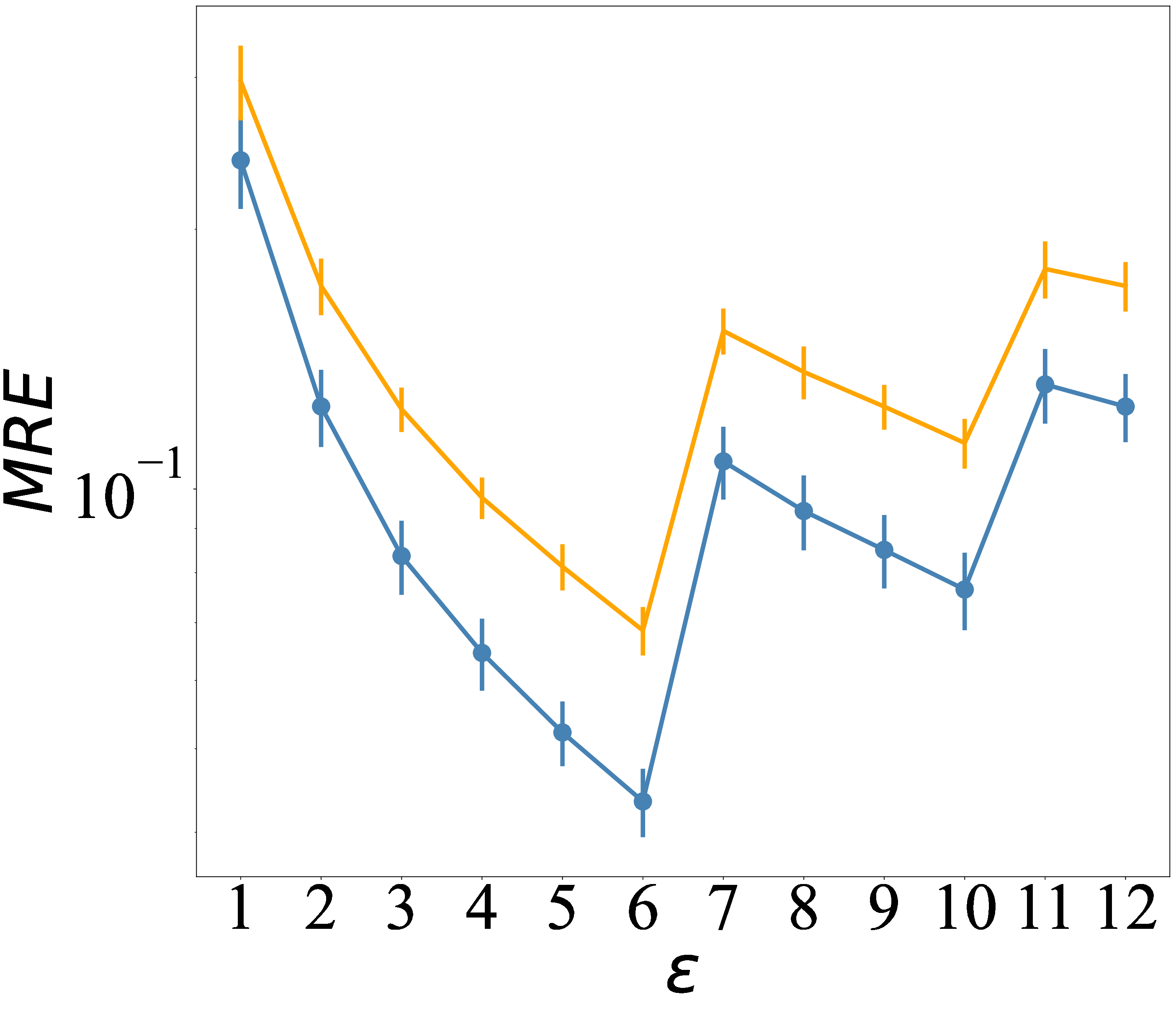}
        \caption{$h=3$ ($\varepsilon_1 = 0.1\varepsilon$)}
        \label{fig:eval:Yeast:h3}
        \vspace{4mm}
    \end{subfigure}
    \hfill
    \begin{subfigure}{0.45\columnwidth}
    \centering
        \includegraphics[width=\hsize]{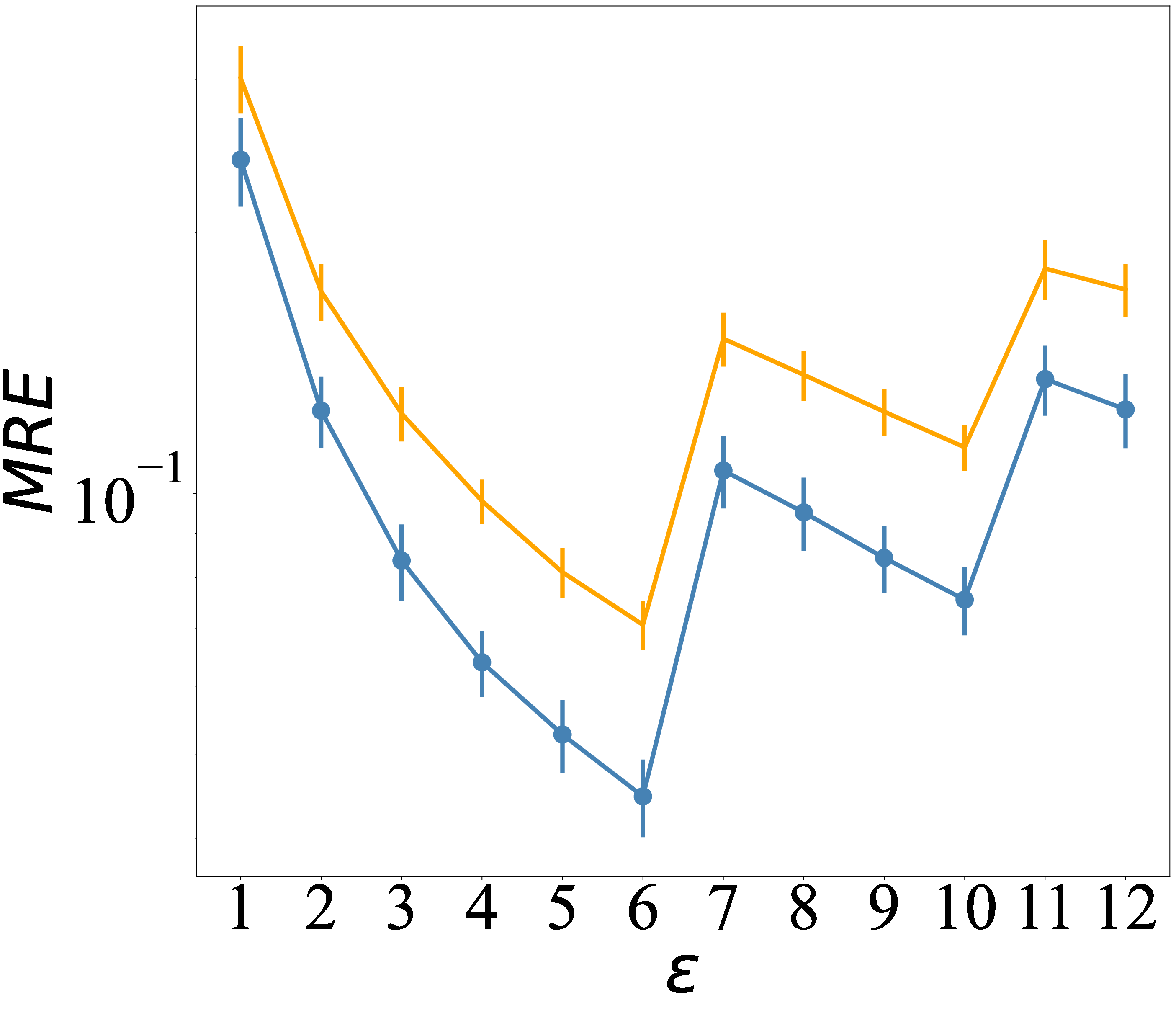}
        \caption{$h=5$ ($\varepsilon_1 = 0.1\varepsilon$)}
        \label{fig:eval:Yeast:h5}
         \vspace{4mm}
    \end{subfigure}
    \hfill
    \begin{subfigure}{0.45\columnwidth}
    \centering
        \vspace{3mm}
        \includegraphics[width=\hsize]{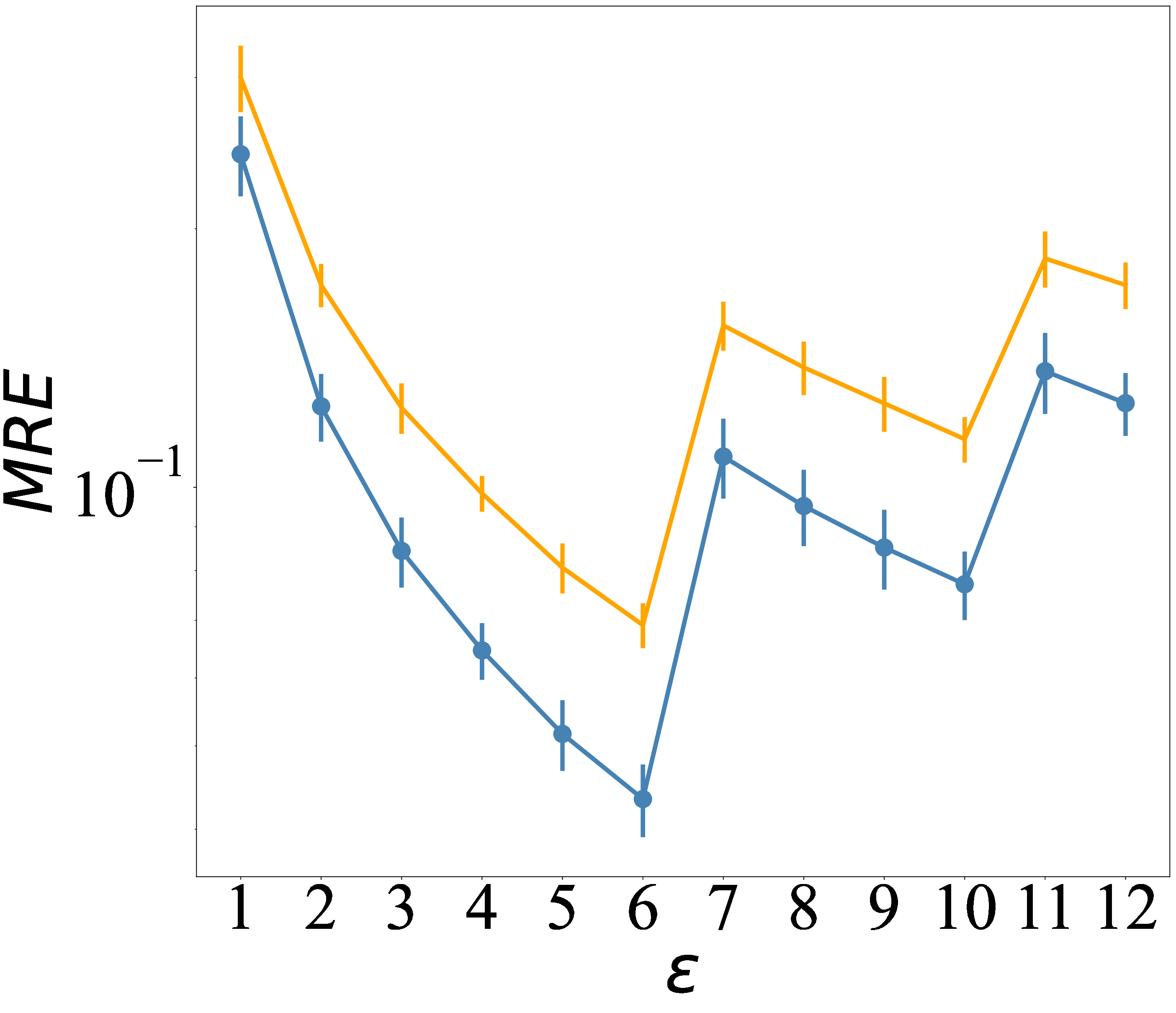}
        \caption{$h=10$ ($\varepsilon_1 = 0.1\varepsilon$)}
        \label{fig:eval:Yeast:h10}
         \vspace{4mm}
    \end{subfigure}
    \hfill
    \caption{\small{$\varepsilon$ Selection for $4$-Clique Counting on USAirport, $p = 0.1$}}
    \label{fig:eval:Yeast_var}
\end{figure}


\subsubsection{$4$-Clique Counting}

Under \ddp, we study the influence of $p$, and $\varepsilon$ over $4$-clique counting.
In Algorithm~\ref{alg:phase1:kclique}, at Line~\ref{alg:phase1:kclique_17}, the estimated local sensitivity is controlled by the $k$ value.
We use $k=3$ as the building block for $4$-clique counting, \ie, the $\lambda_4$ (\resp $\lambda_3$) returned by Algorithm~\ref{alg:phase1:kclique} for $k=4$ (\resp $k=3$) is $\lambda_4 = \frac{\lambda_3*4}{3}$.
We do not regenerate $\lambda$ for $4$-clique by Algorithm~\ref{alg:phase1:kclique}.

\vspace{1mm}
\noindent \underline{$p$ Selection}. Fig.~\ref{fig:eval:Bitcoin:p} reports the average \mre results for \pcalg and \elvalg over $100$ runs on \textit{Bitcoin} when $p$ varies from $0.1$ to $0.8$.
We show results with $h\!=\!1,3,10$ and $\varepsilon\!=\!3.0, 10.0$ for all $p$ values.
When $p < 0.8$, the \pcalg outperforms \elvalg. The average \mre of \pcalg increases when $p$ becomes larger since more noises are injected into the response.
Similar to the result for $3$-clique, when $p > 0.4$, the \mre grows slowly.
This is because, for almost all vertices, $p$ is large enough to include all their connections to the minimal \pcs.
When $p \geq 0.8$, ELVs will be subsets of the corresponding minimal \pcs.

\vspace{1mm}
\noindent \underline{$\varepsilon$ Selection}.
Fig.~\ref{fig:eval:all} reports average \mre returned by \pcalg and \elvalg on all datasets with $p = 0.1$, $h = 3$, and $\varepsilon = 10$ (\ie, $\varepsilon_1 = 1.0$).
On all datasets, our \pcalg outperforms \elvalg.
Figs.~\ref{fig:eval:Yeast:h1} to~\ref{fig:eval:Yeast:h10} report the average \mre over \pcalg and \elvalg on \textit{USAirport} when $\varepsilon$ varies from $1$ to $12$. We choose $p = 0.1$ for the minimal \pc computation.
For all $\varepsilon$, the privacy budget of Phase-$1$ is $\varepsilon_1 = 0.1\varepsilon$.
Our \pcalg shows similar trends as \elvalg under all $h$ values,
but better \mre for all $\varepsilon$ values.

\section{Background and Related Work}
\label{sec:relwork}

We present related work from perspectives of cohesive subgraph search and graph analysis with DP.

\vspace{1mm}
\noindent \textbf{Cohesive subgraph search}.
A cohesive subgraph search aims to identify subgraphs that contain a specific query for a given scenario. Several cohesive subgraph models have been proposed. A clique is a subgraph with the highest density, where every pair of vertices is connected~\cite{DBLP:conf/www/DanischBS18,DBLP:conf/asunam/MiharaTO15}.
However, due to the restrictive nature of cliques, some relaxed models have been introduced.
These include the $k$-core model~\cite{DBLP:journals/corr/cs-DS-0310049,seidman1983network,DBLP:conf/focs/DhulipalaLRSSY22},
$k$-truss model~\cite{cohen2008trusses,DBLP:journals/ieicet/SaitoYK08},
$k$-ecc model~\cite{DBLP:journals/siamrev/Manacher89,DBLP:journals/tkde/HuWCLF17}, 
$p$-cohesive model~\cite{DBLP:journals/kais/LiZZQZL21, morris2000contagion},
and others~\cite{DBLP:conf/aaai/ChenCPWLZY21,DBLP:conf/icde/WangZLZL22,wang2023cohesive}.
{\revision
In~\cite{DBLP:journals/kais/LiZZQZL21}, minimal \pcs were detected through an expand-shrink structure and two relatively intuitive score functions. 
When including one vertex in a partial \pc, the merit score considers the contribution of a vertex to increasing the degrees of some vertices in the partial \pc.
The penalty score captures the need for extra neighbors outside the partial \pc to meet the $p$ constraint.
Using these score functions, minimal \pcs captured may not be dense; \ie, more vertices with fewer mutually connected edges are often captured.

We design new criteria (including a new merit function and a new penalty function) to effectively identify minimal \pcs with increased densities.
While adopting the generic expand-shrink framework as in~\cite{DBLP:journals/kais/LiZZQZL21}, our new score functions help find minimal \pcs with much higher density than the method developed in~\cite{DBLP:journals/kais/LiZZQZL21}.
}

\vspace{1mm}
\noindent \textbf{Graph analysis with DP}.
Many recent contributions proposed DP-based solutions for the release and/or the
analysis of graph data~\cite{pgd2021}. 
They were based either on centralized DP (CDP)~\cite{fan2013differentially, DBLP:journals/tkde/FanX14},
local DP (LDP)~\cite{DBLP:conf/sigmod/CormodeJKLSW18,DBLP:conf/ccs/QinYYKX017,DBLP:conf/uss/ImolaMC22}, DDP~\cite{DBLP:conf/ccs/SunXKYQWY19}, or relationship local DP (RLDP)~\cite{liu2022icde}.
Traditional DP techniques were primarily designed for centralized settings, where a trusted central entity collects and analyzes data. 
However, graph analysis under CDP requires the data holder to have full information about the graph.
Answering questions related to graphs under LDP requires to collect information from individuals~\cite{DBLP:journals/siamcomp/KasiviswanathanLNRS11}, which may ruin the property (lower the data utility) of the raw graph~\cite{DBLP:conf/ccs/QinYYKX017}.

To solve this problem, Haipei \etal~\cite{DBLP:conf/ccs/SunXKYQWY19} proposed privacy preservation in the context of DDP using the \elv.
Their approach ensures each participant's privacy concerning their connections in its \elv.
{\color{black}Yuhan~\etal~~\cite{liu2022icde} tried to solve the problem under the RLDP mechanism and using the \elv.}
However, these approaches solely focused on the proximity of a queried vertex and did not capture the critical connections of the vertex within the subgraph.
{\revision
While adopting the two-phase framework as in~\cite{DBLP:conf/ccs/SunXKYQWY19}, 
our algorithm is different in the sense that we identify the critical conditions of vertices, which are denser and more cohesive and often part of the ELVs. We also prioritize the critical conditions for privacy protection by dedicating the privacy budget to those connections.
As demonstrated experimentally, our approach can offer better data utility than the existing work with all connections obfuscated, \textit{e.g.},~\cite{DBLP:conf/ccs/SunXKYQWY19}.
}
\section{Conclusion}
\label{sec:conclusion}
In this paper, we investigated the problem of identifying and protecting critical connections’ information for individual participants on graphs, using minimal \pc.
New score functions were proposed to help effectively search the critical connections.
We also analytically qualified the use of the $(\epsilon, \delta)$-DDP to protect the privacy of the identified critical connections.
Extensive experiments confirmed the effectiveness of the new score functions, as well as the better trade-off between utility and privacy compared to the existing ELV-based methods.


\ifCLASSOPTIONcaptionsoff
  \newpage
\fi

{\footnotesize
\bibliographystyle{IEEEtran}
\bibliography{bib}

\begin{thebibliography}{10}
\providecommand{\url}[1]{#1}
\csname url@samestyle\endcsname
\providecommand{\newblock}{\relax}
\providecommand{\bibinfo}[2]{#2}
\providecommand{\BIBentrySTDinterwordspacing}{\spaceskip=0pt\relax}
\providecommand{\BIBentryALTinterwordstretchfactor}{4}
\providecommand{\BIBentryALTinterwordspacing}{\spaceskip=\fontdimen2\font plus
\BIBentryALTinterwordstretchfactor\fontdimen3\font minus
  \fontdimen4\font\relax}
\providecommand{\BIBforeignlanguage}[2]{{%
\expandafter\ifx\csname l@#1\endcsname\relax
\typeout{** WARNING: IEEEtran.bst: No hyphenation pattern has been}%
\typeout{** loaded for the language `#1'. Using the pattern for}%
\typeout{** the default language instead.}%
\else
\language=\csname l@#1\endcsname
\fi
#2}}
\providecommand{\BIBdecl}{\relax}
\BIBdecl

\bibitem{DBLP:journals/pvldb/CormodeSBK09}
G.~Cormode, D.~Srivastava, S.~Bhagat, and B.~Krishnamurthy, ``Class-based graph
  anonymization for social network data,'' \emph{Proc. {VLDB} Endow.}, vol.~2,
  no.~1, pp. 766--777, 2009.

\bibitem{DBLP:conf/www/BrandtL14}
C.~Brandt and J.~Leskovec, ``Status and friendship: mechanisms of social
  network evolution,'' in \emph{Proc. {WWW} 2014}, pp. 229--230.

\bibitem{DBLP:conf/ipps/ChakaravarthyKM16}
V.~T. Chakaravarthy, M.~Kapralov, P.~Murali, F.~Petrini, X.~Que, Y.~Sabharwal,
  and B.~Schieber, ``Subgraph counting: Color coding beyond trees,'' in
  \emph{{IPDPS} 2016}, pp. 2--11.

\bibitem{pgd2021}
Y.~Li, M.~Purcell, T.~Rakotoarivelo, D.~Smith, T.~Ranbaduge, and K.~S. Ng,
  ``Private graph data release: A survey,'' \emph{ACM Computing Surveys},
  vol.~55, no.~11, pp. 1--39, 2021.

\bibitem{DBLP:conf/ccs/SunXKYQWY19}
H.~Sun, X.~Xiao, I.~Khalil, Y.~Yang, Z.~Qin, W.~H. Wang, and T.~Yu, ``Analyzing
  subgraph statistics from extended local views with decentralized differential
  privacy,'' in \emph{CCS 2019}, pp. 703--717.

\bibitem{seidman1983network}
S.~B. Seidman, ``Network structure and minimum degree,'' \emph{Social
  networks}, vol.~5, no.~3, pp. 269--287, 1983.

\bibitem{DBLP:conf/icde/YangWQZCL19}
B.~Yang, D.~Wen, L.~Qin, Y.~Zhang, L.~Chang, and R.~Li, ``Index-based optimal
  algorithm for computing k-cores in large uncertain graphs,'' in \emph{ICDE
  2019}, pp. 64--75.

\bibitem{DBLP:conf/dasfaa/ZhangYZQLZ18}
F.~Zhang, L.~Yuan, Y.~Zhang, L.~Qin, X.~Lin, and A.~Zhou, ``Discovering strong
  communities with user engagement and tie strength,'' in \emph{{DASFAA} 2018},
  vol. 10827, pp. 425--441.

\bibitem{DBLP:conf/sigmod/HuangCQTY14}
X.~Huang, H.~Cheng, L.~Qin, W.~Tian, and J.~X. Yu, ``Querying k-truss community
  in large and dynamic graphs,'' in \emph{{SIGMOD} 2014}.\hskip 1em plus 0.5em
  minus 0.4em\relax {}, pp. 1311--1322.

\bibitem{DBLP:journals/pvldb/FangCLH16}
Y.~Fang, R.~Cheng, S.~Luo, and J.~Hu, ``Effective community search for large
  attributed graphs,'' \emph{Proc. VLDB Endow.}, vol.~9, no.~12, pp.
  1233--1244, 2016.

\bibitem{DBLP:journals/vldb/YuanQLCZ16}
L.~Yuan, L.~Qin, X.~Lin, L.~Chang, and W.~Zhang, ``Diversified top-k clique
  search,'' \emph{The VLDB Journal}, vol.~25, no.~2, pp. 171--196, 2016.

\bibitem{morris2000contagion}
S.~Morris, ``Contagion,'' \emph{The Review of Economic Studies}, vol.~67,
  no.~1, pp. 57--78, 2000.

\bibitem{watts2002simple}
D.~J. Watts, ``A simple model of global cascades on random networks,''
  \emph{Proceedings of the National Academy of Sciences}, vol.~99, no.~9, pp.
  5766--5771, 2002.

\bibitem{pastor2015epidemic}
R.~Pastor-Satorras, C.~Castellano, P.~Van~Mieghem, and A.~Vespignani,
  ``Epidemic processes in complex networks,'' \emph{Reviews of modern physics},
  vol.~87, no.~3, p. 925, 2015.

\bibitem{DBLP:journals/kais/LiZZQZL21}
C.~Li, F.~Zhang, Y.~Zhang, L.~Qin, W.~Zhang, and X.~Lin, ``Discovering
  fortress-like cohesive subgraphs,'' \emph{Knowl. Inf. Syst.}, vol.~63,
  no.~12, pp. 3217--3250, 2021.

\bibitem{zanette2001critical}
D.~H. Zanette, ``Critical behavior of propagation on small-world networks,''
  \emph{Physical Review E}, vol.~64, no.~5, p. 050901, 2001.

\bibitem{sun2021transmission}
K.~Sun, W.~Wang, L.~Gao, Y.~Wang, K.~Luo, L.~Ren, Z.~Zhan, X.~Chen, S.~Zhao,
  Y.~Huang \emph{et~al.}, ``Transmission heterogeneities, kinetics, and
  controllability of sars-cov-2,'' \emph{Science}, vol. 371, no. 6526, p.
  eabe2424, 2021.

\bibitem{DBLP:conf/tcc/DworkMNS06}
C.~Dwork, F.~McSherry, K.~Nissim, and A.~D. Smith, ``Calibrating noise to
  sensitivity in private data analysis,'' in \emph{Proc. {TCC} 2006}, vol.
  3876, pp. 265--284.

\bibitem{DBLP:conf/icdm/HayLMJ09}
M.~Hay, C.~Li, G.~Miklau, and D.~D. Jensen, ``Accurate estimation of the degree
  distribution of private networks,'' in \emph{{ICDM} 2009}, pp. 169--178.

\bibitem{Dwork2006TCC}
C.~Dwork, F.~McSherry, K.~Nissim, and A.~D. Smith, ``Calibrating noise to
  sensitivity in private data analysis,'' in \emph{Proc. {TCC}}, vol.
  3876.\hskip 1em plus 0.5em minus 0.4em\relax Springer, 2006, pp. 265--284.

\bibitem{nr}
R.~A. Rossi and N.~K. Ahmed, ``The network data repository with interactive
  graph analytics and visualization,'' in \emph{Proc. {AAAI}, 2015}.

\bibitem{konect}
J.~Kunegis, ``{KONECT} -- {The} {Koblenz} {Network} {Collection},'' in
  \emph{Proc. {WWW}, 2013}, pp. 1343--1350.

\bibitem{DBLP:journals/fttcs/DworkR14}
C.~Dwork and A.~Roth, ``The algorithmic foundations of differential privacy,''
  \emph{Found. Trends Theor. Comput. Sci.}, vol.~9, no. 3-4, pp. 211--407,
  2014.

\bibitem{DBLP:conf/www/DanischBS18}
M.~Danisch, O.~Balalau, and M.~Sozio, ``Listing k-cliques in sparse real-world
  graphs,'' in \emph{Proc. {WWW} 2018}, pp. 589--598.

\bibitem{DBLP:conf/asunam/MiharaTO15}
S.~Mihara, S.~Tsugawa, and H.~Ohsaki, ``Influence maximization problem for
  unknown social networks,'' in \emph{Proc. {ASONAM} 2015}, pp. 1539--1546.

\bibitem{DBLP:journals/corr/cs-DS-0310049}
V.~Batagelj and M.~Zaversnik, ``An o(m) algorithm for cores decomposition of
  networks,'' \emph{CoRR}, vol. cs.DS/0310049, 2003.

\bibitem{DBLP:conf/focs/DhulipalaLRSSY22}
L.~Dhulipala, Q.~C. Liu, S.~Raskhodnikova, J.~Shi, J.~Shun, and S.~Yu,
  ``Differential privacy from locally adjustable graph algorithms: k-core
  decomposition, low out-degree ordering, and densest subgraphs,'' in
  \emph{IEEE {FOCS} 2022}, pp. 754--765.

\bibitem{cohen2008trusses}
J.~Cohen, ``Trusses: Cohesive subgraphs for social network analysis,''
  \emph{National security agency technical report}, vol.~16, no. 3.1, 2008.

\bibitem{DBLP:journals/ieicet/SaitoYK08}
K.~Saito, T.~Yamada, and K.~Kazama, ``Extracting communities from complex
  networks by the \emph{k}-dense method,'' \emph{{IEICE} Trans. Fundam.
  Electron. Commun. Comput. Sci.}, vol. 91-A, no.~11, pp. 3304--3311, 2008.

\bibitem{DBLP:journals/siamrev/Manacher89}
G.~K. Manacher, ``Algorithmic graph theory (alan gibbons),'' \emph{{SIAM}
  Rev.}, vol.~31, no.~1, pp. 145--147, 1989.

\bibitem{DBLP:journals/tkde/HuWCLF17}
J.~Hu, X.~Wu, R.~Cheng, S.~Luo, and Y.~Fang, ``On minimal steiner
  maximum-connected subgraph queries,'' \emph{{IEEE} Trans. Knowl. Data Eng.},
  vol.~29, no.~11, pp. 2455--2469, 2017.

\bibitem{DBLP:conf/aaai/ChenCPWLZY21}
J.~Chen, S.~Cai, S.~Pan, Y.~Wang, Q.~Lin, M.~Zhao, and M.~Yin, ``Nuqclq: An
  effective local search algorithm for maximum quasi-clique problem,'' in
  \emph{Proc. {AAAI} 2021}, pp. 12\,258--12\,266.

\bibitem{DBLP:conf/icde/WangZLZL22}
K.~Wang, W.~Zhang, X.~Lin, Y.~Zhang, and S.~Li, ``Discovering hierarchy of
  bipartite graphs with cohesive subgraphs,'' in \emph{{ICDE} 2022}, pp.
  2291--2305.

\bibitem{wang2023cohesive}
K.~Wang, G.~Zhao, W.~Zhang, X.~Lin, Y.~Zhang, Y.~He, and C.~Li, ``Cohesive
  subgraph discovery over uncertain bipartite graphs,'' \emph{{IEEE} Trans.
  Knowl. Data Eng.}, 2023.

\bibitem{fan2013differentially}
L.~Fan, L.~Xiong, and V.~Sunderam, ``Differentially private multi-dimensional
  time series release for traffic monitoring,'' in \emph{Data and Applications
  Security and Privacy XXVII: 27th Annual IFIP WG 11.3 Conference, DBSec 2013.
  Proceedings 27}, pp. 33--48.

\bibitem{DBLP:journals/tkde/FanX14}
L.~Fan and L.~Xiong, ``An adaptive approach to real-time aggregate monitoring
  with differential privacy,'' \emph{{IEEE} Trans. Knowl. Data Eng.}, vol.~26,
  no.~9, pp. 2094--2106, 2014.

\bibitem{DBLP:conf/sigmod/CormodeJKLSW18}
G.~Cormode, S.~Jha, T.~Kulkarni, N.~Li, D.~Srivastava, and T.~Wang, ``Privacy
  at scale: Local differential privacy in practice,'' in \emph{{SIGMOD} 2018},
  pp. 1655--1658.

\bibitem{DBLP:conf/ccs/QinYYKX017}
Z.~Qin, T.~Yu, Y.~Yang, I.~Khalil, X.~Xiao, and K.~Ren, ``Generating synthetic
  decentralized social graphs with local differential privacy,'' in \emph{CCS
  2017}, pp. 425--438.

\bibitem{DBLP:conf/uss/ImolaMC22}
J.~Imola, T.~Murakami, and K.~Chaudhuri, ``Communication-efficient triangle
  counting under local differential privacy,'' in \emph{{USENIX} Security
  2022}, pp. 537--554.

\bibitem{liu2022icde}
Y.~Liu, S.~Zhao, Y.~Liu, D.~Zhao, H.~Chen, and C.~Li, ``Collecting triangle
  counts with edge relationship local differential privacy,'' in \emph{ICDE
  2022}, pp. 2008--2020.

\bibitem{DBLP:journals/siamcomp/KasiviswanathanLNRS11}
S.~P. Kasiviswanathan, H.~K. Lee, K.~Nissim, S.~Raskhodnikova, and A.~D. Smith,
  ``What can we learn privately?'' \emph{{SIAM} J. Comput.}, vol.~40, no.~3,
  pp. 793--826, 2011.

\end{thebibliography}
}

\vspace{-13mm}
\begin{IEEEbiography}
[{\includegraphics[width=1in,height=1.25in,clip,keepaspectratio]{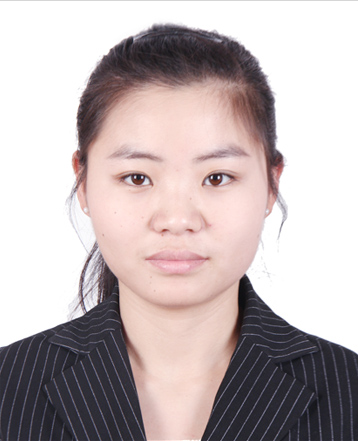}}]
{Conggai Li} is a CERC Fellow at Data61, Commonwealth Scientific and Industrial Research Organisation (CSIRO). She received her Doctoral degree from the Australian Artificial Intelligence Institute within the Faculty of Engineering and Information Technology, the University of Technology Sydney, Australia. She received her M.S. degree and B.S. degree both from Taiyuan University of Technology, China. Her research interests include graph database analytics, efficient query algorithms, and differential privacy.
\end{IEEEbiography}

\vspace{-13mm}
\begin{IEEEbiography}[{\includegraphics[width=1in,height=1.25in,clip,keepaspectratio]{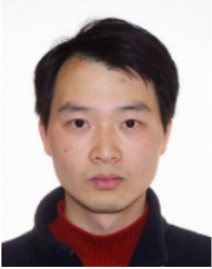}}]{Wei Ni}
[Fellow, IEEE] received the B.E. and Ph.D. degrees from Fudan University, Shanghai, China, in 2000 and 2005, respectively.
He is a Principal Research Scientist at CSIRO, Sydney, Australia, and a Conjoint Professor at the University of New South Wales. He is also an Adjunct Professor at the University of Technology Sydney and an Honorary Professor at Macquarie University. 
His research interests include machine learning, cybersecurity, network security, and privacy, as well as their applications to 6G/B6G system efficiency and integrity.
Dr. Ni has served as an Editor for IEEE Transactions on Wireless Communications since 2018, IEEE Transactions on Vehicular Technology since 2022, and IEEE Transactions on Information Forensics and Security and IEEE Communications Surveys and Tutorials since 2024. He served first as the Secretary, then the Vice-Chair and Chair of the IEEE VTS NSW Chapter from 2015 to 2022, Track Chair for VTC-Spring 2017, Track Co-chair for IEEE VTC-Spring 2016, Publication Chair for BodyNet 2015, and Student Travel Grant Chair for WPMC 2014. 
\end{IEEEbiography}

\vspace{-13mm}
\begin{IEEEbiography}
[{\includegraphics[width=1in,height=1.25in,clip,keepaspectratio]{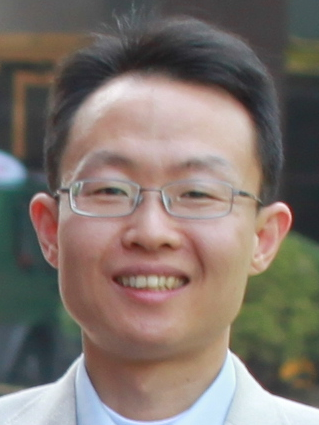}}]
{Ming Ding} [SM’17] received B.S., M.S., the Doctor of Philosophy (Ph.D.) degrees from Shanghai Jiao Tong University, China, in 2004, 2007, and 2011, respectively.
Currently, he is a Principal Research Scientist at Data61, CSIRO, in Sydney, NSW, Australia. His research interests include information technology, data privacy and security, and machine learning and AI.
He has authored more than 200 papers in IEEE journals and conferences, all in recognized venues,
around 20 3GPP standardization contributions,
as well as 
two books,
21 US patents, and has co-invented another 100+ patents on 4G/5G technologies.
He is an editor of IEEE Transactions on Wireless Communications and IEEE Communications Surveys and Tutorials.
He has served as a guest editor/co-chair/co-tutor/TPC member for multiple IEEE top-tier journals/conferences and received several awards for his research work and professional services, including the prestigious IEEE Signal Processing Society Best Paper Award in 2022.
\end{IEEEbiography}

\vspace{-13mm}
\begin{IEEEbiography}
[{\includegraphics[width=1in,height=1.25in,clip,keepaspectratio]{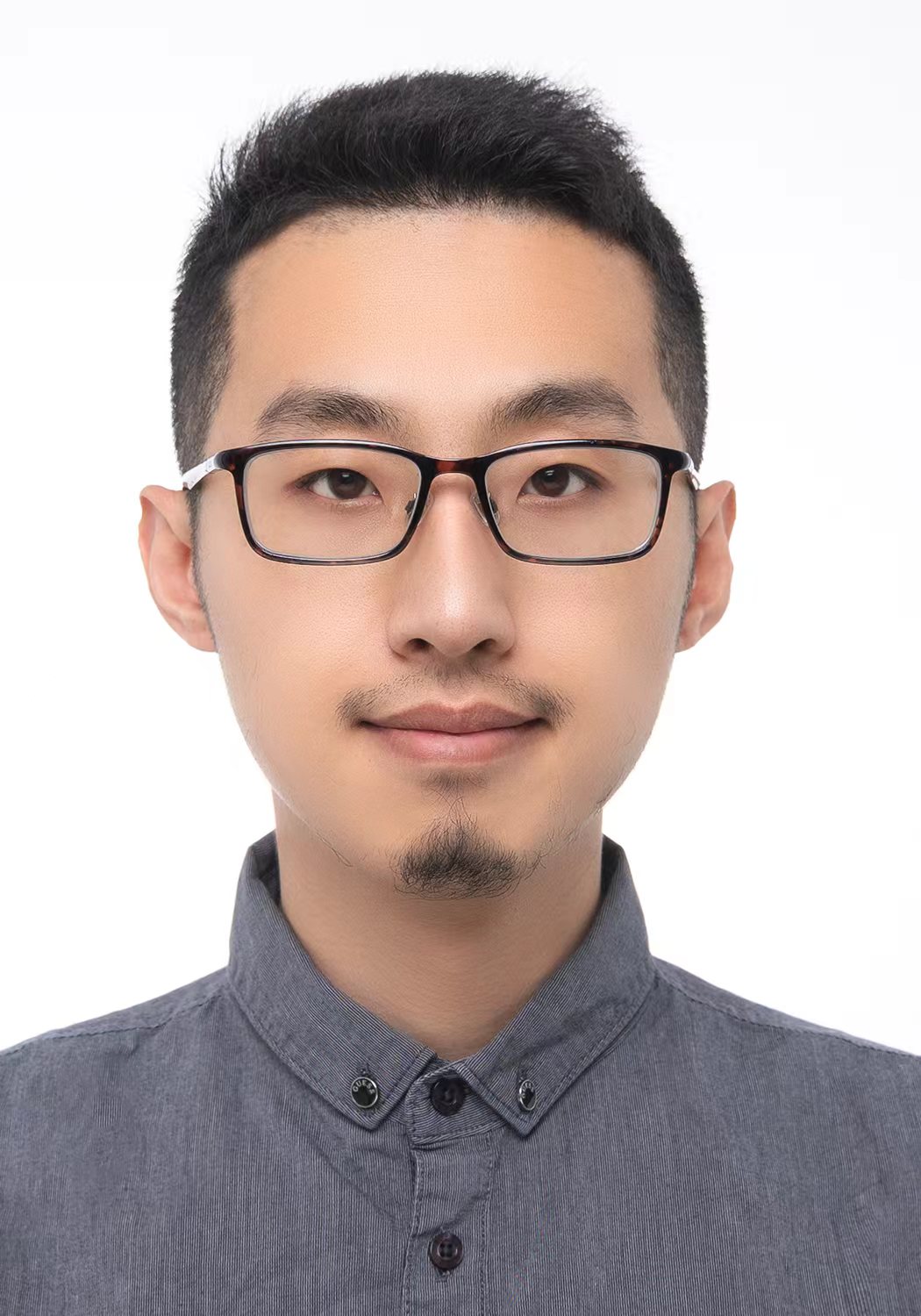}}]
{Youyang Qu} [M'19] is currently a research scientist at CSIRO, Australia.
He received his B.S. degree in 2012 and M.S. degree in 2015 from the Beijing Institute of Technology, respectively. He received his Ph.D. degree from Deakin University in 2019.
His research interests focus on Machine Learning, Big Data, IoT, blockchain, and corresponding security and customizable privacy issues.
He has over 70 publications, including high-quality journal and conference papers such as \emph{IEEE TII}, \emph{IEEE TSC}, \emph{ACM Computing Surveys}, \emph{IEEE IOTJ}, etc.
He is active in the research society and has served as an organizing committee member in SPDE 2020, BigSecuirty 2021, and Tridentcom 2021/2022.
\end{IEEEbiography}

\vspace{-13mm}
\begin{IEEEbiography}
[{\includegraphics[width=1in,height=1.25in,clip,keepaspectratio]{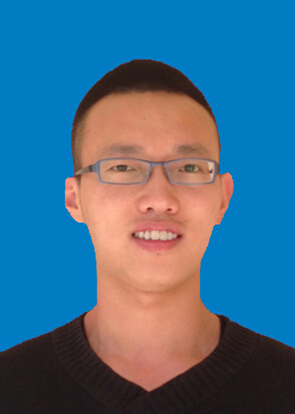}}]
{Jianjun Chen} is working toward the Ph.D. degree in the Australian Artificial Intelligence Institute within the Faculty of Engineering and Information Technology, University of Technology Sydney, Australia. He received the B.S. degree in software engineering and M.S. degree in computer science from the Taiyuan University of Technology, China. His research interests include Federated Learning, Cooperative Perception, and 3D Object Detection.
\end{IEEEbiography}

\vspace{-13mm}
\begin{IEEEbiography}
[{\includegraphics[width=1in,height=1.25in,clip,keepaspectratio]{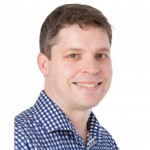}}]
{David Smith} [Member, IEEE] received the B.E. degree from the University of New South Wales, Sydney,
Australia, in 1997, and the M.E. (research) and Ph.D. degrees from the University of Technology, Sydney, Australia, in 2001 and 2004, respectively.
He is currently a Principal Research Scientist with CSIRO.
He has a variety of industry experience in electrical and telecommunications engineering. He has published over 150 technical refereed papers.
His research interests are data privacy, distributed systems privacy and edge computing data privacy, distributed machine learning, data privacy for supply chains, wireless body area networks, game theory for distributed networks, 5G/6G networks, disaster tolerant networks, and distributed optimization for smart grid.
He has made various contributions to IEEE standardization activity in personal area networks. He is an Area Editor for IET Smart Grid and has served on the technical program committees of several leading international conferences in the fields of communications and networks. He was a recipient of four conferences of best paper awards.
\end{IEEEbiography}

\vspace{-13mm}
\begin{IEEEbiography}
[{\includegraphics[width=1in,height=1.25in,clip,keepaspectratio]{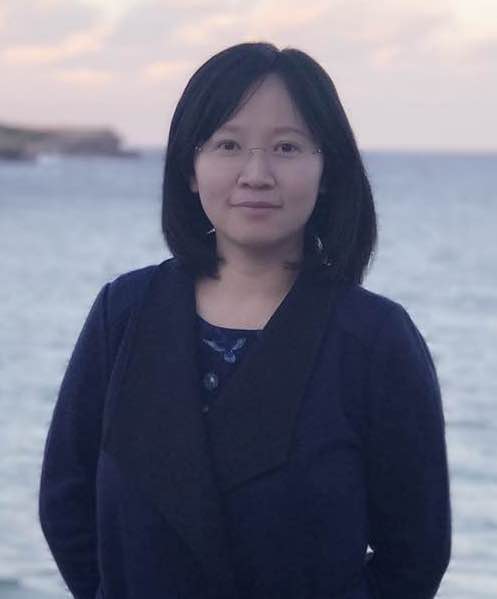}}]
{Wenjie Zhang} received the PhD degree in computer science and engineering from the University of New South Wales, in 2010. She is currently a professor and ARC Future fellow in the School of Computer Science and Engineering, the University of New South Wales, Australia.
Since 2008, she has published more than 100 papers in SIGMOD, VLDB, ICDE, TODS, the IEEE Transactions on Knowledge and Data Engineering, and the VLDB Journal.
\end{IEEEbiography}

\vspace{-13mm}
\begin{IEEEbiography}
[{\includegraphics[width=1in,height=1.25in,clip,keepaspectratio]{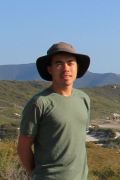}}]
{Thierry Rakotoarivelo} is the group leader of the Information Security and Privacy group at CSIRO.
His research focuses on the design and use of frameworks for privacy risk assessment, the development of privacy-enhancing technologies, and the study of trade-offs in responsible use of data in specific application domains.
He led several projects on data privacy in the Government, EdTech, and Energy sectors, where he designed and delivered novel technologies in areas such as confidentiality risk quantification or provably private synthetic data generation.
He received his PhD degree from the University of New South Wales, Australia. His thesis received the Prix Léopold Escande award from INPT in 2007. 
\end{IEEEbiography}


\end{document}